\definecolor{iris}{HTML}{5D3FD3}
\title{Dynamic Interventions for Networked Contagions\thanks{Alphabetical order of authors is followed by first placing students and then non-students.}}
\author{
    Marios Papachristou \\ Cornell University \\ \texttt{papachristoumarios@cs.cornell.edu}\footnote{Department of Computer Science.} \and
    Siddhartha Banerjee \\ Cornell University \\ \texttt{sbanerjee@cornell.edu}\footnote{School of Operations Research and Information Engineering (ORIE).} \and
    Jon Kleinberg \\ Cornell University \\ \texttt{kleinberg@cornell.edu}\footnote{Department of Computer Science and Department of Information Science.}
}
\date{}
\newtheorem{theorem}[]{Theorem}
\newtheorem{assumption}[]{Assumption}
\newtheorem{condition}[]{Condition}
\crefname{assumption}{Assumption}{Assumptions}
\theoremstyle{remark}
\newcommand{\ev}[2]{\mathbb E_{#1} \left [ #2 \right ]}
\newcommand{\one}[0]{\mathbf 1}
\newcommand{\zero}[0]{\mathbf 0}
\newcommand{\til}[0]{\widetilde}
\renewcommand{\Pr}[2]{\mathbb P_{#1} \left [ #2 \right ]}
\newcommand{\frbox}[1]{\medskip \noindent \fbox {\parbox{\textwidth} {#1}} \medskip}
\begin{document}

\maketitle

\begin{abstract}
We study the problem of designing dynamic intervention policies for minimizing networked defaults in financial networks.
Formally, we consider a dynamic version of the celebrated Eisenberg-Noe model of financial network liabilities and use this to study the design of external intervention policies.
Our controller has a fixed resource budget in each round and can use this to minimize the effect of demand/supply shocks in the network.
We formulate the optimal intervention problem as a Markov Decision Process and show how we can leverage the problem structure to efficiently compute optimal intervention policies with continuous interventions and provide approximation algorithms for discrete interventions. Going beyond financial networks, we argue that our model captures dynamic network intervention in a much broader class of dynamic demand/supply settings with networked inter-dependencies. 
To demonstrate this, we apply our intervention algorithms to various application domains, including ridesharing, online transaction platforms, and financial networks with agent mobility. In each case, we study the relationship between node centrality and intervention strength, as well as the fairness properties of the optimal interventions. 

\medskip

\noindent \emph{Keywords.} dynamic resource allocation, financial contagion, network interventions, fairness

\end{abstract}

\section{Introduction}

\noindent \textbf{Motivation.} The world consists of interconnected entities that interact with one another through networks. Networks experience shocks due to adverse scenarios, such as (partial) failures of nodes and edges. When exogenous shocks hit networks, such shocks propagate through the edges of the network, causing cascades that may affect a significant population of nodes in the network. This phenomenon is known as \emph{network contagion}, and a particularly interesting category of networks that undergo contagion are \emph{financial networks}. In financial networks, financial entities, such as individuals, businesses, and banks have \emph{liabilities} to one another as well as assets which can be attributed either \emph{internally}, i.e., within the financial network in question, or \emph{externally}, i.e., outside the financial network. It is often the case that the entities within these networks do not have adequate means to pay off their financial obligations, so they become \emph{default}, in which case they are only able to respond and pay off, or ``clear'', a fraction of their debts. A \emph{planner} acts as an external force and is responsible for (optimally) allocating resources, also known as \emph{interventions} (or bailouts) subject to budget constraints so that defaults are averted. Interventions can be fractionally allocated \cite{ahn2019optimal, demange2018contagion} or discretely allocated \cite{papachristou2021allocating}. The former setting is a tractable problem, and the latter is computationally intractable. 

When modeling and studying such networked interactions, much of the literature assumes that the interactions are \emph{one-shot} (i.e., static)(e.g.~\cite{glasserman2015likely, papachristou2021allocating} etc.).
However, in many situations, networked interactions evolve, subject to an \emph{uncertain environment}, and where the planner's interventions at some point in time affect the system's state in future times. 
So far, limited attention has been given to contagion processes that evolve dynamically: First of all, such approaches have either been considered in limited settings of a financial system together with specific strategies for the mitigation of systemic risk that do not fall under an intervention regime~\cite{chen2016financial}, or are based on different modeling assumptions and are computationally more intensive~\cite{barratt2020multi}. Secondly, in terms of the nature of the allocated resources themselves, the \emph{fractional} and \emph{discrete} allocation schemes we propose in this paper have not been studied, as well as the case where there are constraints regarding the \emph{fair} distribution of interventions. 

\medskip

\noindent \textbf{A Model for Dynamic Interventions in Networks.}Our work addresses the above points. More specifically, we formulate and study systems subject to a contagion process over time, with the system's current state and interventions affecting its future state. In our model, the agents can be either \emph{solvent} or \emph{default}. The former state means that they can clear all of their debts, and the latter stage means that they cannot clear their debts but rather must split their debts \emph{proportionally} among their creditors. Unmet liabilities then \emph{accumulate over time}. 
So designing an intervention policy in such a regime requires leveraging both spatial and temporal information. We initially study the design of an optimal \emph{fractional} intervention policy, which is a computationally tractable problem (assuming that the system responds \emph{optimally} at every round). Next, we study the allocation of discrete resources, a computationally intractable problem whose static version has been studied in \cite{papachristou2021allocating}. Here, we design approximate discrete intervention policies based on the fractional intervention policy. 

While our models are motivated by (and build on) existing models of liabilities and defaults in financial networks, we posit that these models, and the corresponding resource allocation problems, can extend far beyond this setting. 
For instance, one place where one can utilize such a clearing mechanism is \emph{ridesharing}. The nodes of the network represent \emph{neighborhoods} of a city, the external world represents suburban areas of the city, and the internal liabilities between nodes represent the number of rides requested from a neighborhood $i$ to a neighborhood $j$. The external liabilities (resp. external assets) correspond to rides requested from a neighborhood $i$ to suburban areas (resp. rides from the suburban areas to a neighborhood $i$). The interventions correspond to ridesharing vehicles that can be allocated in any given neighborhood. The assets equal the conditions of the road network with shocks, representing adverse scenarios such as traffic jams.
Similarly, in a \emph{computing cluster} allocation regime, nodes represent computation nodes, and liabilities represent the amount of computation that can be displaced to adjacent nodes. Assets represent available compute power at each node. Shocks represent failures of computing resources, and interventions represent allocations of backup resources to the existing centers (which can also be utilized, e.g., because of high demand). 

More generally, any problem that corresponds to a \emph{supply and demand network that evolves} at which, when nodes cannot meet their corresponding obligations in full, they distribute them \emph{proportionally} can be modeled with our framework. A planner seeks to allocate resources in this environment subject to a \emph{budget constraint} can be captured by our framework. Our framework can support the maximization of a variety of objectives. This paper focuses on a particular objective function; however, the solutions produced are equivalent to every strictly increasing objective. We provide approximation guarantees of the computed solution compared to the optimal solution for the discrete case, subject to realistic monotonicity assumptions. As a result, our dynamic contagion framework is suitable for resolving resource allocation problems in a variety of domains (both through the lens of fractional and discrete allocations); financial transaction networks (physical or on the Web [e.g., Venmo, cryptocurrencies]), ridesharing, high-performance computing, ad-placement, to name a few. 

\medskip

\noindent \textbf{Our Technical Contributions.}
Formally, we study the problem of optimally allocating (fractional and discrete) resources subject to \emph{(i)} contagion effects and \emph{(ii)} an uncertain environment, i.e., an environment that experiences financial shocks. We generalize the model of Eisenberg and Noe \cite{eisenberg2001systemic} to discrete time as a \emph{Markov Decision Process} (MDP) and formulate the optimal intervention problem (\cref{sec:model}). 
The resulting MDP is very high-dimensional; nevertheless, we show how we can leverage the problem structure to compute near-optimal intervention policies for continuous interventions efficiently. Moreover, under discrete interventions, we demonstrate how we can use the above continuous intervention policies to derive heuristic control policies with formal approximation guarantees  (\cref{sec:discrete_allocations}). 
In addition, our framework also supports incorporating additional fairness constraints (based on generalizations of the Gini Coefficient) for the distribution of the resources such that interventions are more equitable across nodes. Surprisingly, we also show that incorporating such objectives has little effect on the welfare objective in our setting.

We supplement our theoretical results with an extensive experimental evaluation (see \cref{sec:experiments}). In particular, we use the derived algorithm to tackle problems regarding dynamic resource allocation on real-world data from the Venmo transaction platform, semi-artificial data from mobility patterns, real-world data for ridesharing applications from New York City's Taxi and Limousine Commission, and synthetic data with core-periphery structure. We also incorporate a variety of fairness (\cref{sec:fairness}) measures on the intervention allocations, which stem from generalizations of the Gini Coefficient \cite{gini1921measurement} in our model. In \cref{sec:insights}, we study the relationship between optimal interventions and the resulting clearing payments of the system and the relationship between optimal interventions and node centrality (i.e., financial connectivity). We find that generally, where there are no (spatial) fairness constraints on the system, the most ``important'' nodes get most of the interventions. This behavior changes when we consider fairness constraints; however, we empirically observe our hypothesis that under fair intervention policies, the overall social welfare value is close to the optimal one (in the case of fractional interventions) without the fairness constraint.

\medskip

\noindent \textbf{Source Code.} The \emph{source code} and \emph{data} used for the simulations and experiments of this paper can be found at \url{https://github.com/papachristoumarios/dynamic-clearing}.

\medskip

\subsection{Related Work} \label{sec:related_work}

\noindent \textbf{Financial Networks and the Eisenberg-Noe (EN) Model.} The EN model introduced in~\cite{eisenberg2001systemic} considers a financial network where each node has assets and liabilities both concerning the internal network (i.e., the other nodes in the network) and the external sector (the node may, for instance, owe taxes to the government, or have loans to external creditors, as well as may get social security benefits from the government). According to the EN model, when a node \emph{defaults}, i.e., when it is not able to pay out its creditors (internal and external), it then re-scales its obligations \emph{proportionally} (this is sometimes referred to as absolute priority or ``pro-rata payments'') and then pays the re-scaled responsibilities in total (this principle is known as ``liability over equity''). The EN model is used to calculate these payments, usually called \emph{clearing payments}, by computing a solution to a fixed-point problem, or equivalently, the optimal solution to an optimization program with a strictly increasing objective. The general problem may have multiple equilibria; however, for simplicity, we focus on settings where the equilibrium payments are unique.  

In the presence of shocks~\cite{glasserman2015likely}, the nodes similarly become default due to external shocks that disrupt their assets. Moreover, \cite{jackson2020credit, ahn2019optimal}, investigate \emph{optimal interventions} in the cases of defaulting.  The work of~\cite{ahn2019optimal} considers optimal intervention methods under budget constraints under the extended model of~\cite{glasserman2015likely} and formulate optimization problems that minimize the systemic losses as well as minimize the number of defaulting institutions and applies their methods to publicly available data on the Korean financial system. Their work considers \emph{fractional intervention} policies which differentiate it from ours. More specifically, we consider \emph{discrete interventions} (i.e., a node gets the full intervention or does not), in which case the model of~\cite{ahn2019optimal} can be seen as the fractional relaxation of the optimization objectives we study. Secondly, we experiment both with high-granularity financial institutions (banks) as well as view the problem from a societal lens, namely modeling the entities of the system as ``societal nodes'' (businesses, households, individuals).  \cite{feinstein2019obligations} generalizes the classical EN model to a multi-layered one, develops a financial contagion 
model with fire sales that allows institutions to buy and sell assets and studies its equilibria. 

Finally, the work closest to ours is that of \cite{papachristou2021allocating}, where the authors use the static Eisenberg-Noe model for studying the allocation of stimulus checks subject to financial shocks. However, the static nature of~\cite{papachristou2021allocating} means that it is hard to use their techniques and results directly in a dynamic setting. When the temporal dimension is introduced, the model's dynamics change significantly since accumulated liabilities from round $t - 1$ affect the future states of the model; in a sense, the behavior is closer to the dynamics in queuing network models~\cite{kelly2014stochastic,banerjee2016pricing}. 
On the other hand, this also suggests that insights from the financial contagion model can be used to study interventions in more general network allocation problems, such as in ridesharing~\cite{banerjee2016pricing}.

\medskip

\noindent \textbf{Dynamic Clearing.} The work of \cite{barratt2020multi} investigates multi-period liability clearing mechanisms by formulating convex optimal control problems on a different model from the Eisenberg-Noe we study in this paper. In their model, an initial liability matrix $L_0$ is about to be (perhaps partially) cleared at a finite horizon $T$. The entities have some cash held at each time $t$, and they try to clear as many liabilities as possible. The work of \cite{barratt2020multi} waives the obligation of each node to repay its debts in a \emph{pro-rata} fashion. We believe that in a resource allocation setting, equitably splitting the debts is fairer on first principles. Moreover, their work does not extend to discrete interventions and fairness constraints, as we consider in our work. 

In a similar flavor, the work of \cite{feinstein2019dynamic} considers a continuous-time version of the Eisenberg-Noe model and studies the default risk. The paper investigates defaults that can result from insolvency or illiquidity and identifies the default times of nodes. The authors, however, do not study network interventions.

Another work closely tied to ours is the result of \cite{capponi2015systemic}, where the authors study a multi-period clearing framework, where the level of systemic risk is mitigated through the provision of assistance. The network is stochastic, and defaulted entities are sold through a first-price sealed-bid auction. The authors also empirically find that in core-periphery network architectures targeting the systemically important entities is more effective, and in random topologies targeting entities that maximize the total liquidity is more effective. While being close to our work,~\cite{capponi2015systemic} differs from us in that while they mainly consider heuristic policies, we derive optimal policies for continuous interventions and approximately-optimal policies for discrete interventions.

Finally, we mention the contemporary work of \cite{calafiore2022control}, closely related to ours. The authors study a dynamic extension of the EN model and design optimal intervention policies, assuming that the relative liability matrix remains \emph{constant} during the dynamic process. We note here that our framework is more general than \cite{calafiore2022control} since we consider the more general case where the relative liability matrix (see \cref{sec:model}) varies with time which makes the resulting dynamics non-convex. In \cref{sec:general_dynamics}, we prove that the necessary and sufficient condition for the dynamics to be convex is that the relative liability matrix does not vary with time.

\section{Setting}
\label{sec:model}

\begin{figure}
    \centering
    \includegraphics[width=0.7\textwidth]{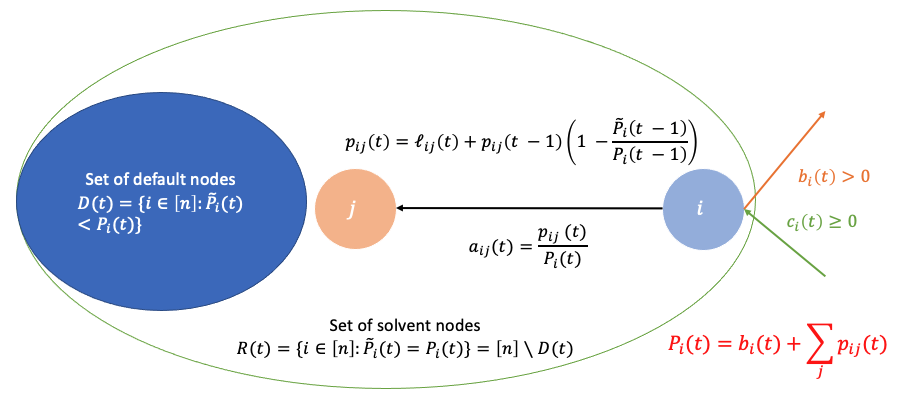}
    \caption{Notation and Problem Setting.}
    \label{fig:setting}
\end{figure}

\noindent \textbf{Notation.} 
We use $[n]$ to denote the set $\{ 1, \dots, n \}$. For any pair of sets $A,B$ we use $A \sqcup B$ to denote the disjoint union of $A$ and $B$ ($A \cap B = \emptyset$).  

For vectors, we use $\| x \|_p$ for the $p$-norm of $x$; for the Euclidean norm (i.e., $p = 2$), we omit the subscript. $\zero$ (resp. $\one$) denotes the all zeros (resp. all ones) column vector, and $\one_S$ represents the indicator column vector of the set $S$. 
We use $x \wedge y$ (resp. $x \vee y$) as shorthand for the coordinate-wise minimum (resp. maximum) of vectors $x$ and $y$, and $(x)^+ = \zero \vee x, (x)^- = - (x \wedge \zero)$ denotes the non-negative and non-positive parts of $x$. 
For a given vector $x$, we use the array notation $x(i : j)$ to denote a sub-vector of $x$ from $x_i$ to $x_j$ (inclusive range).
Finally order relations $\ge, \le, >, <$ denote coordinate-wise ordering. 

For matrices, we use $\| A \|_p$ to denote the induced $p$-norm of $A$. 
$A \odot B$ is the Hadamard (element-wise) product of $A, B$. $A^{(k)}$ is defined as $A^{(0)} = I$ and $A^{(k + 1)} = A^{(k)} \odot A$ for $k \ge 1$. 
Finally, we use ``round'' and ``time'' interchangeably to denote the temporal dimension of the problem, and use $\til O ( \cdot )$ to denote scaling up to poly-logarithmic factors.

\subsection{Model for Networked Interactions Over Time}
\label{subsec:controlmodel}

The model we study (see \cref{fig:setting}) is a natural dynamic extension of the Eisenberg-Noe model~\cite{eisenberg2001systemic}.
The system consists of $n$ entities $[n]$, connected via a dynamic network, where each \emph{directed} edge $(i,j)$ denotes that entity $i$ owes a liability to entity $j$. 

At the start of each (discrete) round $t$, new \emph{internal} liabilities $\ell_{ij}(t) \ge 0$ get generated in the system. Moreover, let $\til P(t - 1)$ denote the \emph{clearing vector} from round $t-1$, i.e., each agent clears $\til P_i(t-1)$ of its liabilities in round $t-1$ (with the rest of the liabilities getting forwarded in time). In accordance with the {proportional clearing rule} of the EN model, we have that the same fraction  $( 1 - \frac {\til P_i(t - 1)} {P_i(t - 1)})$ of each of agent $i$'s liabilities are cleared. 
Thus, the total liability between $i, j \in [n]$ \emph{at the start of round $t$} (i.e., before clearing) is given by
\begin{equation*}
    p_{ij}(t) = \underbrace {\ell_{ij}(t)}_{\text{liability generated in round $t$}} + \underbrace {p_{ij}(t - 1) \cdot \left ( 1 - \frac {\til P_i(t - 1)} {P_i(t - 1)} \right )}_{\text{liabilities carried over from round $t - 1$}},
\end{equation*}
In addition, agents also experience external liabilities $b_i(t) > 0$, generated at the start of each round $t$. 
Let $\ell_i(t) = \sum_{j \in [n]} \ell_{ij}(t)$; then the total liabilities owed by $i$ at the start of round $t$ are
\begin{equation*}
    P_i(t) = b_i(t) + \sum_{j \in [n]} p_{ij} (t) = \underbrace{b_i(t) + \ell_i(t)}_{\text{instantaneous liabilities}} + \underbrace{(P_i(t - 1) - \til P_i(t - 1))}_{\text{liabilities carried over from time $t - 1$}}.
\end{equation*}
This induces the liability network in round $t$, with the (weighted, directed) relative liability matrix $A(t)$ given by
\begin{equation*}
a_{ij}(t) = \begin{cases} \frac {p_{ij}(t)} {P_i(t)} & P_i(t) > 0 \\ 0 & \text{otherwise} \end{cases}.
\end{equation*}
{Now let $\beta_i(t) = \sum_{j \in [n]} a_{ij}(t)$ denote the \emph{fractional internal liability} for any node $i$, and $\beta(t)$ the vector of these internal liabilities.
Throughout the rest of the paper we assume the system has \emph{non-vanishing external liabilities}, i.e., that the following holds for $A(t)$:}
\begin{assumption}[Non-vanishing external liabilities] \label{assumption:uniqueness}
    $\| \beta(t) \|_\infty = \| A^T(t) \|_1 < 1$ for all $t \in [T]$.
\end{assumption}

In order for the overall system to clear their liabilities (in particular, given there are always external liabilities), the EN model assumes that each agent $i$ has some additional ``assets'' $c_i$ which can contribute to clearing their liabilities.
In the same vein, in a dynamic interaction setting, we assume each agent has external assets (or ``revenue streams''), which in each round $t$ provide an instantaneous revenue $c(t) = \{c_i(t)\}_{i \in [n]} \ge \zero$. 
Now, as in the EN model, the clearing vectors $\til P(t)$ must satisfy the following \emph{zero-input dynamics} constraints 
\begin{subequations} 
\label{eq:zero_input_dynamics}     
\begin{align} 
\til P(t) & \le P(t) = b(t) + \ell(t) + P(t - 1) - \til P(t - 1) 
\label{eq:zero_input_dynamics_solvency}     
\\
\til P(t)  &\le A^T(t) \til P(t) + c(t) \label{eq:zero_input_dynamics_default}     
\\ 
\til P(t) & \ge 0.\nonumber
\end{align}
\end{subequations}
We refer to first constraint (\cref{eq:zero_input_dynamics_solvency}) as the \emph{solvency constraint}, since if for some node $i$ it holds with equality, it means that this node is able to repay its debts in full. 
We refer to the second constraint (\cref{eq:zero_input_dynamics_default}) as the \emph{default constraint}, since when it holds with equality for some node $i \in [n]$, this means that $i$ partially repays its debts proportionally to its creditors. Finally, clearing vectors are always non-negative. Note by definition the bounds in~\cref{eq:zero_input_dynamics_solvency,eq:zero_input_dynamics_default} are non-negative; we can thus compress these constraints as $\til P(t) \in [0,P(t)\wedge (A^T(t) \til P(t) + c(t))]$.

\subsection{Dynamic Control of Networked Interactions Process}
\label{subsec:networkmodel}

Till now, we have assumed that all inputs (internal/external liabilities, and external asset values) are exogenous. We now augment this model with an additional \emph{centralized controller}, who is provided with some resource budget in each round, and can allocate this budget to try and minimize defaults.

\noindent \emph{Actions.} Consider a controller (or planner) who has access to a bounded amount of resource $B \ge 0$ at each round, and seeks to inject this into the network (i.e. allocate a fractional quantity $Z_i(t) \ge \zero$ to each agent $i$ subject to the constraint $\one^T Z(t) \le B$). 
The per-round allocation $Z(t)$ may be subject to additional bounds  $Z(t) \le L$ for some given maximum allowed allocation vector $L$; ignoring this is equivalent to letting $L \ge \one B$. 
Therefore, the action space $\mathcal Z$ is given by $\mathcal Z = \{ z \in \mathbb R^n : \| z \|_1 \le B, \; \zero \le z \le L \}$. We always assume the policy function $Z(t)$ is Markovian. 

\noindent \emph{State transitions.}
\cref{eq:zero_input_dynamics} can be modified to incorporate allocations $Z(t)$ to get
\begin{align}
\label{eq:dynamics}  
    \til P(t) & \in [0,P(t)\wedge(A^T(t) \til P(t) + c(t) + Z(t))] \\
    \one^T Z(t) & \le B \nonumber\\
    Z(t) &\in [\zero,L] \nonumber
\end{align}
{Note that in the above equations, $A(t)$ is implicitly a function of $P(t)$, and consequently, this makes the constraints non-linear.
In fact, in~\cref{sec:general_dynamics}, we show that the necessary and sufficient condition for the set of allocations satisfying~\cref{eq:dynamics} to be \emph{convex} (in $\til P(1:T), Z(1:T)$) is that the relative liability matrix $A(t)$ is constant over time (which makes the dynamics linear). }

Next, given the current state $P(t)$, exogenous input $c(t)$, and action $Z(t)$, a natural `maximal' choice of the clearing vector $\til P(t)$ is for it to be the \emph{fixed point} of the system 
\begin{equation}
    S(t) = \begin{pmatrix} P(t) \\ \til P(t) \end{pmatrix} \mapsto \begin{pmatrix} P(t) \\  P(t) \wedge \left (A^T(t) \til P(t) + c(t) + Z(t) \right ) \end{pmatrix} = \Phi \left (S(t), Z(t); S(t - 1), U(t) \right ).
\end{equation}
When the round is clear from the context we will use the abbreviation $\Phi_t(s, z)$ to denote the mapping with information up to time $t$ acting on the state action pair $(s, z)$, i.e. for all $z$ we have that $s(z) = \Phi_t(s(z), z)$. Under \cref{assumption:uniqueness}, we can use the Banach fixed-point theorem to assert that this has a unique solution (since $\Phi_t$ is a contraction with respect to $s$ for a given $z$; see \cite{glasserman2015likely}). 
We henceforth make the following assumption on the agents' response

\begin{assumption}[Maximal clearing] \label{assumption:optimal_response}
In each round $t$, the agents maximally clear their liabilities, i.e., with $\til P(t)$ as the unique fixed point of $\til P(t) = P(t) \wedge \left (A^T(t) \til P(t) + c(t) + Z(t) \right )$.
\end{assumption}

The above condition, taken from the EN model, is standard in the finance literature -- it imposes a natural requirement that agents try and clear liabilities as soon as possible, subject to the proportional clearing rule. If one wants to maximize flows (or minimize defaults), one may be tempted to think that this is without loss of generality. However, this is not the case in dynamic external interventions; one can create examples where dropping this assumption leads to higher overall rewards. These settings, however, are somewhat extreme, and it may be possible to eliminate them via other assumptions.
In \cref{sec:general_dynamics} we exhibit one such natural regime (which also relates to the recent result of \cite{calafiore2022control}); finding more general conditions remains a challenging direction for future work.

\subsection{A Stochastic Model for Exogenous Shocks}
\label{subsec:randomshocks}

Till now, we have been agnostic in our model description as to the exact nature of the exogenous shocks to the system, i.e., the per-round internal and external liabilities, and external asset payouts.
We assume that the environment is \emph{stochastic}, i.e. the instantaneous assets and (internal and external) liabilities induce uncertainty in the system in forms of a \emph{disturbance}. We denote the \emph{financial environment} at round $t$ as $U(t) = \left ( b(t), c(t), \{ \ell_{ij}(t) \}_{i, j \in [n]} \right )$. 
We assume that the financial environment is a \emph{Markov Chain}: 

\begin{assumption}[Markovian Exogenous Shocks] \label{assumption:financial_environment_mc}
    The financial environment $U(t)$ is a Markov Chain, i.e. $\Pr {} {U(t) = u(t) | U(t - 1) = u(t - 1), \dots, U(1) = u(1)} = \Pr {} {U(t) = u(t) | U(t - 1) = u(t - 1)}$.
\end{assumption}

The state space of $U(t)$ is denoted by $\mathcal U$. It is easy to observe that under \cref{assumption:financial_environment_mc} the sequence $S(t) = (P(t), \til P(t))$ is a \emph{Markov Chain}. More specifically, at time $t$ the only information needed to determine $p_{ij}(t)$ is the instantaneous liabilities (which is an MC), the action $Z(t - 1)$ and the remaining liabilities from times $t - 1$, therefore extra information from round $0$ up to $t - 2$ is redundant. Since the external liabilities are also an MC and the sum of $p_{ij}(t)$ only depends on the state of the system at $t - 1$ then $S(t)$ is a MC based wrt to $S (t - 1)$ and $Z(t - 1)$. Also \cref{eq:dynamics} depends only on the state of the system at time $t - 1$ and the calculated maximum liabilities at the start of round $t$ therefore the optimal clearing vector that occurs on the element-wise minimum of the RHS of the inequalities of \cref{eq:dynamics} is dependent on the previous state $S(t - 1)$ and the action $Z(t -  1)$. That defines a transition kernel $\mathcal T$,

\begin{equation}
    \mathcal T ((s, z) \to s') = \mathcal T(s' | s, z) = \Pr {} {S (t) = s' | S (t - 1) = s, Z(t - 1) = z}.
\end{equation}

We also denote the projection on $(s, z)$ of the kernel (which is a distribution itself) as $\mathcal T(\cdot | s, z) = \mathcal T_{s, z}(\cdot)$. The MC is also associated with an initial distribution over the state space $S(0) \sim \pi_0$. The state space of the MC is denoted by $\mathcal S$.

\section{Optimal Network Intervention with Continuous Controls} \label{sec:formulation}

Given the above setting of networked interactions over time with stochastic shocks, We can now formulate the problem of optimal dynamic interventions for maximizing various objectives as a Markov Decision Process (MDP). In this section, we formalize this and show that when interventions are allowed to be continuous, the MDP can be solved optimally for many objectives. Continuous actions are, however often infeasible in practice, and so in the next section, we turn to the question of designing approximately-optimal controllers given discrete actions.

\noindent \emph{Rewards \& Objective.} The stochastic reward incurred by a state-action pair at time $t$ is $R(t) = R(S(t), Z(t), U(t)) = \one^T \til P(t)$. Note that here we can use \emph{any function of $\til P(t)$ that is coordinate-wise strictly increasing} due to \cite[Lemma 4]{eisenberg2001systemic} and get the \emph{same solution}. Other possible reward functions can be, for example, $R(v(t), S(t), Z(t), U(t)) = v^T(t) \til P(t)$ for some function $v(t) > 0$, or $R(S(t), Z(t), U(t)) = \sum_{i \in [n]} \log (\til P_i(t) + \epsilon)$ for some $\epsilon > 0$. 

For simplicity, we have chosen $R(t) = \one^T \til P(t)$ since it corresponds to a measure of how much money circulates in and out of the network. The overall objective that is to be maximized is the sum of rewards over a finite horizon $[T]$, 

\begin{equation} 
\begin{split}
    \max_{\Pi} \quad & \ev {S(0) \sim \pi_0} {\sum_{t = 0}^{T - 1} R(S(t), Z(t) = \Pi(t, S(t)), U(t))} \\
    \text{s.t.} \quad & \eqref{eq:dynamics} \quad \forall t \in [T]
\end{split}
\end{equation}

where $\Pi: [T] \times \mathcal S \to \mathcal Z$ is a policy function. We also assume that there are no accumulated debts and interventions from time $t \le 0$. We let $r(s, z) = \ev {U(t)} {R(S(t) = s, Z(t) = z, U(t))}$. 

\medskip

\noindent \emph{Value Function.} We define the value function $V^{\Pi}(t, s)$ as the optimal reward we can collect from time $t$ onward, starting from state $s$ and applying policy $\Pi$. The value function obeys the HJB equations with respect to actions chosen from the action set, i.e. 

\begin{align*}
    V(t, s) & = \max_{z \in \mathcal Z} \left \{ r(s, z) + \ev {s' \sim \mathcal T_{s, z}} {V(t + 1, s')} \right \} \\
    \Pi^*(t, s) & = \mathrm {argmax}_{z \in \mathcal Z} \left \{ r(s, z) + \ev {s' \sim \mathcal T_{s, z}} {V(t + 1, s')} \right \}
\end{align*}

\cref{fig:ex_fractional_allocations_zero_input} and \cref{fig:ex_fractional_allocations} present an example with a horizon $T = 2$ in a deterministic environment, whereas in \cref{fig:ex_fractional_allocations_zero_input} we consider the \emph{zero-input} dynamics (i.e. no interventions), and in \cref{fig:ex_fractional_allocations} we consider the case when budget is available and is able to avert defaults. 

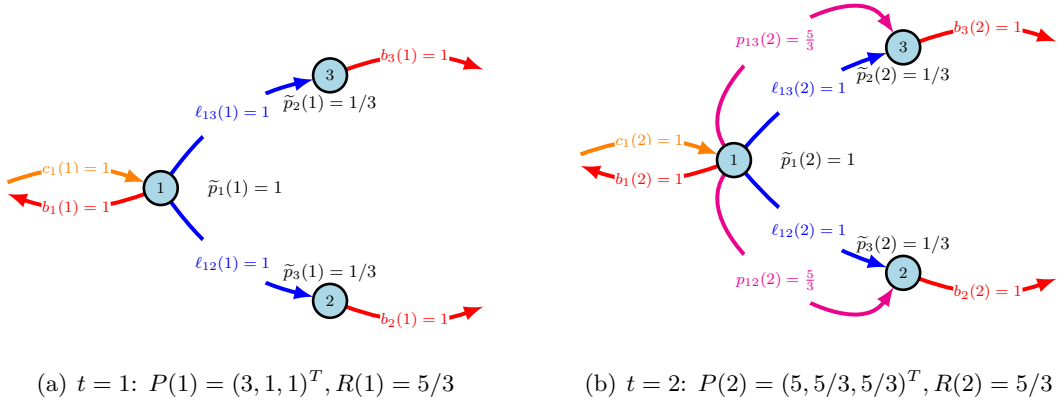
\begin{figure}[t]
    \centering
    \subfigure[$t = 1$: $P(1) = (3, 1, 1)^T, R(1) = 5/3$]{
        \begin{tikzpicture}[transform shape,scale=0.75]
        \Vertex[x=0,y=0,label=$1$]{1}
        \Vertex[x=3,y=-2,label=$2$]{2}
        \Vertex[x=3,y=2,label=$3$]{3}
        \Vertex[x=-3,y=0,Pseudo]{s1}
        \Vertex[x=6,y=-2,Pseudo]{s2}
        \Vertex[x=6,y=2,Pseudo]{s3}
        
        \Text[x=1.5,y=0]{{\footnotesize $\til p_1(1) = 1$}}
        \Text[x=3,y=1.5]{{\footnotesize $\til p_2(1) = 1/3$}}
        \Text[x=3,y=-1.5]{{\footnotesize $\til p_3(1) = 1/3$}}

        \Edge[Direct, bend=-20, label={$\ell_{12}(1) = 1$}, color=blue](1)(2)
        \Edge[Direct, bend=20, label={$\ell_{13}(1) = 1$}, color=blue](1)(3)

        \Edge[Direct, bend=20, label={$c_1(1) = 1$}, color=orange](s1)(1)
        \Edge[Direct, bend=20, label={$b_1(1) = 1$}, color=red](1)(s1)

        \Edge[Direct, bend=-20, label={$b_2(1) = 1$}, color=red](2)(s2)
        \Edge[Direct, bend=20, label={$b_3(1) = 1$}, color=red](3)(s3)
     
    \end{tikzpicture}
    }
    \subfigure[$t = 2$: $P(2) = (5, 5/3, 5/3)^T, R(2) = 5/3$]{
        \begin{tikzpicture}[transform shape,scale=0.75]
        \Vertex[x=0,y=0,label=$1$]{1}
        \Vertex[x=3,y=-2,label=$2$]{2}
        \Vertex[x=3,y=2,label=$3$]{3}
        \Vertex[x=-3,y=0,Pseudo]{s1}
        \Vertex[x=6,y=-2,Pseudo]{s2}
        \Vertex[x=6,y=2,Pseudo]{s3}
        
        \Text[x=1.5,y=0]{{\footnotesize $\til p_1(2) = 1$}}
        \Text[x=3,y=1.5]{{\footnotesize $\til p_2(2) = 1/3$}}
        \Text[x=3,y=-1.5]{{\footnotesize $\til p_3(2) = 1/3$}}

        \Edge[Direct, bend=-15, label={$\ell_{12}(2) = 1$}, color=blue](1)(2)
        \Edge[Direct, bend=15, label={$\ell_{13}(2) = 1$}, color=blue](1)(3)
        
        \Edge[Direct, bend=-90, label={$p_{12}(2) = \tfrac 5 3$}, color=magenta](1)(2)
        \Edge[Direct, bend=90, label={$p_{13}(2) = \tfrac 5 3$}, color=magenta](1)(3)

        \Edge[Direct, bend=20, label={$c_1(2) = 1$}, color=orange](s1)(1)
        \Edge[Direct, bend=20, label={$b_1(2) = 1$}, color=red](1)(s1)

        \Edge[Direct, bend=-20, label={$b_2(2) = 1$}, color=red](2)(s2)
        \Edge[Direct, bend=20, label={$b_3(2) = 1$}, color=red](3)(s3)
     
    \end{tikzpicture}
    }
    \caption{\small \em A contagion network over a horizon $T = 2$, with instantaneous internal liabilities $\ell$, external liabilities $b$, and external assets $c$ that are identical over the two rounds. The total budget is $B = 0$ (zero-input case). At time $t = 1$, the external assets of node 1 are not adequate to cover its liabilities so the node defaults and pays $1/3$ to each of its creditors (i.e. nodes 2, 3 and external). Also, nodes 2, and 3 pay their creditors $1/3$. At time $t = 2$, an excess of liabilities have accumulated due to the inability to pay creditors in full at $t = 1$. The total debt of node 1 is $5$ and nodes 2 and 3 have a total debt of $5/3$. The value function equals $R(1) + R(2) = 10/3$.}
    \label{fig:ex_fractional_allocations_zero_input}
\end{figure}

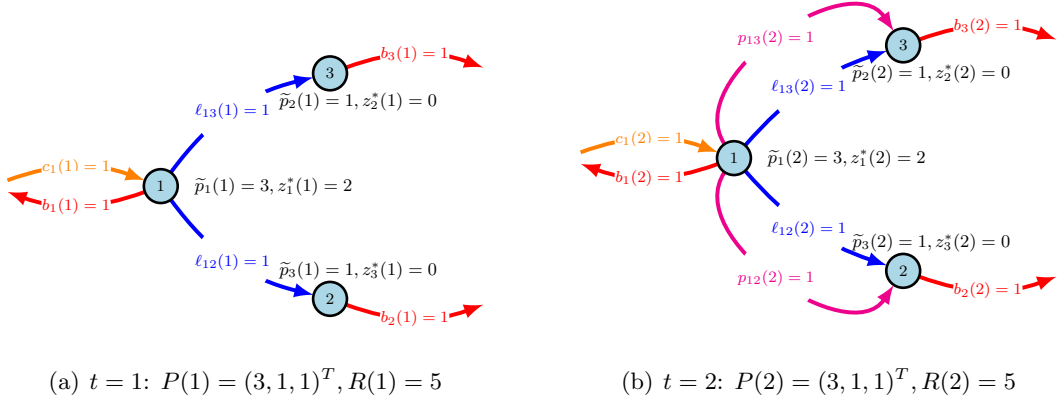
\begin{figure}[t]
    \centering
    \subfigure[$t = 1$: $P(1) = (3, 1, 1)^T, R(1) = 5$]{
        \begin{tikzpicture}[transform shape,scale=0.75]
        \Vertex[x=0,y=0,label=$1$]{1}
        \Vertex[x=3,y=-2,label=$2$]{2}
        \Vertex[x=3,y=2,label=$3$]{3}
        \Vertex[x=-3,y=0,Pseudo]{s1}
        \Vertex[x=6,y=-2,Pseudo]{s2}
        \Vertex[x=6,y=2,Pseudo]{s3}
        
         \Text[x=2,y=0]{{\footnotesize $\til p_1(1) = 3, z_1^*(1) = 2$}}
        \Text[x=3.5,y=1.5]{{\footnotesize $\til p_2(1) = 1, z_2^*(1) = 0$}}
        \Text[x=3.5,y=-1.5]{{\footnotesize $\til p_3(1) = 1, z_3^*(1) = 0$}}

        \Edge[Direct, bend=-20, label={$\ell_{12}(1) = 1$}, color=blue](1)(2)
        \Edge[Direct, bend=20, label={$\ell_{13}(1) = 1$}, color=blue](1)(3)

        \Edge[Direct, bend=20, label={$c_1(1) = 1$}, color=orange](s1)(1)
        \Edge[Direct, bend=20, label={$b_1(1) = 1$}, color=red](1)(s1)

        \Edge[Direct, bend=-20, label={$b_2(1) = 1$}, color=red](2)(s2)
        \Edge[Direct, bend=20, label={$b_3(1) = 1$}, color=red](3)(s3)
     
    \end{tikzpicture}
    }
    \subfigure[$t = 2$: $P(2) = (3, 1, 1)^T, R(2) = 5$]{
        \begin{tikzpicture}[transform shape,scale=0.75]
        \Vertex[x=0,y=0,label=$1$]{1}
        \Vertex[x=3,y=-2,label=$2$]{2}
        \Vertex[x=3,y=2,label=$3$]{3}
        \Vertex[x=-3,y=0,Pseudo]{s1}
        \Vertex[x=6,y=-2,Pseudo]{s2}
        \Vertex[x=6,y=2,Pseudo]{s3}
        
        \Text[x=2,y=0]{{\footnotesize $\til p_1(2) = 3, z_1^*(2) = 2$}}
        \Text[x=3.5,y=1.5]{{\footnotesize $\til p_2(2) = 1, z_2^*(2) = 0$}}
        \Text[x=3.5,y=-1.5]{{\footnotesize $\til p_3(2) = 1, z_3^*(2) = 0$}}

        \Edge[Direct, bend=-15, label={$\ell_{12}(2) = 1$}, color=blue](1)(2)
        \Edge[Direct, bend=15, label={$\ell_{13}(2) = 1$}, color=blue](1)(3)
        
        \Edge[Direct, bend=-90, label={$p_{12}(2) = 1$}, color=magenta](1)(2)
        \Edge[Direct, bend=90, label={$p_{13}(2) = 1$}, color=magenta](1)(3)

        \Edge[Direct, bend=20, label={$c_1(2) = 1$}, color=orange](s1)(1)
        \Edge[Direct, bend=20, label={$b_1(2) = 1$}, color=red](1)(s1)

        \Edge[Direct, bend=-20, label={$b_2(2) = 1$}, color=red](2)(s2)
        \Edge[Direct, bend=20, label={$b_3(2) = 1$}, color=red](3)(s3)
     
    \end{tikzpicture}
    }
    \caption{\small \em A contagion network over a horizon $T = 2$, with instantaneous internal liabilities $\ell$, external liabilities $b$, and external assets $c$ that are as in \cref{fig:ex_fractional_allocations_zero_input}. The total budget is $B = 2$ at each round. The optimal allocation for $t = 1$ is $z^*(1) = (2, 0, 0)^T$ in which case all nodes are able to cover their debts, and no liabilities are carried over from $t = 1$ to $t = 2$. Similarly, in $t = 2$ the optimal intervention vector is $z^*(2) = (2, 0, 0)^T$ and all liabilities are cleared. The value function equals $R(1) + R(2) = 10$.}
    \label{fig:ex_fractional_allocations}
\end{figure}

\medskip

\noindent\textbf{Efficiently Computing the Optimal Value Function}
{The above MDP has a very high-dimensional state and action space ($\mathbb{R}^{2nT}$ and $\mathbb{R}^{nT}$ respectively), so a priori it is unclear if it can be solved efficiently. Surprisingly, we show below that we can exploit the structure of the problem -- in particular, the fact that the random shocks are exogenous (\cref{assumption:financial_environment_mc}), and the maximal clearing assumption (\cref{assumption:optimal_response}) -- to give a \emph{closed-form expression} for the value function as an expectation over the exogenous shock vector $U(1:T)$; moreover, this also allows us to compute it efficiently (and thus find near-optimal policies) via Monte Carlo estimation.
} 

{We now proceed to show how to calculate the value function $V(t, s)$ and the optimal policy $Z^*(t)$. First, it is easy to check that \emph{given} a realization of the random shocks $u(t:T)$, the optimal reward (and policy) can be written as a sequence of \emph{nested linear programs}. 
More surprisingly, we prove that we can exchange the maximum and expectation operators in the value function, due to the structure of our model. 
Consequently, when the shocks are generated randomly, we get that the value function (\cref{theorem:value_function}) can be approximated by sampling $N$ sample paths and then, for each sample path $u(t:T)$ solving a sequence of $T - t + 1$ linear programs.}

{Our algorithm (\cref{alg:fractional_interventions}) is comprised of two routines: The first routine (\textbf{Compute-Value-Function-Given-Sample-Path}) takes as an input a realization $u(t:T)$ of exogenous shocks, a starting state $s(t - 1) = s$, and budget constraints $L$ and $B$ and solves a sequence of $T - t + 1$ nested linear programs, where the optimal solution at round $t$ is fed to calculate the optimal solution at round $t + 1$. The second routine (\textbf{Aggregate}) takes as input a natural number $N$, the budget constraints $L$ and $B$, a financial environment $\mathcal U$, and the starting state $s$. The algorithm then samples $N$ exogenous shock realizations from $\mathcal U$. Conditioned on any of the sample paths $u_i(t:T)$ with $i \in [N]$, it calls the first routine to compute the sample value function $\bar V_{u_i(t:T)}$. Finally it aggregates all solutions and outputs an estimate $\bar V(t, s)$.} 

\frbox {
\small
\captionof{algorithm}{\small \em Dynamic Clearing With Fractional Interventions}
\label{alg:fractional_interventions}
\noindent \textbf{Compute-Value-Function-Given-Sample-Path}($L$, $B$, $u(t:T)$, $s$)

\begin{compactenum}
    \item Given the initial state at round $t - 1$ calculate $A(t)$ and $P(t)$
    \item For each $t' \in [t,T]$ 
    \begin{compactenum}
        \item Let $u(t') = \left (b(t'), c(t'), \mathsf{vec}(\{ \ell_{ij}(t') \}_{i, j \in [n]}) \right )$ be the financial environment.
        \item Let $\til P^*(t'), Z^*(t')$ be the optimal solution to $\max_{\til P(t'), Z(t')} \one^T \til P(t')$ subject to the dynamics of \eqref{eq:dynamics} and the random shocks $b(t'), c(t'), \{ \ell_{ij}(t') \}_{i, j \in [n]}$. 
        \item If $t' < T$, use $\til P^*(t')$ to calculate $A(t'+1)$ and $P(t'+1)$. 
    \end{compactenum}
    \item Return $V_{u(t:T)} = \sum_{t' \in [t,T]} \one^T \til P^*(t')$.
\end{compactenum}

\medskip

\noindent \textbf{Aggregate}($N$, $L$, $B$, $\mathcal U$, $s$)

\begin{compactenum}
    \item Sample $N$ i.i.d. sample paths $\{ u_i(t:T) \}_{i \in [N]} \sim \mathcal U$ where a sample path consists a realization of the environment on $T - t + 1$ periods. 
    \item For every $i \in [N]$ compute $$V_{u_i(t:T)} = \text{\textbf{Compute-Value-Function-Given-Sample-Path}}(L, B, u_i(t:T), s)$$
    \item Return $\bar V(t, s) = \frac 1 N \sum_{i = 1}^N V_{u_i(t:T)}$.
\end{compactenum}
}

\begin{theorem}%
\label{theorem:value_function} 
Under~\cref{assumption:uniqueness,assumption:optimal_response,assumption:financial_environment_mc}, the following are true
\begin{compactenum}
\item The value function $V(t, s)$ satisfies
\begin{equation*} \label{eq:value_function}
    V(t, s) = \ev {U(t:T)} {\max_{z_t, \til p_t} \left \{ \one^T \til p_t + \max_{z_{t +1}, \til p_{t + 1}} \left \{ \one^T \til p_{t + 1} + \max_{z_{t + 2}, \til p_{t + 2}} \left \{ \one^T \til p_{t + 2} + \dots \right \} \right \} \right \}}
\end{equation*}
and corresponds to solving a sequence of linear programs.

\item For $N = \tfrac {\log (2 / \delta) (T - t + 1)^2 \Delta^2 } {2 \varepsilon^2}$ samples, $\varepsilon > 0$, and  $\Delta = \sup_{\mathcal U} \left ( \| b \|_1 + \| \ell \|_1 \right )$, \cref{alg:fractional_interventions} returns a solution $\bar V(t, s)$ such that $|\bar V(t, s) - V(t, s)| \le \varepsilon$ with probability at least $1 - \delta$. 
    \end{compactenum}

\end{theorem}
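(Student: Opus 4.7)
The plan is to handle the two parts of the theorem separately. For part (1), I would show that (a) for each fixed realization $u(t:T)$ of exogenous shocks, the algorithm's sequential LP solution equals the pathwise offline maximum $V^*_u = \max \sum_{t'=t}^T \one^T \til p(t')$ over feasible trajectories satisfying \cref{eq:dynamics}, and (b) $V(t,s) = \ev{U(t:T)}{V^*_u}$, which unrolled is precisely the nested-max-inside-expectation formula. For part (2), I would apply Hoeffding's inequality to the i.i.d.\ sample-path values $V_{u_i(t:T)}$ produced by \textbf{Compute-Value-Function-Given-Sample-Path}.

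For (a), I would proceed by backward induction on $t'$, showing that the myopic choice $(z(t'), \til p(t'))$ maximizing the instantaneous reward $\one^T \til p(t')$ subject to the one-step constraints is simultaneously optimal for the entire remaining sum $\sum_{t''=t'}^T \one^T \til p(t'')$. The structural ingredients are: (i) under \cref{assumption:optimal_response}, the maximum clearing vector is componentwise maximal over all feasible clearings at each step, a consequence of the Eisenberg--Noe lattice property underlying \cite[Lemma 4]{eisenberg2001systemic}; and (ii) telescoping the recursion $P(t'+1) = b(t'+1) + \ell(t'+1) + P(t') - \til p(t')$ together with the solvency cap $\til p \le P$ yields an \emph{a priori} bound $\sum_{t''=t'}^T \one^T \til p(t'') \le (T - t' + 1)\Delta + \one^T P(t')$ whose right-hand side does not depend on $\til p(t')$ itself. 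Together, these imply that greedy clearing today cannot shrink the tail reward, so the myopic policy Pareto-dominates every alternative.

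For (b), I would swap expectation and maximum. Any Markov policy $\Pi$ satisfies $\ev{U(t:T)}{\sum R(\cdot)} \le \ev{U(t:T)}{V^*_u}$ by pathwise domination, so $V(t,s) \le \ev{U(t:T)}{V^*_u}$; conversely, by \cref{assumption:financial_environment_mc} the state $S(t)$ is a sufficient statistic, so the greedy rule is itself a valid Markov policy achieving expected value $\ev{U(t:T)}{V^*_u}$ in view of part (a). Equality follows. For part (2), combining the solvency constraint with the same telescoping of $\one^T P(t'')$ gives $V_{u_i(t:T)} \in [0,(T-t+1)\Delta]$ (under the zero-accumulated-debt convention stated in the MDP formulation). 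Since the samples are i.i.d.\ with mean $V(t,s)$ by part (1), Hoeffding's inequality yields $\Pr{}{|\bar V(t,s) - V(t,s)| \ge \varepsilon} \le 2\exp\!\left(-\tfrac{2N\varepsilon^2}{(T-t+1)^2\Delta^2}\right)$; setting the right-hand side equal to $\delta$ and solving for $N$ recovers the stated sample complexity.

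The main obstacle is the Pareto-optimality argument in (a). The nonlinearity of $A(t'+1)$ in $\til p(t')$ (via the relative liability definition $a_{ij} = p_{ij}/P_i$) means the one-step look-ahead is not obviously monotone: both the numerator and denominator of $a_{ij}(t'+1)$ decrease as one clears more at time $t'$, so the default constraint at the next round need not relax pointwise. The argument must therefore combine componentwise maximality of the EN clearing vector with the telescoped upper bound to rule out the possibility that greedy clearing today forecloses a strictly better schedule tomorrow. Once this structural result is in place, the max/expectation swap and the Hoeffding bound are comparatively standard.
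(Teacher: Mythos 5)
Your part (2) is essentially the paper's argument: the pathwise values $V_{u_i(t:T)}$ are i.i.d.\ with mean $V(t,s)$, bounded in $[0,(T-t+1)\Delta]$ by telescoping the solvency constraints, and Hoeffding with this range yields exactly the stated $N$. Part (1), however, takes a different route from the paper, and it has a genuine gap at precisely the step you flag at the end. Your step (a) asserts that the greedy sequence of LPs attains the \emph{offline} maximum of $\sum_{t'}\one^T\til p(t')$ over \emph{all} trajectories feasible for \cref{eq:dynamics}, i.e.\ the joint optimum over $(\til p, z)$. That claim is false in general: the paper says so explicitly in the discussion following \cref{assumption:optimal_response} (``one can create examples where dropping this assumption leads to higher overall rewards''), and its \cref{sec:general_dynamics} recovers optimality of the myopic policy for the joint problem only under the extra condition that $A(t)$ is constant in time. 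Your proposed repair --- componentwise maximality of the EN clearing vector plus the telescoped bound $\sum_{t''\ge t'}\one^T\til p(t'')\le (T-t'+1)\Delta+\one^T P(t')$ --- only caps the tail reward; it does not show the greedy trajectory attains that cap, which is exactly the scenario (under-clearing today to enable a better schedule tomorrow) you must exclude and cannot. Since your step (b) obtains the matching lower bound on $V(t,s)$ by invoking (a), the whole of part (1) collapses with it.

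The paper sidesteps this issue by a different decomposition. Under \cref{assumption:optimal_response} the clearing vector is not a decision variable at all: given $z_t$ it is the unique maximal fixed point, and the inner ``$\max_{\til p_t}$'' in the theorem is merely the Eisenberg--Noe linear-programming characterization of that fixed point, not a genuine degree of freedom to be weighed against farsighted alternatives. The only decision is $z_t$, and the proof proceeds by backward induction on the HJB equations: substitute the inductive hypothesis for $V(t+1,\cdot)$, merge $\ev{U(t)}{\cdot}$ with $\ev{U(t+1:T)}{\cdot}$ using the exogeneity and Markov structure of the shocks (\cref{assumption:financial_environment_mc}), and exchange $\max_{z_t}$ with the expectation via a discretization (Riemann-sum) decoupling argument; no Pareto-dominance lemma for maximal clearing is stated or needed. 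To salvage your route you would have to restrict the trajectory space in the definition of $V^*_u$ to maximal-clearing trajectories and then justify the max/expectation exchange for the $z$ variables alone --- which is, in substance, the paper's inductive argument.
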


The proof of this result is given in \cref{sec:proofs}.

\section{Network Intervention with Discrete Controls} 
\label{sec:discrete_allocations}

We next focus on the problem of allocating discrete interventions. For the discrete interventions problem, each node can be allocated discrete resources up to some value $L_j \in \mathbb N$. A simpler version of the problem studied in~\cite{papachristou2021allocating} allowed the interventions to get two distinct values $\{ 0, L_j \}$. The analysis in this case is exactly the same with the general case. The total budget is again $B$ and does not change with time as well. We refer the the action space of this setting with $\mathcal Z_d = \left \{ z \in \mathbb N^n : \| z \|_1 \le B, \zero \le z \le L \right \}$. Note that $\mathcal Z$ defined on \cref{sec:formulation} corresponds to the \emph{fractional relaxation} of $\mathcal Z_d$.

We again seek to find the optimal policy which maximizes the value function at round $t = 0$, subject to the dynamics 

\begin{equation}
    S(t) = \begin{pmatrix} P(t) \\ \til P(t) \end{pmatrix} \mapsto \begin{pmatrix} P(t) \\  P(t) \wedge \left (A^T(t) \til P(t) + c(t) + Z(t) \right ) \end{pmatrix} = \Psi \left (S(t), Z(t); S(t - 1), U(t) \right ).
\end{equation}

Again, when the round is clear from the context we will use the abbreviation $\Psi_t(s, z)$ to denote the mapping with information up to time $t$ acting on the state action pair $(s, z)$, i.e. for all $z$ we have that $s(z) = \Psi_t(s(z), z)$. Similarly to $\Phi_t$, the operator $\Psi_t$ is a contraction according to \cref{assumption:uniqueness}. In \cite[Theorem 1]{papachristou2021allocating} it has been proven that the same problem is NP-Hard for $T = 1$ by reduction from the Set Cover problem and therefore, the problem in question is at least as hard as the combinatorial optimization problem of \cite{papachristou2021allocating}. Therefore, we seek an approximate policy which yields an \emph{approximate value function} $V^{SOL}(t, s)$ such that 

\begin{equation}
    V^{SOL} (t, s) \ge \left ( 1 - \gamma (t, s) \right ) \cdot V^{OPT}(t, s) \quad \text{for some} \quad \gamma(t,s) \in (0, 1).
\end{equation}

For the problem with $T = 1$ the work of \cite{papachristou2021allocating} obtains a $(1 - \beta_{\max})$-approximation for the problem with a randomized rounding algorithm where $\beta_{\max}$ is the maximum row sum of the relative liability matrix. Because the matrix $A$ will be different in both cases which would correspond to different clearing solutions. However, note that a rounding regime that iteratively rounds the fractional optimal solutions in a backward function will not yield correct results. The reason for that is that a suboptimal action at round $t + 1$ can have an effect on the optimal fractional action at round $t$.

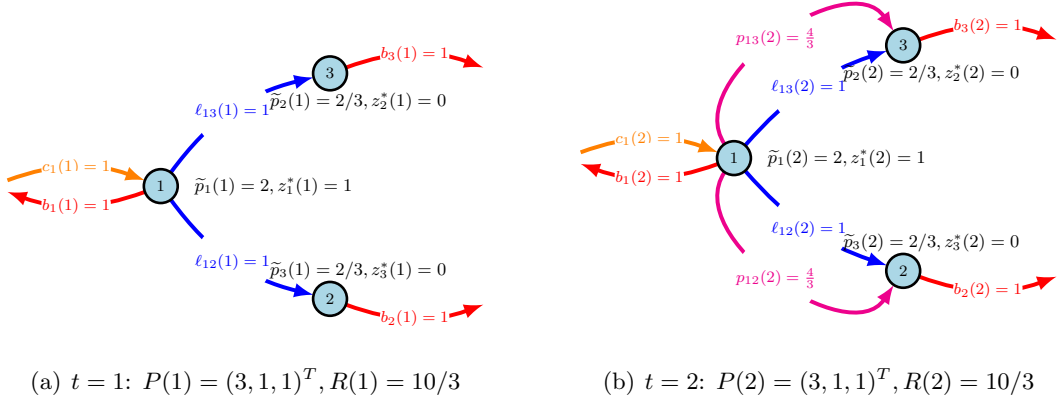
\begin{figure}[t]
    \centering
    \subfigure[$t = 1$: $P(1) = (3, 1, 1)^T, R(1) = 10/3$]{
        \begin{tikzpicture}[transform shape,scale=0.75]
        \Vertex[x=0,y=0,label=$1$]{1}
        \Vertex[x=3,y=-2,label=$2$]{2}
        \Vertex[x=3,y=2,label=$3$]{3}
        \Vertex[x=-3,y=0,Pseudo]{s1}
        \Vertex[x=6,y=-2,Pseudo]{s2}
        \Vertex[x=6,y=2,Pseudo]{s3}
        
         \Text[x=2,y=0]{{\footnotesize $\til p_1(1) = 2, z_1^*(1) = 1$}}
        \Text[x=3.5,y=1.5]{{\footnotesize $\til p_2(1) = 2/3, z_2^*(1) = 0$}}
        \Text[x=3.5,y=-1.5]{{\footnotesize $\til p_3(1) = 2/3, z_3^*(1) = 0$}}

        \Edge[Direct, bend=-20, label={$\ell_{12}(1) = 1$}, color=blue](1)(2)
        \Edge[Direct, bend=20, label={$\ell_{13}(1) = 1$}, color=blue](1)(3)

        \Edge[Direct, bend=20, label={$c_1(1) = 1$}, color=orange](s1)(1)
        \Edge[Direct, bend=20, label={$b_1(1) = 1$}, color=red](1)(s1)

        \Edge[Direct, bend=-20, label={$b_2(1) = 1$}, color=red](2)(s2)
        \Edge[Direct, bend=20, label={$b_3(1) = 1$}, color=red](3)(s3)
     
    \end{tikzpicture}
    }
    \subfigure[$t = 2$: $P(2) = (3, 1, 1)^T, R(2) = 10/3$]{
        \begin{tikzpicture}[transform shape,scale=0.75]
        \Vertex[x=0,y=0,label=$1$]{1}
        \Vertex[x=3,y=-2,label=$2$]{2}
        \Vertex[x=3,y=2,label=$3$]{3}
        \Vertex[x=-3,y=0,Pseudo]{s1}
        \Vertex[x=6,y=-2,Pseudo]{s2}
        \Vertex[x=6,y=2,Pseudo]{s3}
        
        \Text[x=2,y=0]{{\footnotesize $\til p_1(2) = 2, z_1^*(2) = 1$}}
        \Text[x=3.5,y=1.5]{{\footnotesize $\til p_2(2) = 2/3, z_2^*(2) = 0$}}
        \Text[x=3.5,y=-1.5]{{\footnotesize $\til p_3(2) = 2/3, z_3^*(2) = 0$}}

        \Edge[Direct, bend=-15, label={$\ell_{12}(2) = 1$}, color=blue](1)(2)
        \Edge[Direct, bend=15, label={$\ell_{13}(2) = 1$}, color=blue](1)(3)
        
        \Edge[Direct, bend=-90, label={$p_{12}(2) = \tfrac 4 3$}, color=magenta](1)(2)
        \Edge[Direct, bend=90, label={$p_{13}(2) = \tfrac 4 3$}, color=magenta](1)(3)

        \Edge[Direct, bend=20, label={$c_1(2) = 1$}, color=orange](s1)(1)
        \Edge[Direct, bend=20, label={$b_1(2) = 1$}, color=red](1)(s1)

        \Edge[Direct, bend=-20, label={$b_2(2) = 1$}, color=red](2)(s2)
        \Edge[Direct, bend=20, label={$b_3(2) = 1$}, color=red](3)(s3)
     
    \end{tikzpicture}
    }
    \caption{\small \em A contagion network over a horizon $T = 2$, with instantaneous internal liabilities $\ell$, external liabilities $b$, and external assets $c$ that are as in \cref{fig:ex_fractional_allocations_zero_input}. We consider discrete allocations with $B = 1$ and $L = \one$. The optimal discrete allocation for $t \in \{1, 2 \}$ is $z^*(t) = (1, 0, 0)^T$. Similarly with \cref{fig:ex_fractional_allocations}, the interventions increase node 1's assets and thus it's able to pay a total of $2/3$ to each creditor (on both steps), yielding a value function of $R(1) + R(2) = 20/3$.}
    \label{fig:ex_discrete_allocations}
\end{figure}

\subsection{Approximation Algorithms}

The first idea on deriving an approximation algorithm for the problem described is adapting the approximation algorithm presented in \cite{papachristou2021allocating} to the dynamic setting: More specifically let $t$ be a fixed round and $s(t)$ be a fixed state at round $t$ and let $u(t:T)$ be a sample path for the random sequence $U(t:T)$ from time $t$ onward. Conditioned on the realization of $u(t:T)$ the planner seeks to solve the following deterministic problem 

\begin{equation} \label{eq:optimization_discrete_interventions}
\begin{split}
    \max_{z(t:T)} \quad & V_{u(t:T)} (t, s(t)) = \sum_{t' = t}^T R (s(t'), z(t') = \Pi (t', s(t')), u(t')) \\
    \text{s.t.} \quad & s(t') = \Psi_{t'} (s(t'), z(t'), u(t')) \quad \forall t' \in [t, T] \\
    & z(t') \in \mathcal Z_d \quad \forall t' \in [t, T].
\end{split}
\end{equation}

We also consider the fractional relaxation where the decision variables lie in $\mathcal Z$. Let $V^{SOL}_{u(t:T)}(t, s(t))$ be the value that the approximation algorithm produces, let $V^{REL}_{u(t:T)}(t, s(t))$ be the optimal solution of the relaxation (i.e. when the interventions belong to $\mathcal Z$) and let $V^{OPT}_{u(t:T)}(t, s(t))$ be the optimal solution. From optimality we know that $V^{REL}_{u(t:T)}(t, s(t)) \ge V^{OPT}_{u(t:T)}(t, s(t))$. Note that, since $u(t:T)$ is given, the optimal policy for the relaxed program consists of solving $T - t + 1$ linear programs sequentially and finding the pair of the clearing vector and optimal policy. We produce a rounded policy $SOL$ that corresponds to an allocation vector $z_d(t:T)$ randomly by rounding the matrix $z_r^*(t:T)$ of the optimal relaxed policy by rounding all random variables $z_{d, i}(t')$ as Binomial i.i.d. vectors on $L_i$ trials with biases $z_{r, i}^*(t') / L_i$ for all $t' \in [t, T]$. Our algorithm is presented below: 

\frbox{
    \small
    \captionof{algorithm}{\small \em Randomized Rounding Approximation Algorithm} \label{alg:randomized_rounding}
      \noindent \textbf{Sample-Interventions}($L$, $B$, $u(t:T)$, $\tau$, $s$)
      
    \begin{compactenum}
        \item Until constraints are satisfied and the approximation guarantee is not violated or $\tau$ iterations have elapsed
        \begin{compactenum}
            \item For every agent $i \in [n]$ sample (independent) interventions
            
            $$z_{d, i}(t:T) \sim \mathrm {Bin} \left ( \frac {z_{r, i}^*(t:T)} {L_i}, L_i \right )$$.
            
        \end{compactenum}
        
        \item Return the value function  $V^{SOL}_{u(t:T)} = \sum_{t' \in [t, T]} \one^T \til P_d(t)$ given the calculated approximate ($SOL$) policy after calculating the clearing payments $\til P_d(t:T)$. 
        
    \end{compactenum}
    
     \medskip
    \noindent \textbf{Aggregate-Discrete}($N$, $L$, $B$, $\tau$, $\mathcal U$, $s$)
    
    \begin{compactenum}
  
    \item  Let $\Delta = \sup_{u \in \mathcal U} \left ( \| b \|_1 + \| \ell \|_1 \right ), \; \Theta = \min \{ B, \min_{u \in \mathcal U} \| b \|_1 \}$. 
    \item Sample $N$ exogenous shocks $\{u_j(t:T)\}_{j \in [N]} \sim \mathcal U$
    
    \item For $j \in [N]$
    \begin{compactenum}
        \item Call \textbf{Compute-Value-Function-Given-Sample-Path}($L, B, u_j(t:T), s$) and get the optimal fractional policy $z_r^*(t:T)$.
        \item Calculate 
        $$V_{u_j(t:T)}^{SOL} = \text{\textbf{Sample-Interventions}}(L, B, u_j(t:T), \tau, s)$$
    \end{compactenum}
    
    \item Return $ \bar V^{SOL} = \frac {1} {N} \sum_{j = 1}^{N} V^{SOL}_{u_{j}(t:T)} $
    \end{compactenum}
}

\begin{theorem} \label{theorem:approximation_randomized_rounding}
    \cref{alg:randomized_rounding} yields the following approximation guarantee (on expectation):
   
    \begin{equation*} 
        \ev {u(t:T), z_d(t:T)} {V^{SOL} (t, s(t))} \ge \left ( 1 -  \sup_{u(t:T)} \max_{t' \in [t,T], i \in [n]} \beta_i(t') \right ) \cdot \ev {u(t:T)} {V^{OPT} (t, s(t))},
    \end{equation*}

    where $\sup_{u(t:T)} \max_{t' \in [t,T], i \in [n]} \beta_i(t')$ is the worst-case financial connectivity of a deterministic contagion network produced in the environment space $\mathcal U$. Moreover, if the minimum external liability for any node $i \in [n]$ at any round $t' \in [t,T]$ is $\delta_b$, $\Delta = \sup_{\mathcal U} \left ( \| b \|_1 + \| \ell \|_1 \right )$, then the approximation ratio is lower bounded by
    
    \begin{equation*}
        1 - \sup_{u(t:T)} \max_{t' \in [t, T], i \in [n]} \beta_i(t') \ge \begin{cases}
            \frac {\delta_b} {\Delta} &  B > \Delta \\
            \frac {\delta_b} {(T - t + 1) \Delta} & \text{otherwise}
        \end{cases}.
    \end{equation*}
\end{theorem}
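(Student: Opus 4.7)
The proof has two parts: establishing the expected $(1-\beta^*_{\max})$ approximation ratio, and then deriving the concrete lower bound on $1-\beta^*_{\max}$ in terms of $\delta_b$, $\Delta$, $B$, and $T-t+1$. The plan is to condition on a sample path of the exogenous environment, adapt the static randomized rounding analysis of \cite{papachristou2021allocating} to each round, and then argue that the per-round bounds combine cleanly across time.

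\textbf{Step 1 (per-sample-path reduction).} First I would fix a sample path $u(t:T)$. Because \textbf{Sample-Interventions} rounds the LP optimum $z^*_r(t:T)$ for that specific path independently per coordinate with $\mathbb{E}[z_{d,i}(t')] = z^*_{r,i}(t')$ (up to the conditioning induced by the rejection loop, which preserves feasibility and which I would handle by assuming the $\tau$-iteration budget is large enough to succeed with probability $1-o(1)$), the problem decouples: it suffices to prove $\mathbb{E}_{z_d}\!\left[V^{SOL}_{u(t:T)}\right] \ge \bigl(1 - \max_{t' \in [t,T], i \in [n]} \beta_i(t')\bigr)\, V^{REL}_{u(t:T)}$ for each path, since $V^{REL}_{u(t:T)} \ge V^{OPT}_{u(t:T)}$ by LP relaxation and the outer expectation over $u(t:T)$ preserves inequality.

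\textbf{Step 2 (per-round rounding bound).} Next I would proceed round by round. At round $t'$, conditional on $P_d(t')$ and $z_d(t')$, the clearing $\til P_d(t')$ is the Banach fixed point of $\Psi_{t'}$. Applying the static analysis from \cite[Theorem~2]{papachristou2021allocating}, using the default constraint $\til P_d(t') \le A^T(t') \til P_d(t') + c(t') + z_d(t')$, the identity $\mathbb{E}[z_d(t')] = z^*_r(t')$, and that $\til P_d(t')$ is monotone non-decreasing in $z_d(t')$ coordinate-wise, one obtains $\mathbb{E}[\one^T \til P_d(t') \mid P_d(t')] \ge \bigl(1 - \beta_{\max}(t')\bigr)\, \one^T \til P_r^*(t')$ in the sub-problem where the state of the rounded process equals the fractional one. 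The central technical step is to bootstrap this: by monotonicity of the EN clearing in the state (the operator $\Psi_{t'}$ is monotone in $P(t')$), any debt unpaid due to rounding at round $t'-1$ accumulates into $P_d(t')$, which only increases $\one^T \til P_d(t')$ relative to the fractional trajectory's state. Summing over $t' \in [t,T]$ and telescoping, the worst-case loss per round is bounded by $\beta_{\max}(t')\, \one^T \til P^*_r(t')$, yielding $\mathbb{E}[V^{SOL}_{u(t:T)}] \ge (1-\beta^*_{\max})\, V^{REL}_{u(t:T)}$ with $\beta^*_{\max}=\max_{t',i}\beta_i(t')$.

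\textbf{Step 3 (concrete bounds on $\beta^*_{\max}$).} For the second display, I would use that $\beta_i(t') = \sum_{j} a_{ij}(t') = 1 - b_i(t')/P_i(t')$ whenever $P_i(t') > 0$, so $1 - \beta_i(t') \ge \delta_b / P_i(t')$. It remains to upper bound $P_i(t')$ on the rounded trajectory. When $B > \Delta$, the budget is large enough to clear all instantaneous liabilities each round (possibly in expectation via the rounding), so residual debt does not accumulate and $P_i(t') \le \Delta$, giving the ratio $\delta_b / \Delta$. When $B \le \Delta$, the unpaid liabilities from earlier rounds can telescope, and the worst-case accumulation over the $T-t+1$ rounds in the horizon satisfies $P_i(t') \le (T-t+1)\Delta$, yielding the ratio $\delta_b / ((T-t+1)\Delta)$.

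\textbf{Main obstacle.} The hardest part is rigorously justifying Step 2: the fractional optimum $\til P^*_r(t')$ is computed on the fractional trajectory $P^*_r(t')$, whereas the rounded clearing is evaluated on the stochastically different state $P_d(t')$, and the relative liability matrix $A(t')$ (which depends on $P(t')$) differs across trajectories, making the dynamics non-convex as the paper emphasizes in Section~\ref{sec:general_dynamics}. The key monotonicity lemma I would need is that $\one^T \til P(t')$ is monotone non-decreasing in $P(t')$ for fixed $(A(t'), c(t'), z(t'))$ — which follows from the fixed-point characterization and Tarski-type arguments — so that accumulating more unpaid debt in the rounded trajectory does not hurt the per-round reward comparison.
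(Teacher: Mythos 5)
Your overall skeleton (fix a sample path, compare the rounded value to the LP relaxation $V^{REL}\ge V^{OPT}$, bound the reward round by round, then take the expectation over $u(t:T)$) matches the paper, and your Step 3 is exactly the paper's simplification of the approximation factor. The gap is in Step 2, which is the heart of the proof. You propose to apply the static per-round bound in the sub-problem where the rounded state equals the fractional one, and then repair the state mismatch with a monotonicity/telescoping argument: unpaid debt from round $t'-1$ accumulates into $P_d(t')$ and ``only increases $\one^T \til P_d(t')$.'' This presumes $P_d(t')\ge P_r(t')$ coordinate-wise, which is false in general: the rounded intervention $z_{d,i}(t'-1)$ is a Binomial random variable that can exceed $z^*_{r,i}(t'-1)$, so a node can pay \emph{more} under rounding at $t'-1$ and carry \emph{less} debt into $t'$ (and one cannot take expectations inside the fixed point, so an in-expectation version does not rescue this). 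Moreover, the lemma you invoke --- monotonicity of $\one^T\til P(t')$ in $P(t')$ ``for fixed $A(t')$'' --- is not available here, since $A(t')$ is itself determined by the trajectory of payments; the two trajectories have different relative liability matrices, so a static bound evaluated on the rounded state compares against a different fractional benchmark than $\til P^*_r(t')$. The telescoping claim that the per-round loss is at most $\beta_{\max}(t')\,\one^T\til P^*_r(t')$ is therefore unsupported.

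The paper's proof avoids any cross-state comparison. At each round $t'$ it partitions the nodes according to whether they default or are solvent \emph{under the rounded solution}. A defaulting node satisfies the default constraint with equality, hence pays at least $c_i(t')+z_{d,i}(t')$ regardless of its accumulated state, so in expectation at least $c_i(t')+z^*_{r,i}(t')$; summing over the defaulting set $S=D_d(t')$ and invoking the static subset inequality $\left(1-\max_{i\in S}\beta_{r,i}(t')\right)\sum_{i\in S}\til p^*_{r,i}(t')\le\sum_{i\in S}\left[c_i(t')+z^*_{r,i}(t')\right]$ gives the factor for defaulting nodes. A solvent node pays its full outstanding liability, which equals its cumulative inflow $\sum_{t''\le t'}(b_i(t'')+\ell_i(t''))$ minus its cumulative past payments; bounding the latter by the cumulative fractional payments yields $\ev{}{\til p_{d,i}(t')}\ge\til p^*_{r,i}(t')$ directly. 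This is why the paper remarks that the argument has ``no dependence on the states created by the rounded outcomes.'' To complete your proof you would need either this default/solvent case split or a genuine coupling between the two trajectories; the monotonicity lemma as stated will not close the gap.
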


Note that the proof of \cref{theorem:approximation_randomized_rounding} there is no dependence on the states created by the rounded outcomes, and therefore we can compare and lower bound by the value of the reward at each round $t$ of the fractionally relaxed policy. Specifically, the proof of \cref{theorem:approximation_randomized_rounding} lies on the observation that under a realization of the financial environment a default node on the rounded solution can serve at least its external assets plus the intervention it gets, so, in expectation, it can serve at least its external assets plus the intervention of the fractional solution. For the optimal solution of the fractional relaxation we know \cite{glasserman2015likely, papachristou2021allocating} that for every $S \subseteq [n]$ the sum of the external assets and the fractional interventions over $S$ is at least the sum of payments weighted by the ``negated'' financial connectivities. Moreover, for all the solvent nodes the amount of payments they can serve is at least the corresponding fractional payment weighted by the connectivity of the node. Combining both observations, we get that the value function of the rounded solution is at least a factor of 1 minus the worst financial connectivity across (the remaining) rounds of the optimal solution. Simplifying the approximation factor we make two observations: firstly if the system is able to clear all the liabilities, then the approximation factor simplifies to $\delta_b / \Delta$, similar to the static case, has no dependency on the time horizon, and is only instance-dependent (i.e. it depends on $\mathcal U$). Otherwise, the approximation factor lower bound loses an $\Omega (T^{-1})$ factor.

For the runtime of the algorithm, we give the following guarantee when the interventions are equal for all nodes. A similar analysis (see also \cite{papachristou2021allocating}) can be performed when the interventions are different using the dependent rounding scheme of \cite{srinivasan2001distributions} and achieves the same runtime guarantee as \cref{theorem:runtime}: 

\begin{theorem}[Runtime] \label{theorem:runtime} 
    
    \cref{alg:randomized_rounding} run for $N = \tfrac {\tau \Delta^2} {2 \Theta^2}$ realizations, where for each realization \textbf{Sample-Interventions} is called at most $\tau$ times, with interventions $L = \lambda \cdot \one$ (for $\lambda \in \mathbb N^*$) and $\Delta, \Theta$ defined in \cref{alg:randomized_rounding} yields an approximation guarantee of $\big ( 1 - \max_{t' \in [t,T]} \max_{i \in [n]} \beta_i(t') - 2 \varepsilon \big )$ with probability $1 - O \left (\tfrac {(T - t + 1) \tau \Delta^2 e^{-\tau \varepsilon^2}} {\Theta^2} \right )$ and runs in time $O((T - t) (\mathcal T_{LP} + \tau n \lambda))$. $\mathcal T_{LP}$ is the runtime for solving the one-step clearing problem, assuming $O(1)$ access to $[0,1]$-uniform random variables.  
\end{theorem}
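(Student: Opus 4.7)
The theorem has three parts — a probability bound, an approximation factor, and a runtime — and the plan is to combine the expectation-level guarantee of \cref{theorem:approximation_randomized_rounding} with two concentration arguments and a direct runtime count. I would decompose the gap between $\bar V^{SOL}$ and $V^{OPT}$ into the bias $\beta_{\max} V^{OPT}$ captured by \cref{theorem:approximation_randomized_rounding} plus two stochastic slacks of size $\varepsilon V^{OPT}$ each, one from the Monte Carlo average over the $N$ sample paths and one from the per-path Binomial rounding; the choice of $N$ is calibrated so that each slack fails with probability at most $e^{-\tau\varepsilon^2}$.

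For the outer Monte Carlo step I would use that every $V^{SOL}_{u_j(t:T)}$ lies in $[0,(T-t+1)\Delta]$ (the maximum liability mass over the horizon), so Hoeffding's inequality yields
\[
\Pr{}{\bar V^{SOL} \le \mathbb E V^{SOL} - \eta} \;\le\; \exp\!\bigl(-2N\eta^2/((T-t+1)\Delta)^2\bigr).
\]
The key geometric input is the lower bound $V^{OPT} \ge (T-t+1)\Theta$, which follows because the controller can always direct the budget $B\ge\Theta$ against the per-round external liabilities $\|b(t')\|_1\ge\Theta$ to secure at least $\Theta$ of clearing per round. This lets me take $\eta = \varepsilon(T-t+1)\Theta \le \varepsilon V^{OPT}$, and then substituting $N = \tau\Delta^2/(2\Theta^2)$ collapses the exponent precisely to $-\tau\varepsilon^2$. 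For the inner rounding step, each attempt draws independent $z_{d,i}(t')\sim\mathrm{Bin}(z^*_{r,i}(t')/L_i,L_i)$ with $\mathbb E[\one^T z_d(t')]=\one^T z^*_r(t')\le B$, so a Chernoff bound on the sum of bounded independent variables, boosted over the $\tau$ independent retries, gives a feasible $\varepsilon$-accurate rounding per round with failure probability $e^{-\tau\varepsilon^2}$. A union bound over the $N$ paths and the $(T-t+1)$ rounds per path produces the claimed failure probability $O((T-t+1)\tau\Delta^2 e^{-\tau\varepsilon^2}/\Theta^2)$, and composing the two $\varepsilon$ deviations with \cref{theorem:approximation_randomized_rounding} yields the $1-\beta_{\max}-2\varepsilon$ approximation.

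The runtime is a straightforward tally: per realization, \textbf{Compute-Value-Function-Given-Sample-Path} solves $(T-t+1)$ nested LPs in time $(T-t+1)\mathcal T_{LP}$, and \textbf{Sample-Interventions} performs at each round up to $\tau$ attempts, each sampling $n$ independent $\mathrm{Bin}(\cdot,\lambda)$ variates in $O(n\lambda)$ time, totalling $O((T-t+1)(\mathcal T_{LP}+\tau n\lambda))$. The delicate step I expect is the first one — pinning down a uniform-in-realization lower bound $V^{OPT}\ge(T-t+1)\Theta$ so that the absolute Hoeffding scale $\Delta$ translates into the relative deviation $\varepsilon V^{OPT}$ — and then cleanly composing the two independent $\varepsilon$ slacks so that the union bound across paths, rounds, and retries collapses into the single exponential factor stated in the theorem.
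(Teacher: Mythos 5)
Your proposal is correct and follows essentially the same route as the paper's proof: the same $(T-t+1)\Delta$ range for the sample value functions, the same key lower bound $V^{OPT}\ge(T-t+1)\Theta$ with $\Theta=\min\{B,\min_{u}\|b\|_1\}$ used to convert the absolute concentration scale into a relative $\varepsilon$-deviation, the same choice of $N$ collapsing the Monte Carlo exponent to $-\tau\varepsilon^2$, the same Chernoff bound on the budget constraint of the Binomial rounding boosted over the $\tau$ retries, the same union bound over the $N$ paths and $T-t+1$ rounds, and the same runtime tally. The one minor divergence is that the paper applies Markov's inequality per sample path to the rounding randomness (obtaining a $1-\gamma-\varepsilon^2$ guarantee per realization) and then a multiplicative Chernoff bound to the empirical average of $V^{OPT}$ over the shocks, whereas you apply Hoeffding directly to $\bar V^{SOL}$ — a workable variant, though it should acknowledge that the retry loop conditions the distribution of the algorithm's actual output, an issue the paper's per-path treatment handles explicitly by keeping all rounding-randomness concentration inside each realization.
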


\medskip 

\noindent \emph{Discretizing the Clearing Payments.} Once a solution to the fractional (resp. discrete) intervention problem has been obtained an interesting direction is rounding the clearing payment vector, i.e. enforce integral clearing payments to the nodes. An algorithm to do this would be the following simple algorithm: For every $t \in [T]$ sort $\left (y_{i1}(t) = a_{i1}(t), \dots, y_{in}(t) = a_{in}(t), y_{i,n+1}(t) = 1 - \beta_{i}(t) \right )$ in decreasing order and obtain a permutation of the nodes $\sigma_{i, t} : [n + 1]\to [n + 1] $. Then we get the first element and subtract $\lfloor y_{i,\sigma_{i, t}(1)} \til P_i^*(t) \rfloor$ from $\til P_i^*(t)$, get the second element, and so on, until no further rounding of the payments can be performed (i.e. at $j' \in [n]$ such that $\sum_{j \le j'} \lfloor y_{i,\sigma_{i, t}(j)} \til P_i^*(t) \rfloor > \til P_i^*(t)$). Closing, it is an interesting problem to see how this solution compares to the optimal fractional solution (see also \cite{bomze2014rounding} for similar error bounds). 

\section{Incorporating Fairness Considerations} 
\label{sec:fairness}

We say that an allocation is fair across the nodes if the intervention each node gets ``does not differ a lot from its neighbors'', whereas the ``neighborhood'' of a node can be expressed in terms of the existing financial network or can be expressed in terms of an auxiliary network as in \cite{papachristou2021allocating}. In this way, we can measure fairness in allocations in various settings. For instance, we can compare the interventions between a node and all other nodes in the network, interventions between a node and its neighbors on the financial network, and interventions between nodes belonging to different population groups (such as minority groups). All fairness metrics have to be scale-invariant, namely, do not change when the budget provided changes from $B$ to $\alpha B$ for some $\alpha > 0$. 

Motivated by the above desiderata, we call an allocation rule $Z(1:T)$ in the model \emph{fair} if the allocations of a node do not \emph{``differ much''} from its neighbors. As a starting point, we consider the Gini Coefficient \cite{gini1921measurement} and generalize it accordingly to our model. In detail, we measure the deviation between a node and its neighbors on a graph sequence $\{ H_t \}_{t \in [T]}$ with weights $w_{ij}(t) \ge 0$ ($w_{ii}(t) = 0$ for all $i \in [n], t \in [T]$), with the following measure 

\begin{equation}
    \mathrm{GC}(t; H_t) = \frac {\sum_{(i, j) \in E(H_t)} w_{ij}(t) |Z_i(t) - Z_j(t)|} {\sum_{i \in [n]} Z_i(t) \left ( \sum_{j \in [n]} (w_{ij}(t) + w_{ji}(t)) \right ) }.
\end{equation}

Note that the above measure of inequality is well defined: when all nodes get the same allocations it equals zero and when one node gets all the allocations it equals one. Regarding our model, examples of the sequence $\{ H_t \}_{t \in [T]}$ include:

\begin{compactitem}
    \item Setting $w_{ij}(t) = \one \{ i \neq j \}$ yields the standard Gini Coefficient (GC). Such a measure does not capture the topology of the problem, and aims to distribute interventions equitably across all nodes in the network. 
    \item Setting $w_{ij}(t) = a_{ij}(t)$, i.e. the weights between the nodes represent the actual relative liabilities between such pairs of nodes. This measure takes into account spatial interactions and the strength of ties (i.e. the relative importance of liabilities) to distribute the resources. We call this fairness constraint the Spatial Gini Coefficient (SGC). Note that a similar fairness measure has been considered in \cite{papachristou2021allocating}, however the measure defined there is not symmetric like in our case.
    \item If every node is associated with a sensitive attribute and $q_i(t) \in [0, 1]$ is the probability of the node having this sensitive attribute, we can for example use $w_{ij}(t) = |q_i(t) - q_j(t))| \cdot \one \{ a_{ij}(t) > 0 \}$ to put high weights on adjacent pairs which deviate in this attribute. For instance, $q_i(t)$ can represent the probability of a node belonging to a minority group and thus the weights would give importance in mitigating inequalities between neighboring groups of minorities and non-minorities. Moreover, if we want to enforce a stronger version of fairness, we can consider $w_{ij}(t) = |q_i(t) - q_j(t)|$ which penalizes all deviations in allocations between nodes with high deviations in their sensitive attribute. If more than one sensitive attributes are present, the weights can be modified accordingly to capture the average (or maximum) deviation of the attributes between a pair of nodes. We call this coefficient the Property Gini Coefficient (PGC). 
\end{compactitem}

This motivates the definition of a $g(t)$-fair allocation to be the allocation policy $Z(1:T)$ for which $\mathrm{GC}(t; H_t) \le g(t)$ for all $t \in [T]$ for some function $g(1:T) \in [0, 1]$ that does \emph{not} depend on the clearing payments and the interventions. {In our experiments we use $g(t) = \text{constant}$}. This corresponds to the following additional linear constraints on the \emph{action space} for an additional set of decision variables $\varpi(t) \in \mathbb R^{|E(H_t)|}$

\begin{equation} \label{eq:sgc_constraints}
    \begin{split}
        \varpi_{ij}(t) \ge 0 & \quad  \forall (i, j) \in E(H_t), t \in [T] \\
        - \varpi_{ij}(t) \le Z_i(t) - Z_j(t) \le \varpi_{ij}(t) & \quad \forall (i, j) \in E(H_t), t \in [T] \\
        \sum_{(i, j) \in E(H_t)} a_{ij}(t) \varpi_{ij}(t)  \le g(t) \sum_{i \in [n]} \left ( \sum_{j \in [n]} (w_{ij}(t) + w_{ji}(t) \right ) Z_i(t) & \quad \forall t \in [T].
    \end{split}
\end{equation}

After imposing the fairness constraints, a question one might ask is \emph{``How does the optimal value function without the fairness constraints compares to the optimal value function with fairness constraints for any type of the (generalized) Gini Coefficient?''}. For this reason, we define the Price of Fairness (PoF) to be 

\begin{equation}
    \mathrm{PoF} = \frac {\ev {} {\text{OPT sans fairness}}} {\ev {} {\text{OPT with fairness}}}.
\end{equation}

Since the fairness-constrained setting has additional constrains compared to the no-fairness setting, the price of fairness is always at least 1. In the static version, \cite{papachristou2021allocating} proves that under discrete allocations there exist instances where the PoF can be unbounded, yet when the allocations are fractional the PoF is always bounded. Subsequently, in the dynamic setting it is easy to observe that the same result holds. Finally, in \cref{sec:experiments_fairness} we show how incorporating further fairness constraints to the problem affects the distribution of interventions in our datasets with respect to the nodes' financial connectivities and we also give quantitative results regarding the PoF. 

\section{Experiments} \label{sec:experiments}

\subsection{Experimental Setup} 

We run experiments in various settings, i.e., synthetic core-periphery data, ridesharing data, Web financial transaction data, and semi-artificial data from cellphone mobility networks. We generally run experiments in two settings: fractional and discrete interventions. For the fractional intervention setting, we report the payments, cumulative reward, and interventions for the datasets in question. Since we are solving the problem optimally, we do not report competing methods. For the discrete intervention setting, the work of \cite{papachristou2021allocating} has shown that the randomized rounding algorithm performs very well in practice, and it outperforms various heuristics regarding allocations, even in more limited settings, i.e., whereas the decision was either to take the resource or not.
For this reason, we expect that the various heuristics (and their subsequent adaptations) in \cite{papachristou2021allocating} would be significantly outperformed by the randomized rounding method in the dynamic setting, and therefore we omit comparisons with these heuristic methods. The work of \cite{papachristou2021allocating} also presents a greedy hill-climbing algorithm that, under specific conditions, gives an approximation guarantee. However, this algorithm is designed for allocating resources in a binary way (the resource is either taken or not taken) and extending the analysis to a dynamical setting, and a more general allocation rule is hard.

\subsection{Stochastic Blockmodel with Core-Periphery Structure} \label{sec:synthetic_data}

\begin{figure}
\centering
\subfigure[Clearing Payments (Fractional interventions)]{\includegraphics[width=0.49\textwidth]{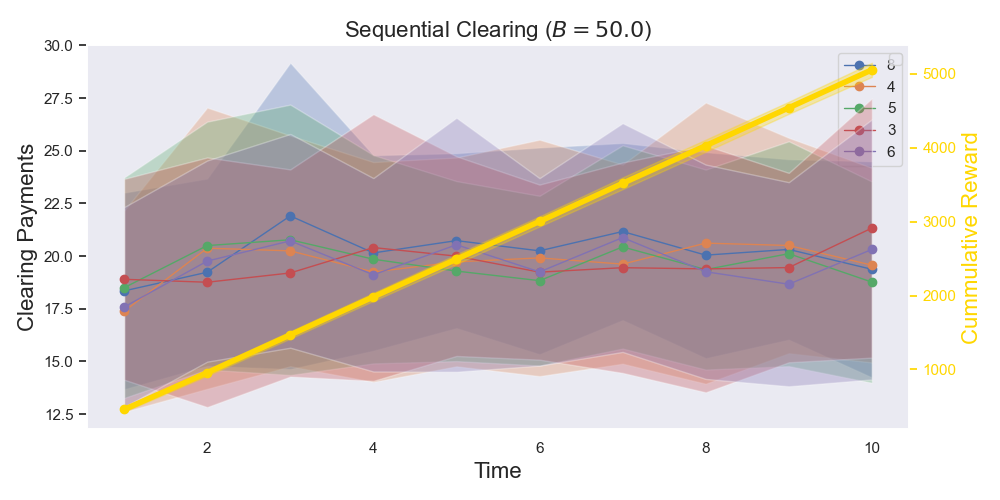}}
    \subfigure[Interventions (Fractional)]{\includegraphics[width=0.49\textwidth]{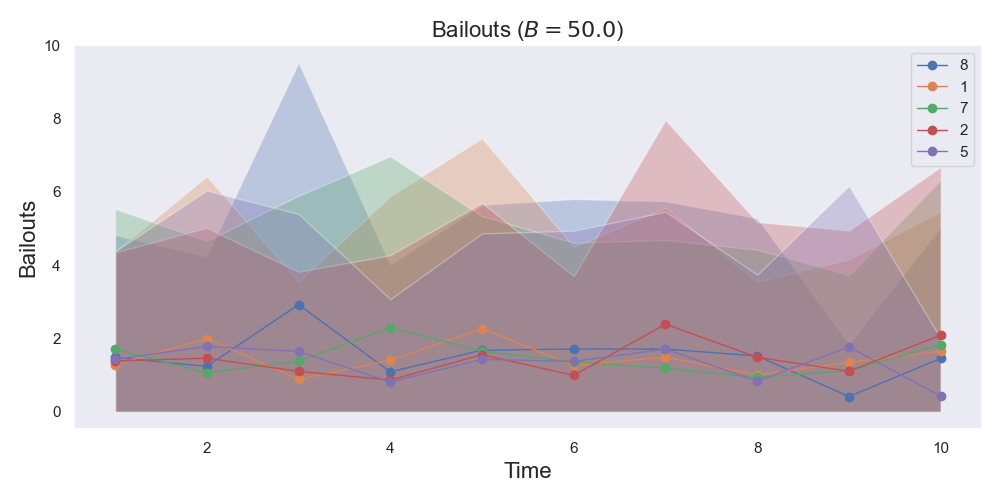}}
    \subfigure[Clearing Payments (Discrete interventions)]{\includegraphics[width=0.49\textwidth]{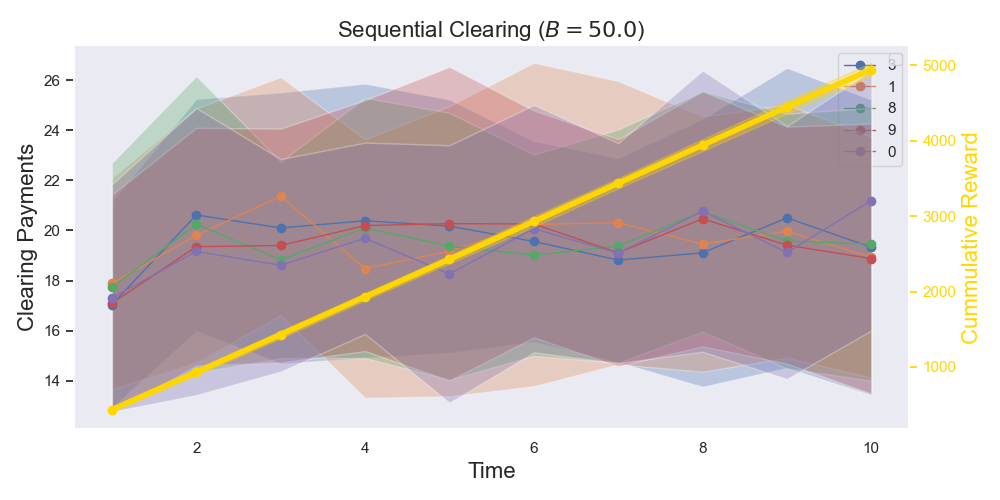}}
    \subfigure[Interventions (Discrete)]{\includegraphics[width=0.49\textwidth]{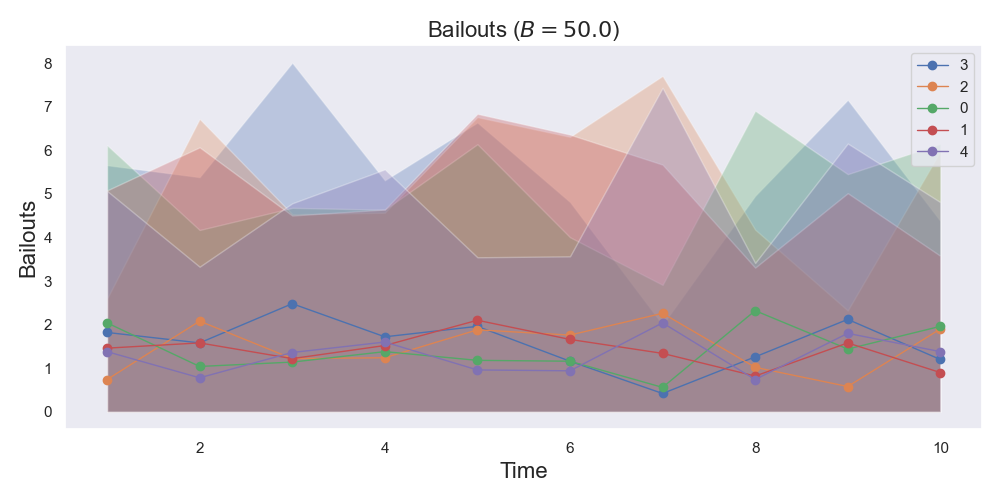}}
    \caption{\small \em Synthetic Core-Periphery Data Experiments averaged over 50 draws. Uncertainty corresponds to 1 standard deviation.}
    \label{fig:synthetic_data}
\end{figure}

We first run experiments with \emph{synthetic data} on networks that follow the stochastic blockmodel with core-periphery structure \cite{zhang2015identification}. Core-periphery networks have been widely observed in the modern financial (inter-bank) system \cite{chen2016financial, elliott2014financial, fricke2015core, craig2014interbank, in2020formation} whereas the core consists of a few \emph{large} banks (see e.g. \cite{papachristou2021sublinear, fricke2015core} for a characterization of the core size of core-periphery networks) and more \emph{small} banks. 

We generate a network of $n = 50$ nodes and $T = 10$ rounds whereas the structural graphs $G_t$ are drawn i.i.d. from a stochastic blockmodel with core-periphery structure with 2 blocks of size $n_{\mathrm{core}} = 10$ and $n_{\mathrm{periphery}} = 40$ and probabilities $\begin{pmatrix} p_{CC} = 0.6  & p_{CP} = 0.35 \\ p_{PC} = 0.35 & p_{PP} = 0.1 \end{pmatrix}$. The internal liabilities and the external liabilities are drawn i.i.d. from $\mathrm{Exp}(1)$, where the internal liability between $(i, j)$ at round $t$ is realized conditioned on the edge $(i, j)$ existing on $G_t$. and the asset vector is set to 0 for every time. We use a intervention budget of $B = 50$. \cref{fig:synthetic_data} shows the results of the clearing procedure with fractional interventions averaged over 50 draws of the financial environment as well as discrete interventions with $L = B \cdot \one$.  

\subsection{Ridesharing: For-Hire-Vehicles in New York City} \label{sec:experiment_ridesharing}

We use trip record data that are publicly available by the New York City (NYC) Taxi and Limousine Commission (TLC) and can e found at \url{https://www1.nyc.gov/site/tlc/about/tlc-trip-record-data.page}. The TLC data are split in time periods and each data-set contains the following fields

\begin{compactitem}
    \item \emph{Timestamp.} A timestamp indicating the time of the start of the ride.
    \item \emph{Source and destination Location ID.} The source location which belongs to a borough. We remove the rows where one of these quantities (or both) is missing. Location IDs correspond to \emph{zones} (e.g. Washingthon Heights South, East Harlem South etc.)
\end{compactitem}

We construct an instance of the dynamic clearing problem as follows 

\begin{compactenum}
    \item We define the network as rides between locations at the same borough (this can be extended to include rides from different boroughs; but here we focus on one borough for clarity of exposition). 
    \item The data is split into non-overlapping frames that correspond to some duration. For exposition clarity we have used 1-day intervals. Again, here we can use smaller intervals (e.g. 5min) to represent demand for rides realistically.
    \item We define $\ell_{ij}(t)$ as the total number of rides from location $i$ to location $j$ at timestamp $t$. 
    \item We define $b_i(t)$ to be the total number of external (outbound) rides requested from location $i$ to outside of the borough.
    \item We define $c_i(t)$ to be the total number of internal (inbound) rides requested for location $i$ from outside the borough. 
\end{compactenum}

\cref{fig:tlc_statistics} displays the statistics of the FHV TLC data-set for the Manhattan borough to other boroughs. We display both aggregate statistics across all zones (\cref{fig:tlc_statistics_aggregate}) and individual statistics for the 5 ``busiest`` (in terms of total rides) zones (\cref{fig:tlc_statistics_internal_outbound,fig:tlc_statistics_external_inbound,fig:tlc_statistics_external_outbound}). 

\medskip

\noindent \emph{Fractional interventions.} We run the simulations with budget $B \in \{ 0, 100, 500 \}$ and report the results in \cref{fig:tlc_days} where the intervention amounts for each node can range from 0 to $B$. More specifically, we report the clearing vectors for the 5 busiest zones (in terms of total clearing amount over the course of a month) as well as the cumulative reward.    

\medskip

\noindent \emph{Discrete interventions.} We run an experiment with budget $B = 100$ and $L = 10 \cdot \one$. \cref{fig:tlc_discrete} displays the clearing payments sequence and the indicator variables for the interventions for the 5 busiest zones.

\subsection{Financial Web Data: Venmo Data}

We use publicly available data (\url{https://github.com/sa7mon/venmo-data}) from public Venmo transactions. The dataset is split into three distinct time periods: (i) July 2018 to September 2018 (3.8M transactions) October 2018 (3.2M transactions) January 2019 to February 2019 (167K transactions). For our experiments we used the first period (July 2018 to September 2018) as it was the period with the most transactions. The amounts of the transactions are not provided in the data so we generate random transactions and use the provided topology. We have ignored data points for which the sender or the receiver of the transaction were missing. We construct the dynamic contagion instances as follows:

\begin{figure}
    \centering
    \subfigure[Clearing Payments and Cumulative Reward]{\includegraphics[width=0.47\textwidth]{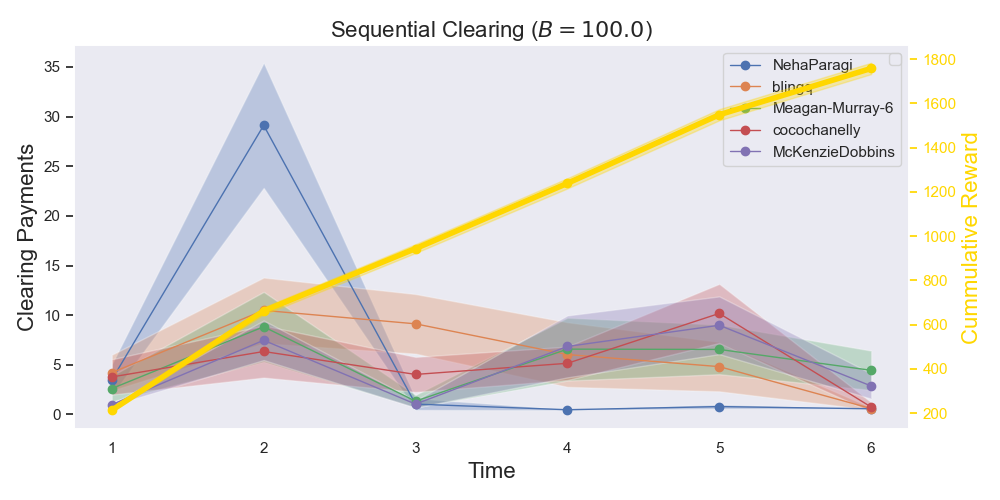}}
    \subfigure[Interventions]{\includegraphics[width=0.47\textwidth]{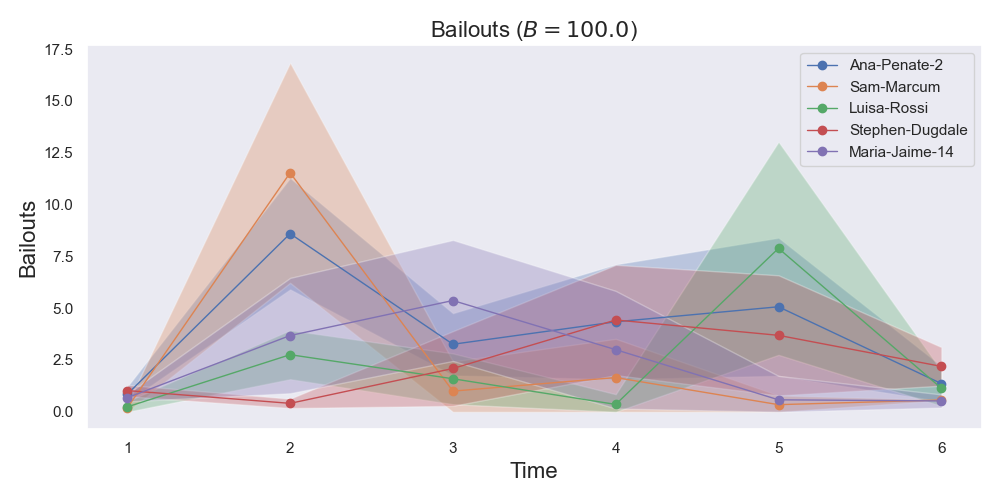}}
    \caption{\small \em Fractional interventions; Venmo data-set (July 2018-September 2018); $B = 100$ and $L = 100$}
    \label{fig:venmo_fractional}
\end{figure}

\begin{figure}
    \centering
    \subfigure[Clearing Payments and Cumulative Reward]{\includegraphics[width=0.47\textwidth]{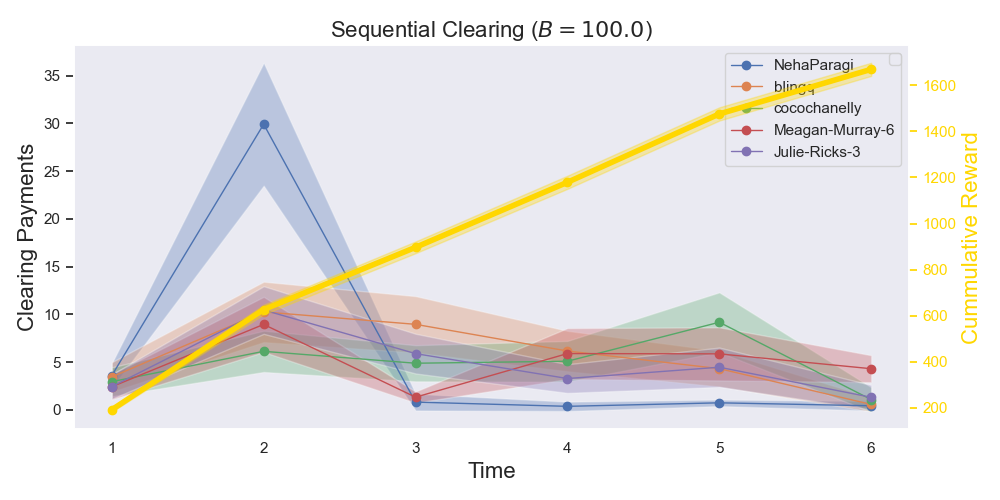}}
    \subfigure[Interventions]{\includegraphics[width=0.47\textwidth]{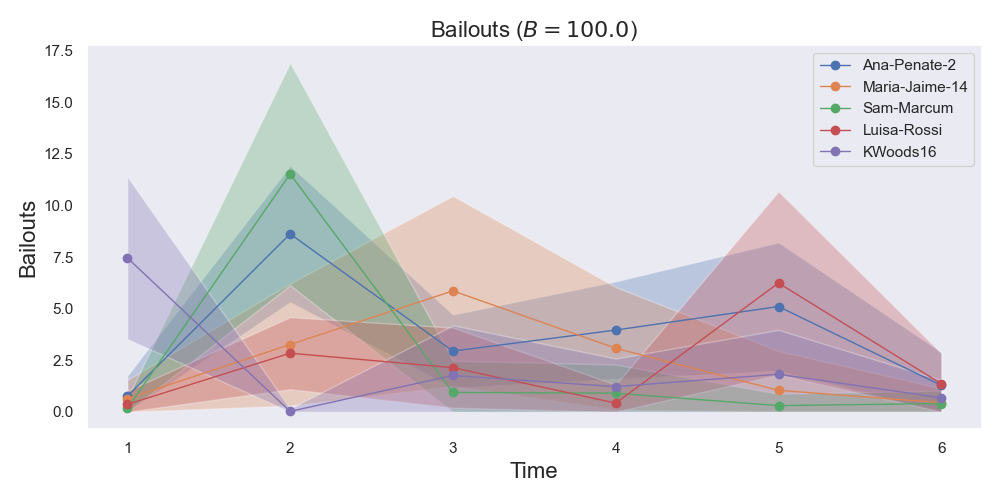}}
    \caption{\small \em Discrete interventions; Venmo Data-set (July 2018-September 2018); $B = 100$ and $L = 100$}
    \label{fig:venmo_discrete}
\end{figure}
\begin{figure}
    \centering
    \subfigure[Clearing Payments and Cumulative Reward]{\includegraphics[width=0.47\textwidth]{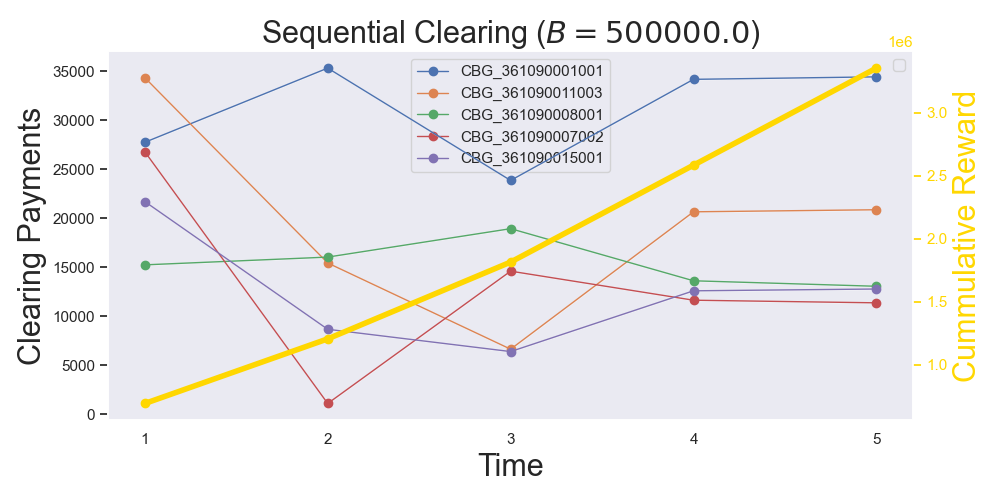}}
    \subfigure[Interventions]{\includegraphics[width=0.47\textwidth]{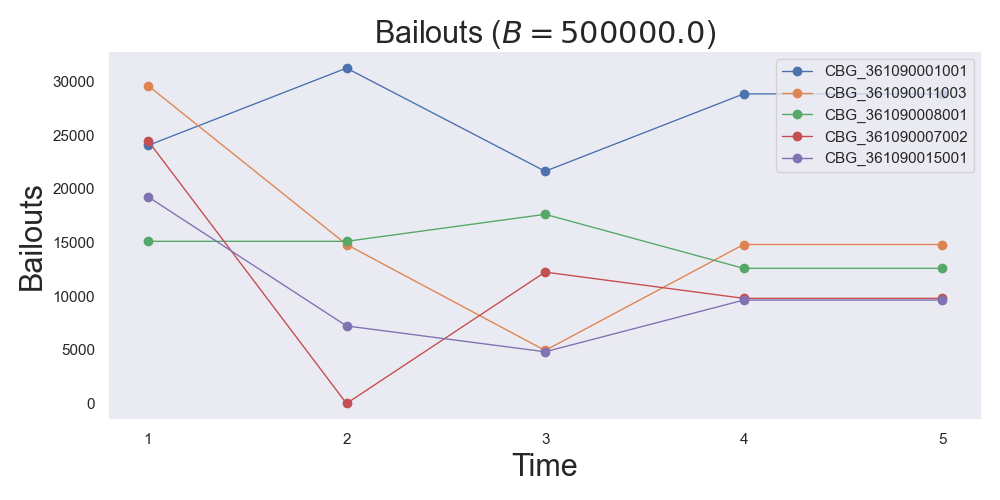}}
    \caption{\small \em Fractional interventions; SafeGraph data-set (December 2020-April 2021); $B = 500 \mathrm K$, custom $L$.}
    \label{fig:safegraph_fractional}
\end{figure}

\begin{compactitem}
    \item The timestamps are grouped in a weekly basis according to which year and week of the year they correspond to. 
    \item For the whole dataset (corresponding to the July-September 2018 period), we create 2 sets: $V_1$ and $V_2$. $V_1$ corresponds to the top-100 nodes in terms of the number of incoming transactions, and $V_2$ corresponds to the top-100 nodes in terms of the number of outgoing transactions. As the vertex set we use $V = V_1 \cup V_2$. 
    \item We count the transactions between nodes of $V$, as well as the transactions from $V$ to the outside system, and from $V$ to the inside system for each round (week of the year 2018). For nodes with zero outgoing transactions we add one transaction. 
    \item We create random liabilities as follows 
    \begin{align*}
        \ell_{ij}(t) & = \one \left \{ \text {\# of transactions $i \to j$ at round $t$ is $> 0$} \right \} \\ & \times \mathrm{Gamma} \left ( \text {\# of transactions $i \to j$ at round $t$}, 1  \right ) & \forall i, j \in [n], t \in [T], \\
        b_i(t) & = \max \left \{ 1, \mathrm{Gamma} \left ( \text{\# of transactions from $i$ to outside}, 1 \right ) \right \} & \forall i \in [n], t \in [T], \\
        c_i(t) & = \one \left \{ \text {\# of transactions from outside to $i$ is $> 0$} \right \} \\ & \times \mathrm{Gamma} \left ( \text{\# of transactions from outside to $i$}, 1 \right ) & \forall i \in [n], t \in [T].
    \end{align*}
    
    Note that for $b_i(t)$ we assert a \emph{positive} value in order for \cref{assumption:uniqueness} to hold. 
\end{compactitem}

In \cref{fig:venmo_fractional} we use a budget of $B = 100$ and $L = 100 \cdot \one$ for $N = 50$ instances of randomly initialized networks and fractional interventions. In \cref{fig:venmo_discrete}, we use the same parameters but for discrete interventions. We plot the top-5 Venmo account usernames (the usernames were included in the \textbf{public} data release and are available online) with the highest clearing payments, as well as the cumulative reward amassed by \cref{alg:fractional_interventions} and \cref{alg:randomized_rounding}.

\subsection{Financial Networks Through Cellphone Mobility Data} 

We use mobility data from the SafeGraph platform spanning the period of December 2020 to April 2021. The static version of the data (i.e. with one round) has been introduced in \cite{papachristou2021allocating}. Our paper extends the experiment to the multi-period setting, resulting in five financial networks each one of which for each month of the period of December 2020 to April 2021. SafeGraph offers insights on mobility patterns of people between Census Block Groups (CBGs) and Points of Interest (POIs), such as grocery stores, fitness centers, and religious establishments. We create a sequence of bipartite liability networks with monthly granularity between CBGs and POIs using the mobility data as a proxy to create liabilities. Roughly, given an initial pair of coordinates (latitude and longitude) we identify $k$-nearest neighboring CBGs and, subsequently, the POIs that these CBGs interact with. Each POI contains data from the visits of unique mobile devices from the corresponding CBGs, where we assume that each distinct device represents a unique person. SafeGraph also logs data about the dwelling time of devices, namely for how long each device stays in a POI, which we use to classify the visitors to two categories: customers and employees. For the employees, we add a financial liability that corresponds to the average monthly wage of such an employee for the corresponding POI as it is determined by its NAICS code. For the customers, we add edges from the corresponding CBGs to the POIs with value being the average consumption value for the specific POI category (given by the POI's NAICS code) as it is defined by the U.S. Economic Census. For each CBG node, we estimate the average size of households per CBG, the average income level and the percentage of people that belong to a minority group, which we use in order to estimate the assets and liabilities (internal and external) of CBGs. For the interventions of CBGs we use the US Cares Act rules to determine the interventions. Regarding the POI interventions we used data from loans provided in April 2020 as part of the SBA Paycheck Protection Program (PPP) and impute the interventions of the nodes that do not have intervention information. The processed data was \emph{calibrated} similarly to \cite{papachristou2021allocating} to reflect the specific interaction between POIs and CBGs across the different time spans. Regarding the intervention budget we use a budget of $B = 500 \mathrm K$. The results for the most important nodes (either POIs or CBGs) for the fractional (resp. discrete) allocations are depicted in \cref{fig:safegraph_fractional} (resp. \cref{fig:safegraph_discrete}). A more detailed explanation of the dataset creation process is presented in \cref{sec:data_addendum}. 

\subsection{Fairness} \label{sec:experiments_fairness}

To induce fair allocations (in the fractional case) we run \cref{alg:fractional_interventions} with the constraints of \cref{sec:fairness}. In \cref{fig:fairness} we perform fairness-constrained optimization for the case of synthetic data for $B = 50$ and $L = 50 \cdot \one$ (as in \cref{sec:synthetic_data}). We plot $\mathrm{SGC}(t)$ for the case of unconstrained fairness ($g(t) = 1$) and the case of constrained fairness ($g(t) = 0.5$). We plot the interventions after the fairness constrained optimization (for the non-fair allocations see \cref{fig:synthetic_data}). For all datasets we run fairness-constrained optimization experiments using the SGC and the (Standard) GC as fairness measures and present the results in \cref{fig:interventions_vs_payments} and \cref{fig:betas_vs_payments}. We report the PoF for the corresponding experiments in \cref{tab:pof}. For the datasets that involve randomness, we take the average over 50 independent runs. 

\begin{table}[h]
    \footnotesize
    \centering
    \begin{tabular}{lllll}
    \toprule
        Fairness Constraint & Synthetic ($B = 50$) & TLC ($B = 100$) & Venmo ($B = 50$) & Safegraph ($B = 500 \mathrm K$) \\
    \midrule
        Spatial GC ($g(t) = 0.5$) & 1.001 & 1.007 & 1.019 & 1.037 \\
        Standard GC ($g(t) = 0.5$) & 1.011 & 1.009 & 1.017 & 1.102 \\
    \bottomrule
    \end{tabular}
    \caption{\small \em Price of Fairness. The payments and allocations can be found at \cref{fig:interventions_vs_payments}.}
    \label{tab:pof}
\end{table}

\begin{figure}
    \centering
    \subfigure[Aggregate statistics]{\includegraphics[width=0.24\textwidth]{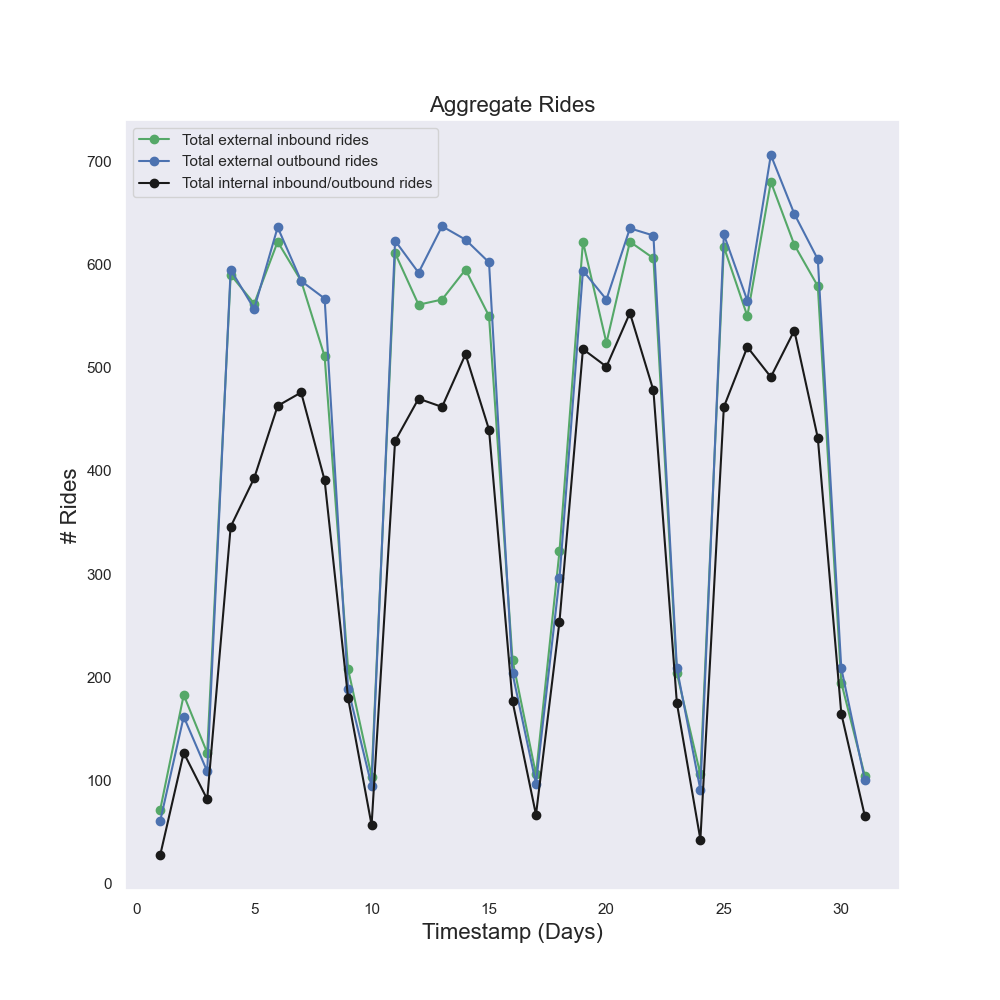}\label{fig:tlc_statistics_aggregate}}
    \subfigure[Top-5  Zones]{\includegraphics[width=0.24\textwidth]{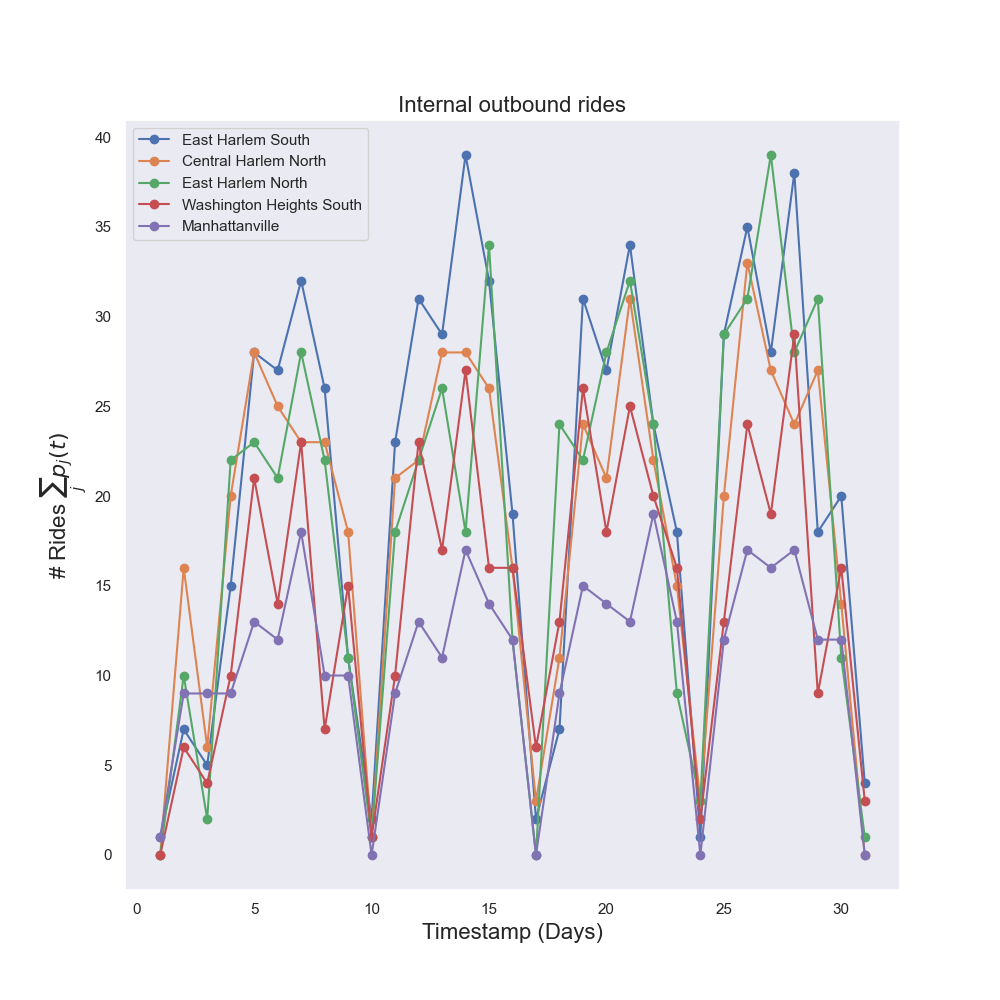}\label{fig:tlc_statistics_internal_outbound}}
    \subfigure[Top-5 Zones]{\includegraphics[width=0.24\textwidth]{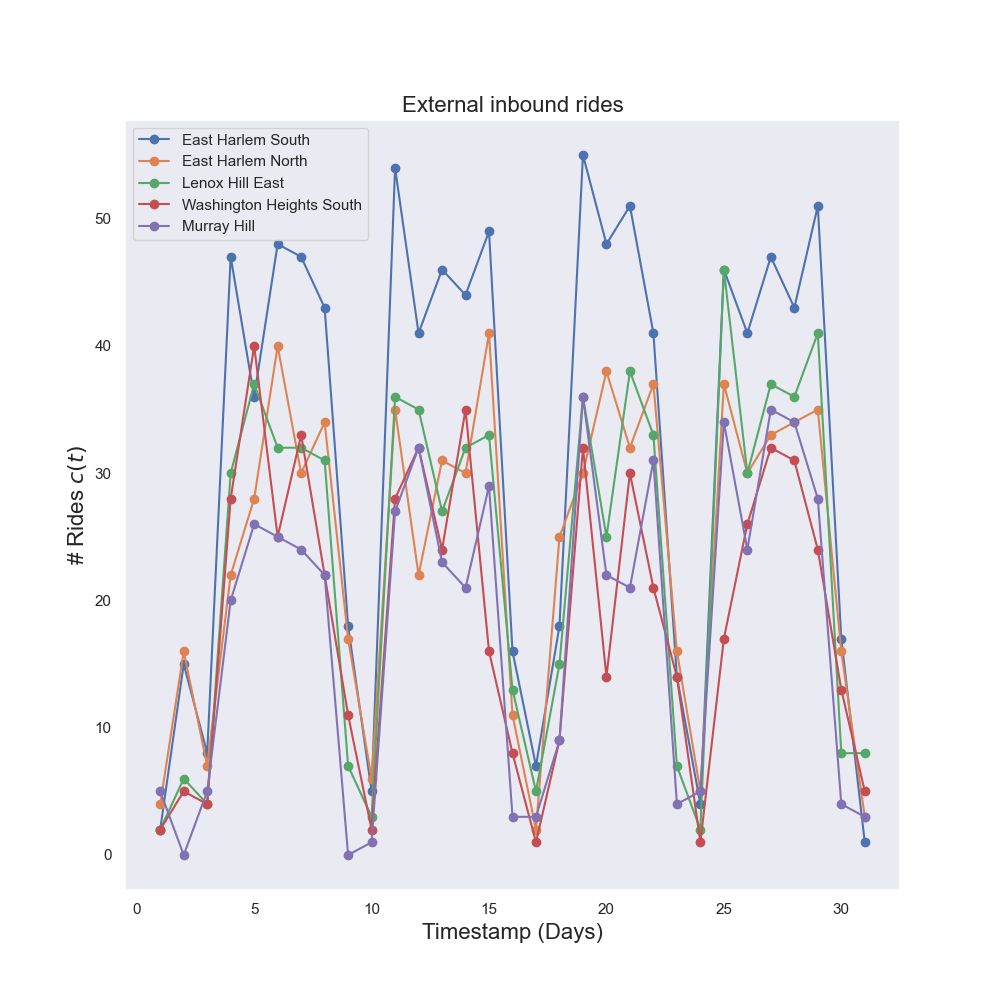}\label{fig:tlc_statistics_external_inbound}} 
    \subfigure[Top-5 Zones]{\includegraphics[width=0.24\textwidth]{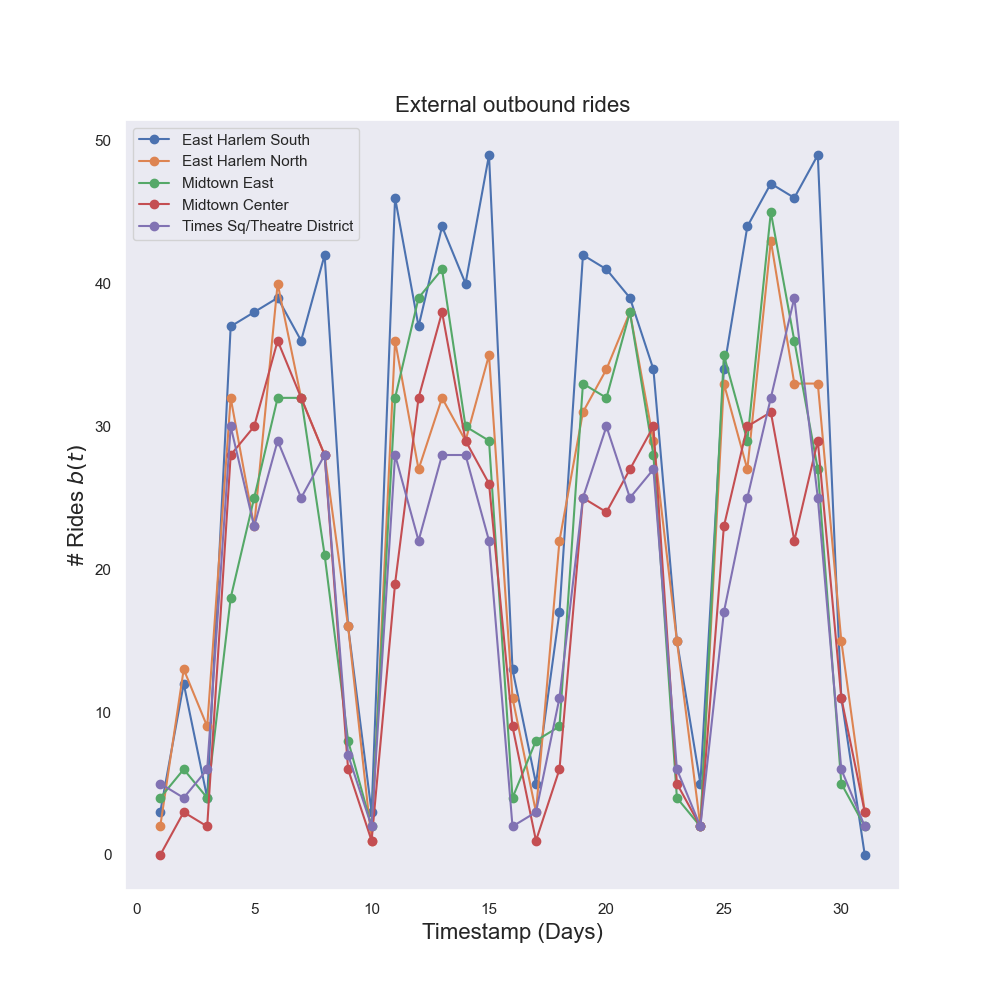}\label{fig:tlc_statistics_external_outbound}}
    \caption{\small \em NYC TLC data statistics for FHV during January 2021 for Manhattan.}
    \label{fig:tlc_statistics}
\end{figure}

\begin{figure}
    \centering
    \subfigure[Clearing Payments and Cumulative Reward]{\includegraphics[width=0.49\textwidth]{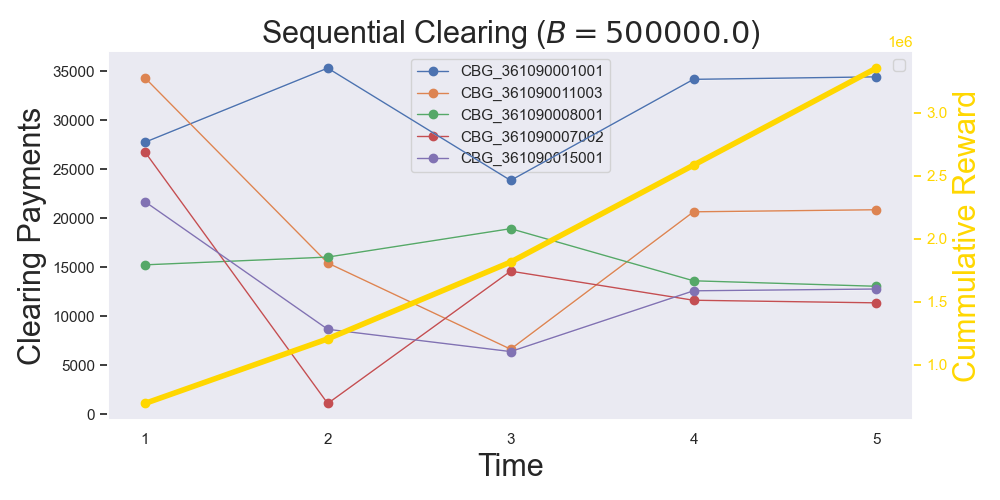}}
    \subfigure[Interventions]{\includegraphics[width=0.49\textwidth]{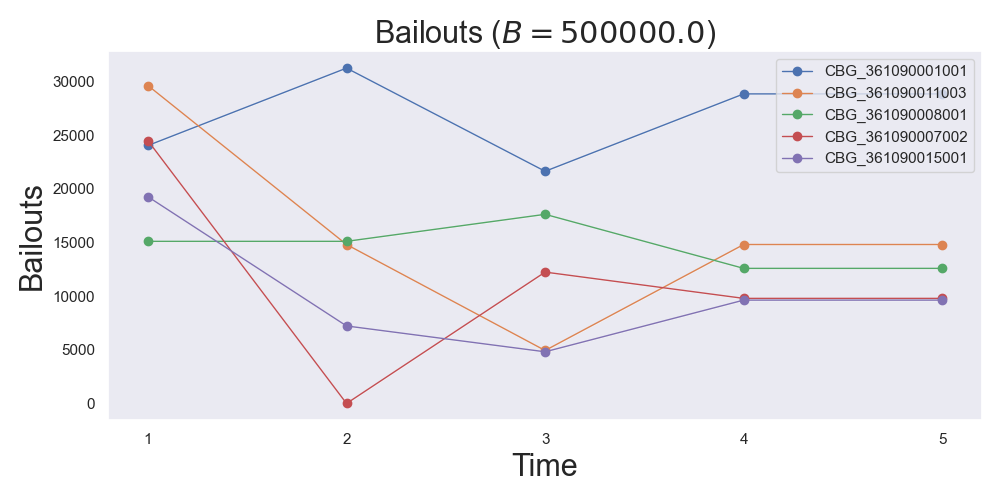}}
    \caption{\small \em Discrete intervention Scenario for SafeGraph Data (December 2020 to April 2021) for $B = 500 \mathrm K$ and custom $L$.}
    \label{fig:safegraph_discrete}
\end{figure}

\begin{figure}
    \centering
    \subfigure[SGC ($g(t) = 1$)]{\includegraphics[width=0.49\textwidth]{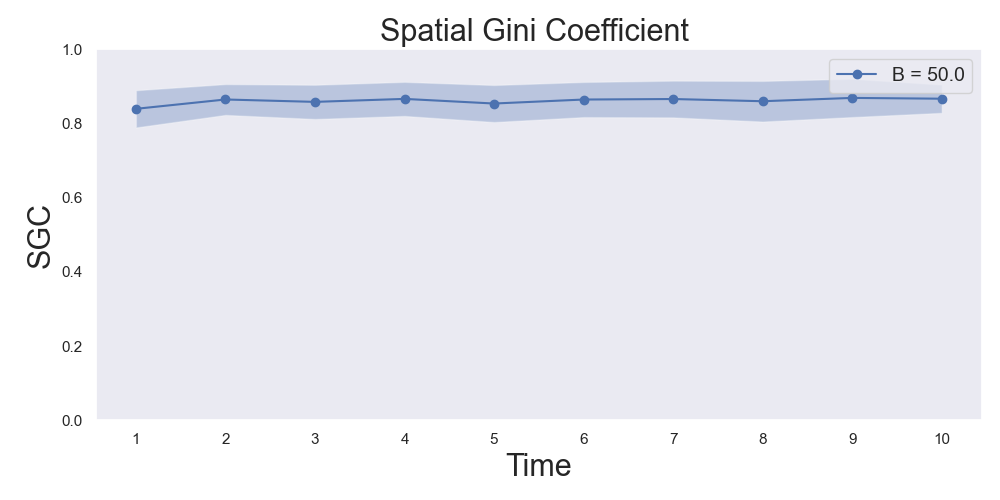}}
    \subfigure[SGC ($g(t) = 0.5$)]{\includegraphics[width=0.49\textwidth]{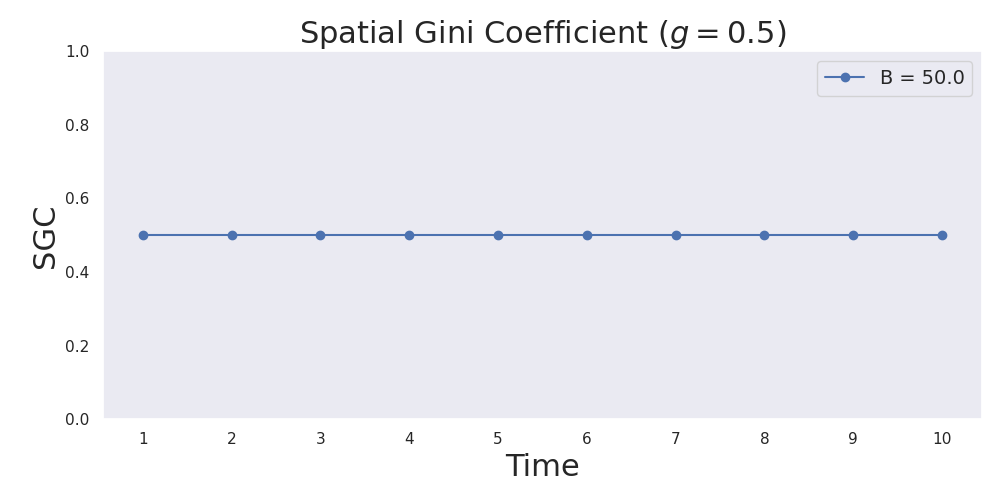}}   
    \subfigure[Interventions ($g(t) = 0.5$)]{\includegraphics[width=0.49\textwidth]{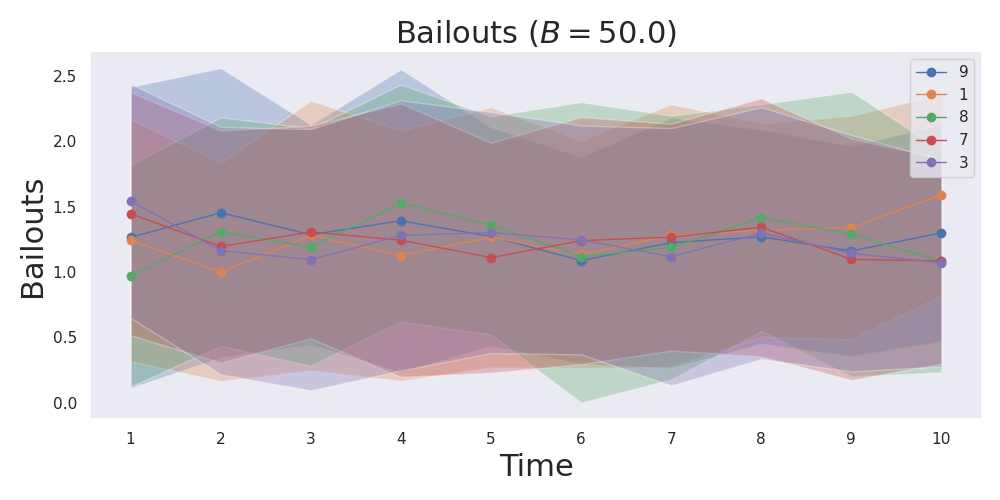}}    
    \subfigure[Clearing Payments ($g(t) = 0.5$)]{\includegraphics[width=0.49\textwidth]{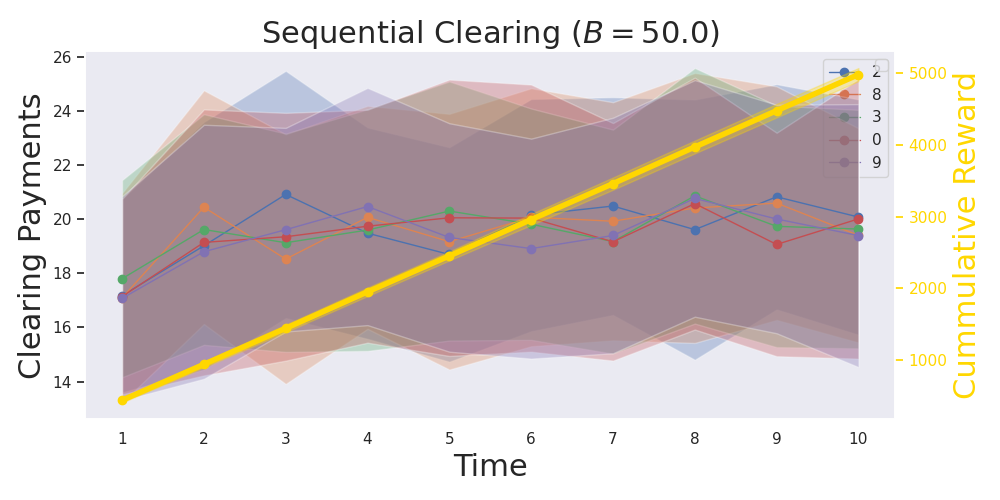}}    
    \caption{\small \em Fairness in Synthetic Core-periphery Data Experiments (see \cref{sec:synthetic_data}).} 
    \label{fig:fairness}
\end{figure}

\section{Insights} \label{sec:insights}

We study the fractional intervention allocations\footnote{Because of the rounding procedure of \cref{sec:discrete_allocations}, discrete allocations are concentrated around their expectation and therefore, the scatter plots look similar to the fractional case.}  as a function of network characteristics. We consider the relationship between interventions, total payments, and between interventions and (mean) financial connectivity for the three datasets of interest. We plot the relationship between the average total interventions and the network characteristics subject to fairness ($g(t) = 0.5$) and no-fairness ($g(t) = 1$) constraints. For the fairness constraints, we consider both the Spatial Gini Coefficient (SGC) and the (Standard) Gini Coefficient (GC). We fit an ordinary least squares model (OLS) to the points and a robust linear model via iteratively reweighted least squares with Huber's T criterion.

\medskip

\noindent \emph{Interventions vs. Payments.} In \cref{fig:interventions_vs_payments},  we plot the average total interventions and the average total payments a node can make. We observe that for the synthetic core-periphery data, the core and the peripheral nodes are split in the plot yielding a correlation coefficient of $R^2 = 0.74$. For the TLC data ($R^2 = 0.36$), we observe that the two quantities generally follow a linear relation (except for a few outliers). The Venmo data shows a positive correlation ($R^2 = 0.61$) between the interventions and the payments. This is suggestive of the fact that nodes that are generally bailed out are central in the system, so bailing them out benefits the connections they have. For the SafeGraph data, we observe a high correlation between the payments and the allotted interventions ($R^2 = 0.887$). However, we observe a ``banding phenomenon'' where nodes that have varying payments in the network receive a small amount of total interventions across time. This phenomenon is due to the following reasons: (i) at each round, some CBGs do not participate in the contagion (i.e., POIs do not get visits from them), which amount to a small total intervention, and (ii) the imputed interventions (see \cref{sec:data_addendum}) are similar for relatively large groups of nodes. With respect to the fairness-adjusted metrics wrt. the SGC, we observe that imposing the SGC constraint reduces the correlation between the two variables since the SGC constraint balances interventions between adjacent nodes, which may result in nodes that are less important to the contagion process getting high interventions. In the SafeGraph data, imposing fairness constraints reinforces the ``banding phenomenon'' and creates zones of interventions. Also, when the (Standard) GC constraint is enforced, the correlation reduces even further since this constraint penalizes unequal distribution of the interventions across all nodes. 

\medskip

\noindent \emph{Price of Fairness.} Regarding the PoF, \cref{tab:pof} indicates that all fairness measures achieve a PoF close to 1 in all datasets. This suggests that, generally, our allocation algorithm can respect algorithmic fairness constraints with a minimal cost to the total welfare. 

\medskip

\noindent \emph{Financial Connectivity vs. Payments.} Similarly, in \cref{fig:betas_vs_payments}, we observe that in the no-fairness setting the interventions are positively correlated with the financial connectivities, i.e. nodes with more (in-total) liabilities towards the internal network get higher interventions. More specifically, on the synthetic core-periphery data, core nodes and periphery nodes are separated similarly to \cref{fig:interventions_vs_payments}, as well as in Venmo data there is a cluster of nodes that has both very little exposure to the outside system as well as small interventions. For the SafeGraph data, we observe the same banding phenomenon as in \cref{fig:interventions_vs_payments} which is structurally due to the same reason (i.e. CBGs which have limited participation  in most rounds). Furthermore, when fairness is introduced the correlation between the two variables changes, as in the case of the synthetic data, since we trade-off fairness with adjacent nodes to bailing out nodes with potentially high contribution in the network. Most, notably, in the synthetic core-periphery network the correlation between the variables becomes negative when fairness is introduced, from positive that was in the absence of fairness, indicating that the peripheral nodes (points on the left ``cluster'') get higher interventions than the core (points on the right ``cluster''). In the TLC dataset, the correlation remains positive however the outliers become more spread-out, again owing to the equitable spread of interventions across graph neighborhoods. The Venmo data exhibits similar behaviour as in the case of no-fairness. We observe the banding phenomenon (similarly to \cref{fig:interventions_vs_payments}) for the SafeGraph data after imposing fairness constraints. Also, when the (Standard) GC constraint is enforced, the correlation reduces further similarly to \cref{fig:interventions_vs_payments}.

\begin{figure}
    \centering
    \subfigure[Synthetic Core-periphery ($g(t) = 1$)]{\includegraphics[width=0.24\textwidth]{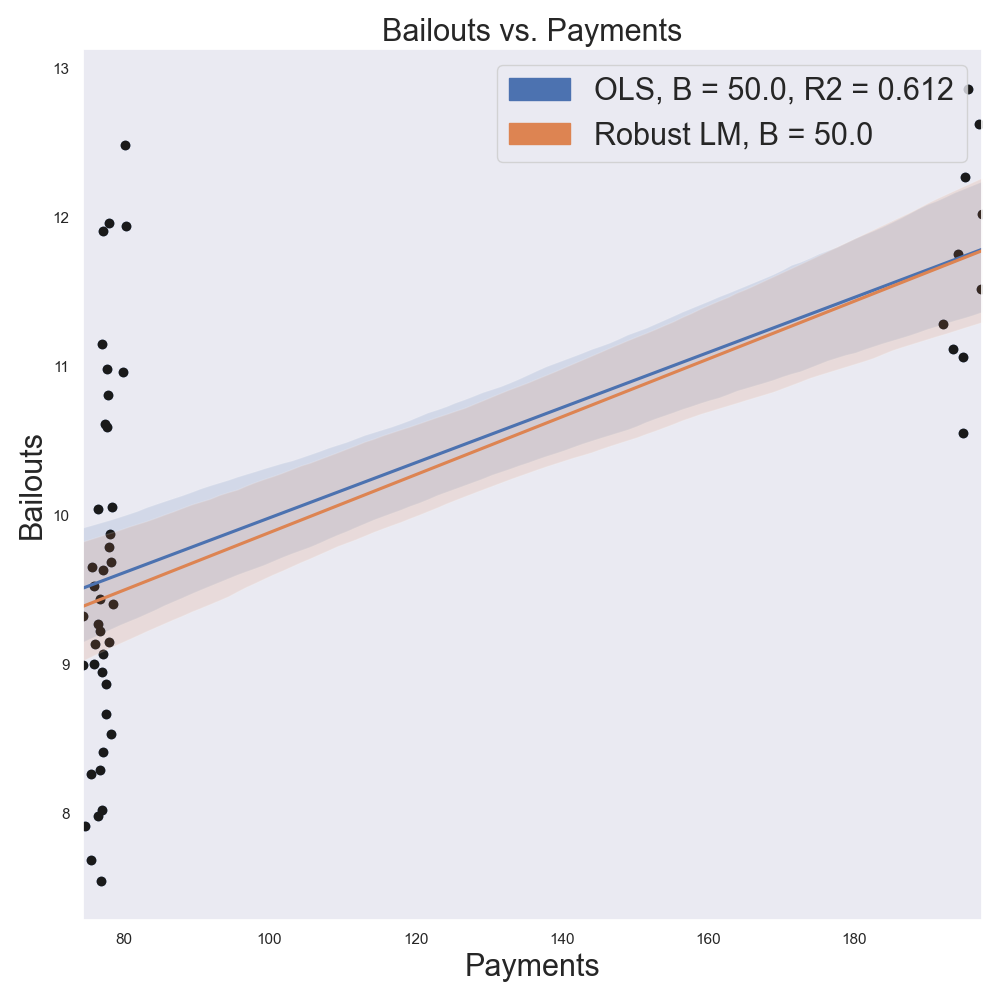}}
    \subfigure[TLC ($g(t) = 1$)]{\includegraphics[width=0.24\textwidth]{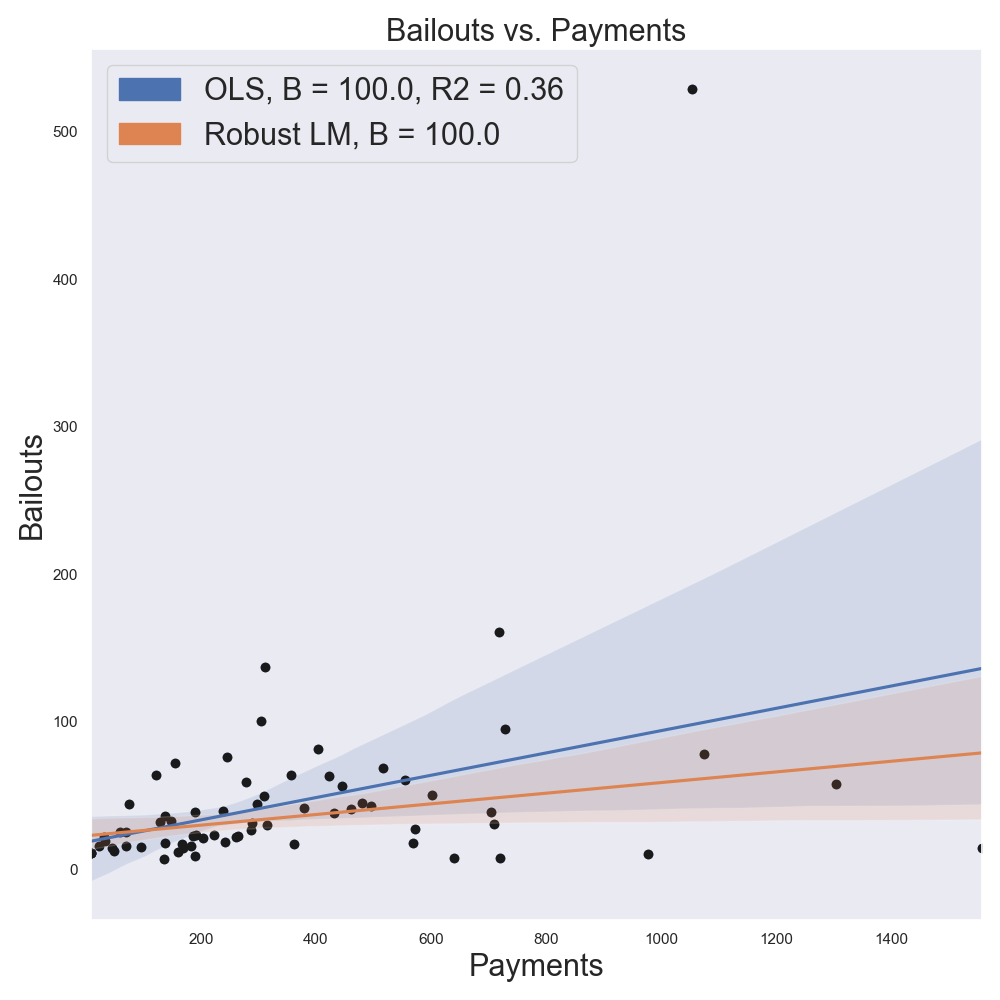}}
    \subfigure[Venmo ($g(t) = 1$)]{\includegraphics[width=0.24\textwidth]{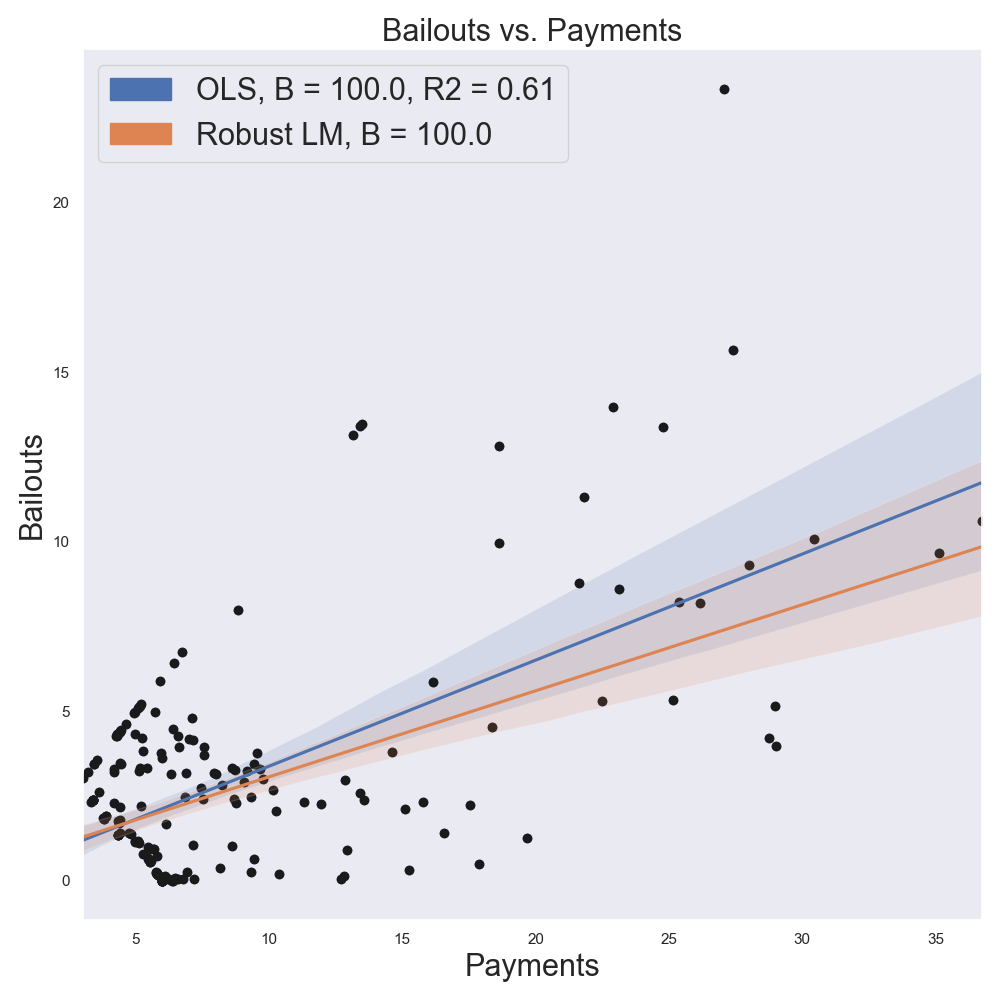}}
    \subfigure[SafeGraph ($g(t) = 1$)]{\includegraphics[width=0.24\textwidth]{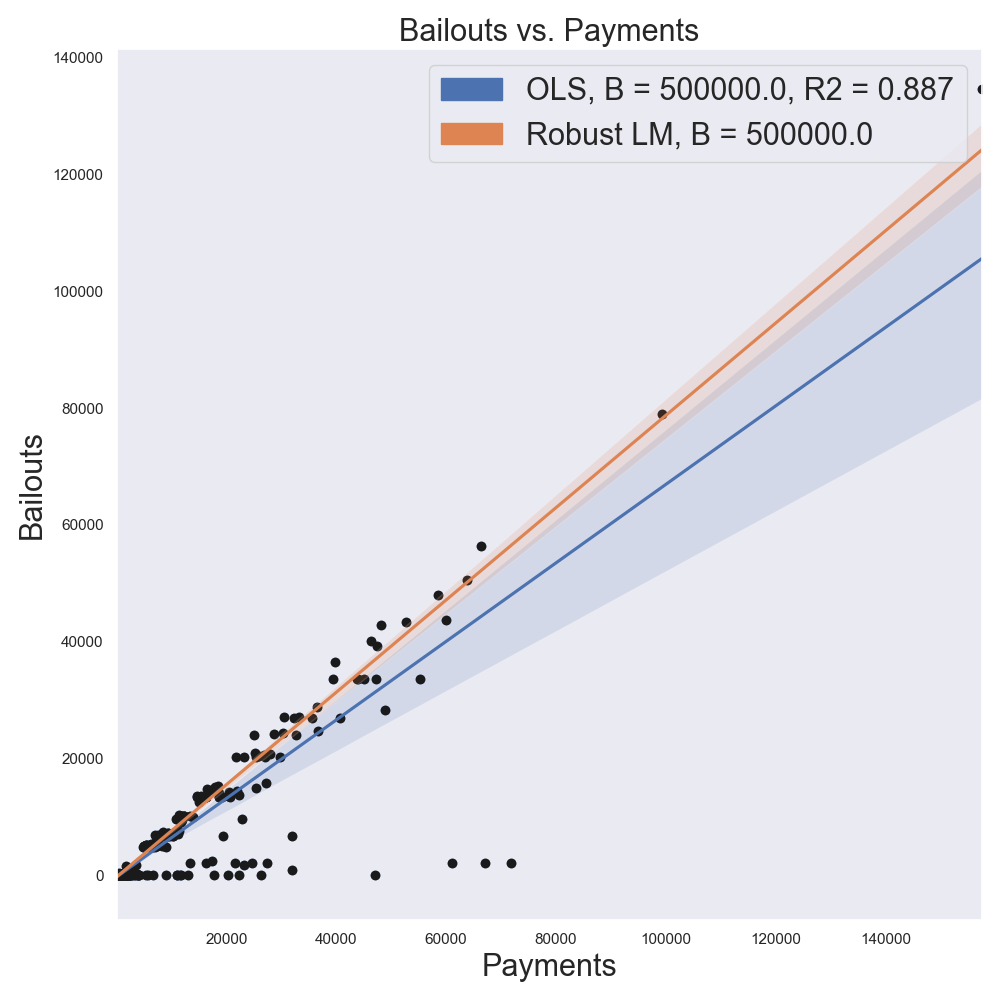}}
    \subfigure[Synthetic Core-periphery (SGC, $g(t) = 0.5$)]{\includegraphics[width=0.24\textwidth]{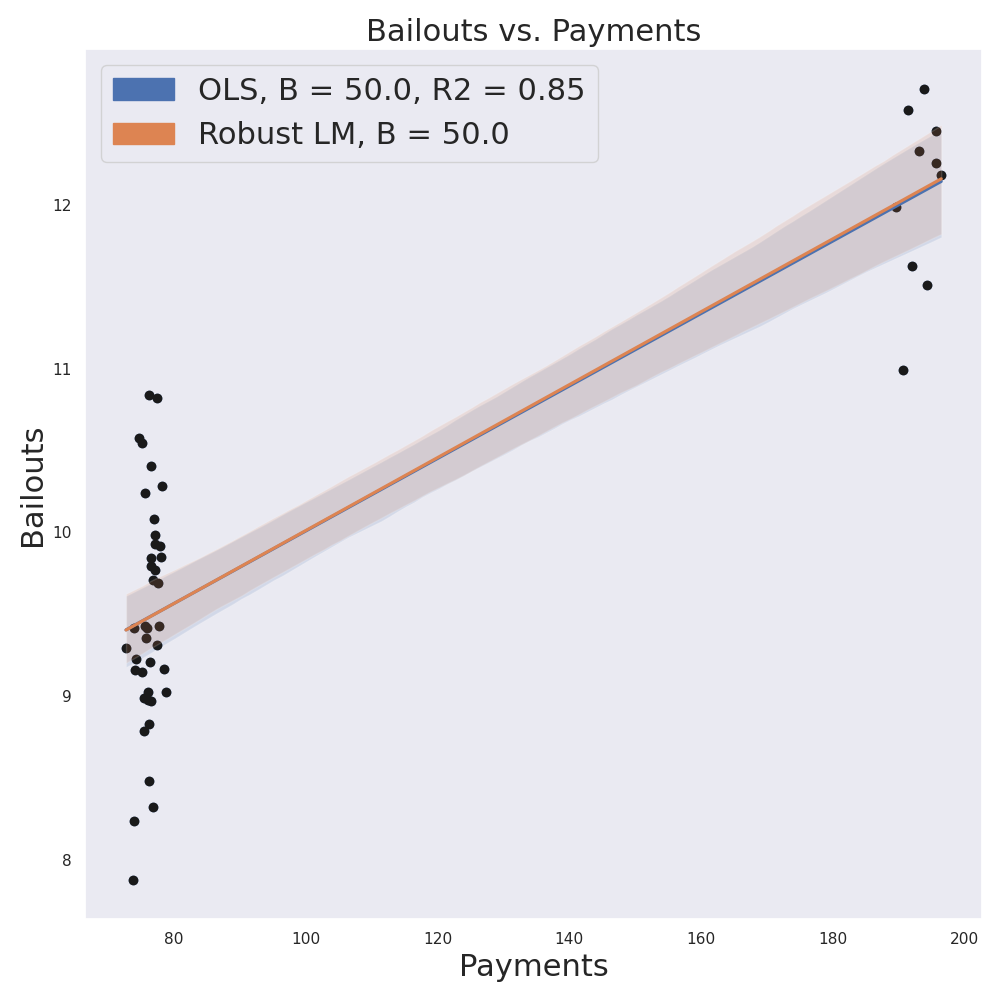}}
    \subfigure[TLC (SGC,  $g(t) = 0.5$)]{\includegraphics[width=0.24\textwidth]{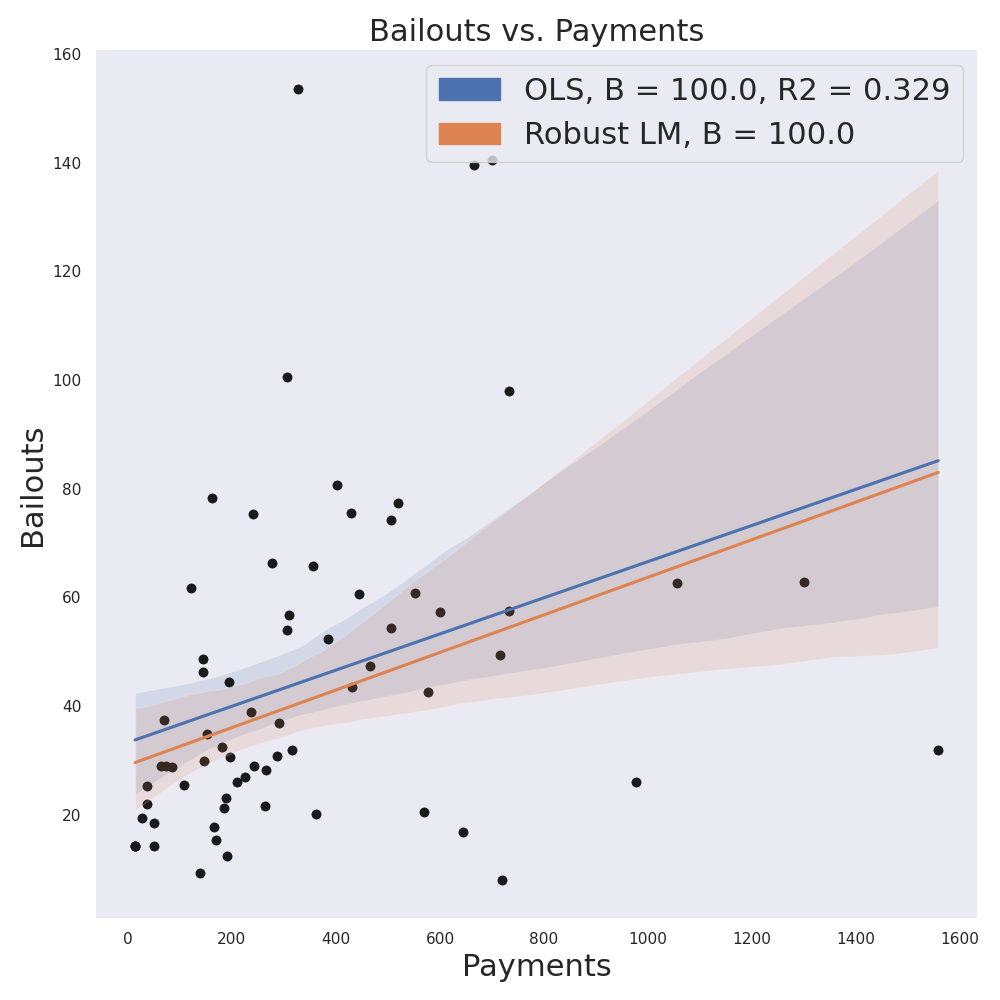}}
    \subfigure[Venmo (SGC,  $g(t) = 0.5$)]{\includegraphics[width=0.24\textwidth]{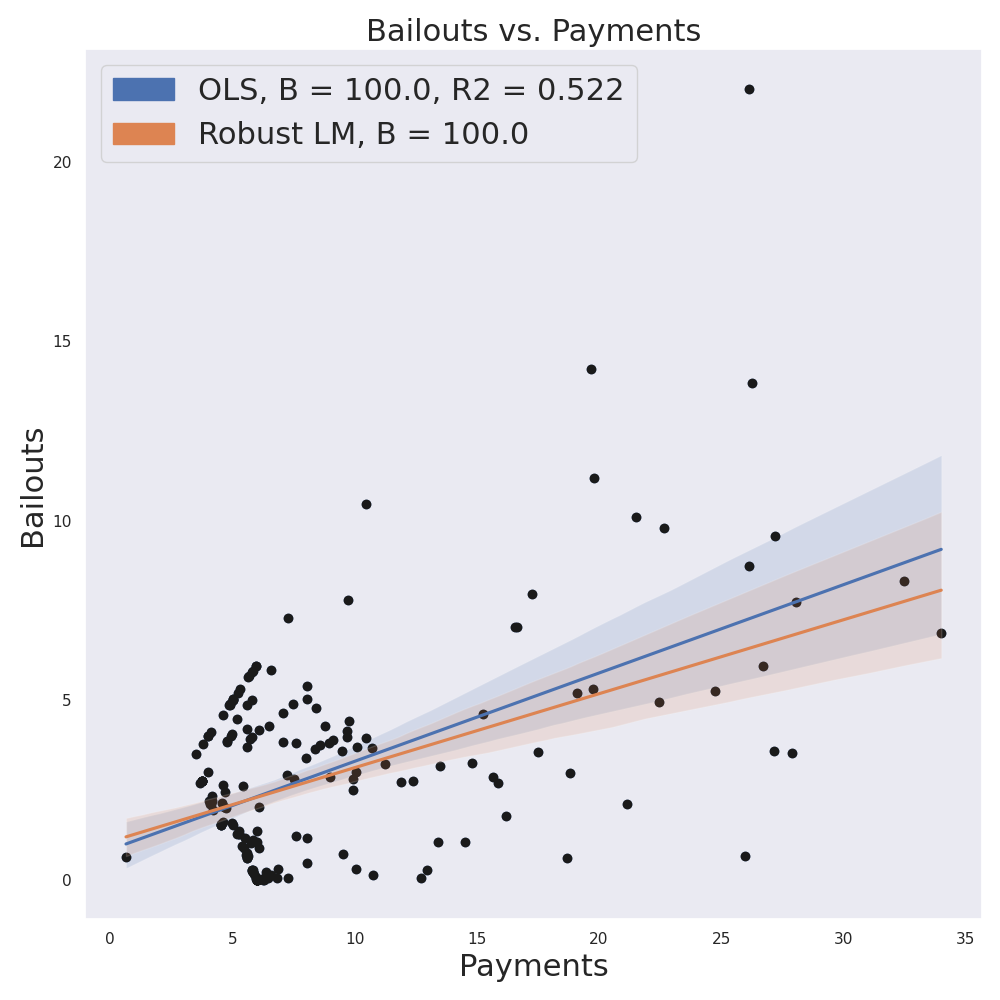}}
    \subfigure[SafeGraph (SGC, $g(t)=0.5$)]{\includegraphics[width=0.24\textwidth]{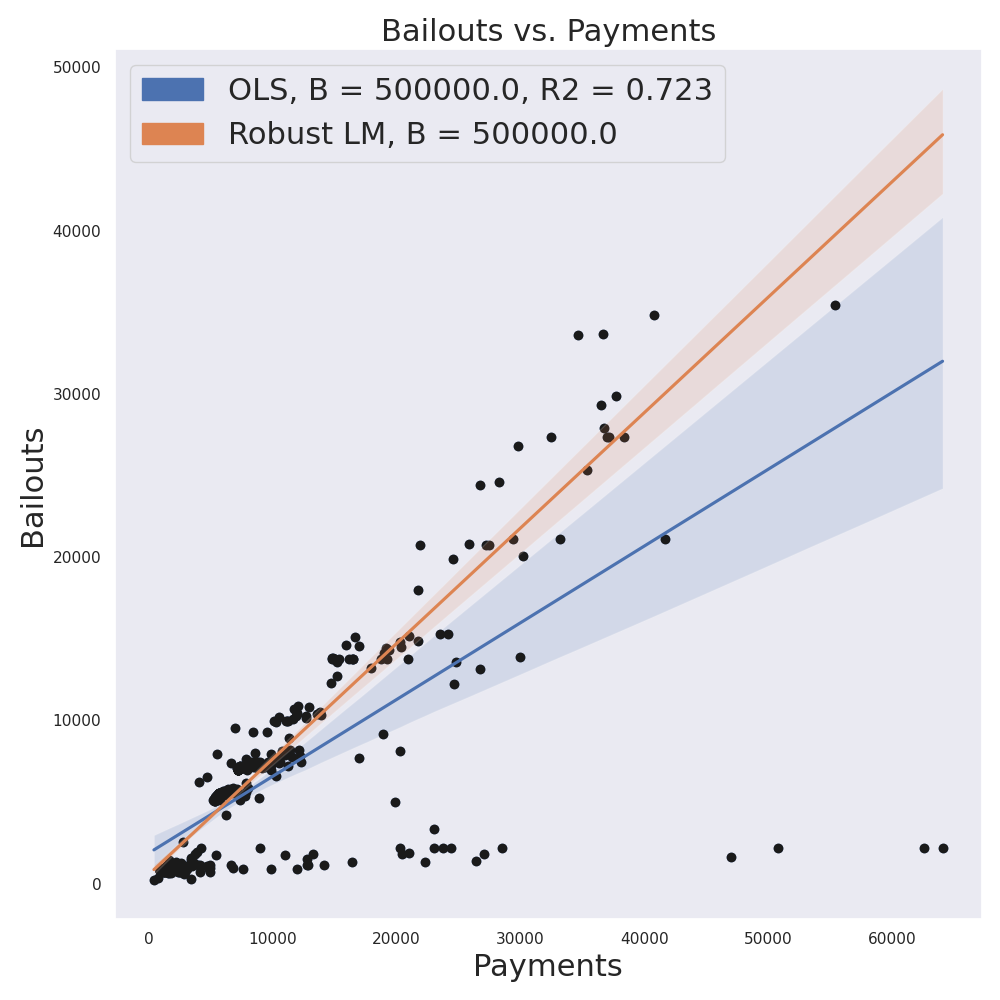}}
    \subfigure[Synthetic Core-periphery (GC,  $g(t) = 0.5$)]{\includegraphics[width=0.24\textwidth]{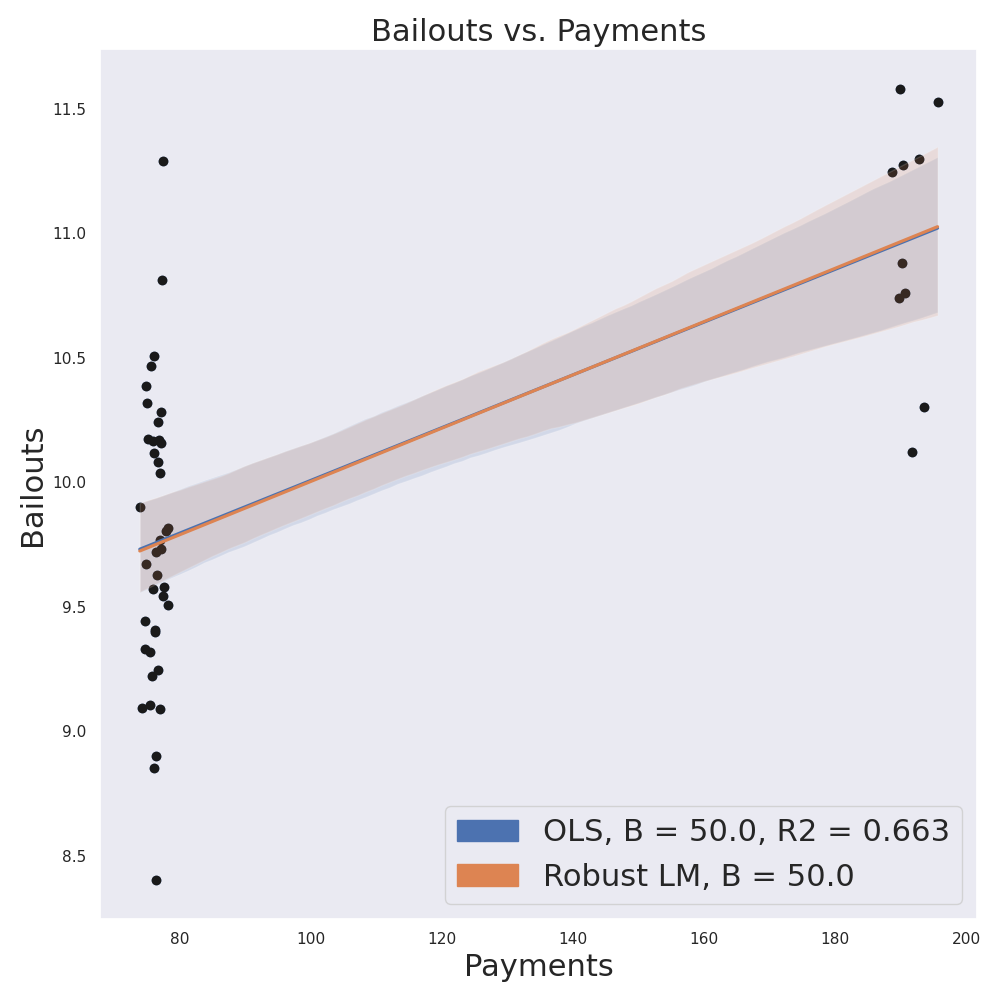}}
    \subfigure[TLC (GC, $g(t) = 0.5$)]{\includegraphics[width=0.24\textwidth]{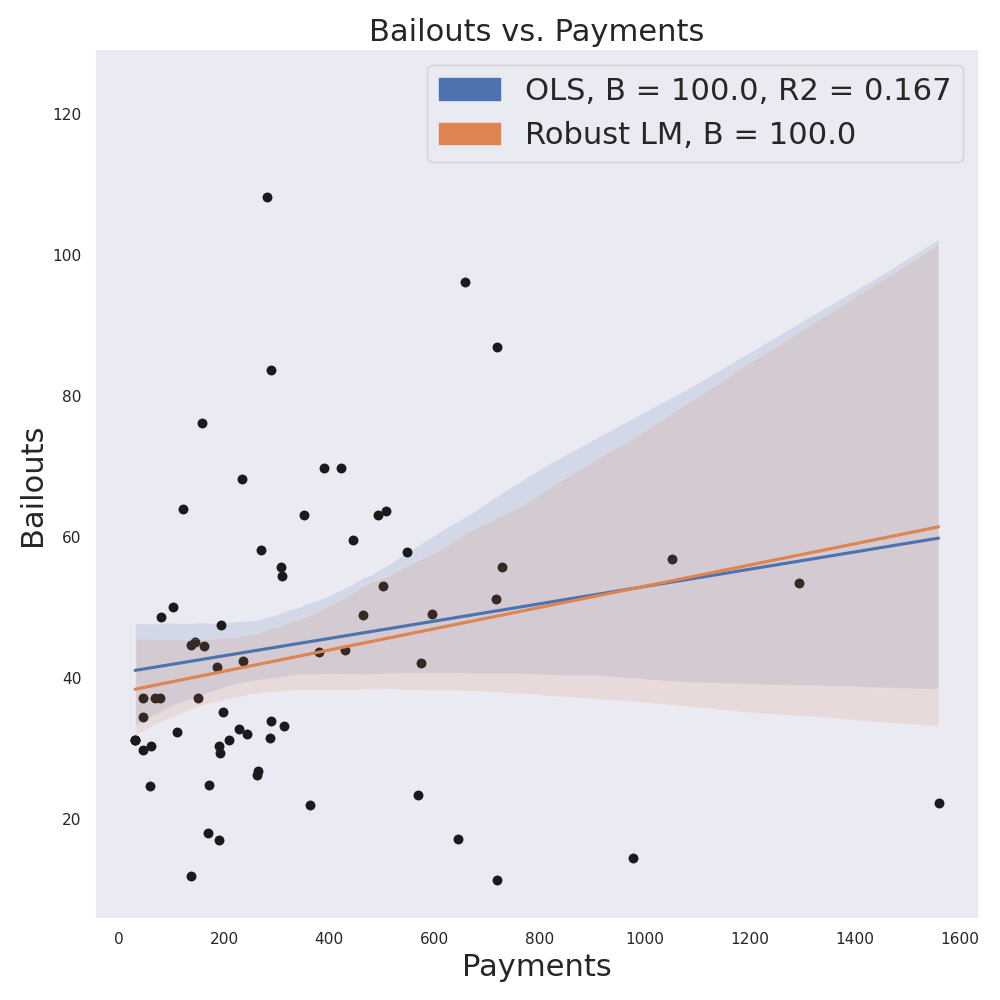}}
    \subfigure[Venmo (GC, $g(t) = 0.5$)]{\includegraphics[width=0.24\textwidth]{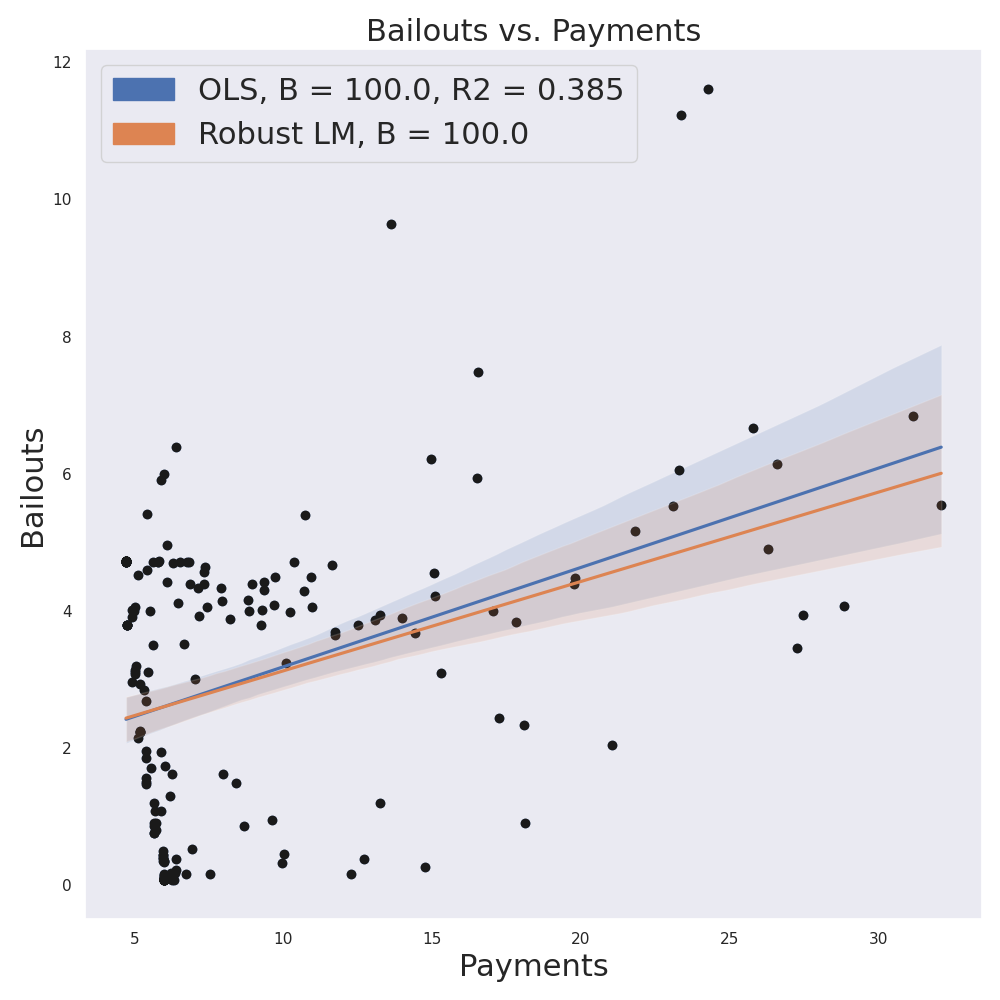}}
    \subfigure[SafeGraph (GC, $g(t) = 0.5$)]{\includegraphics[width=0.24\textwidth]{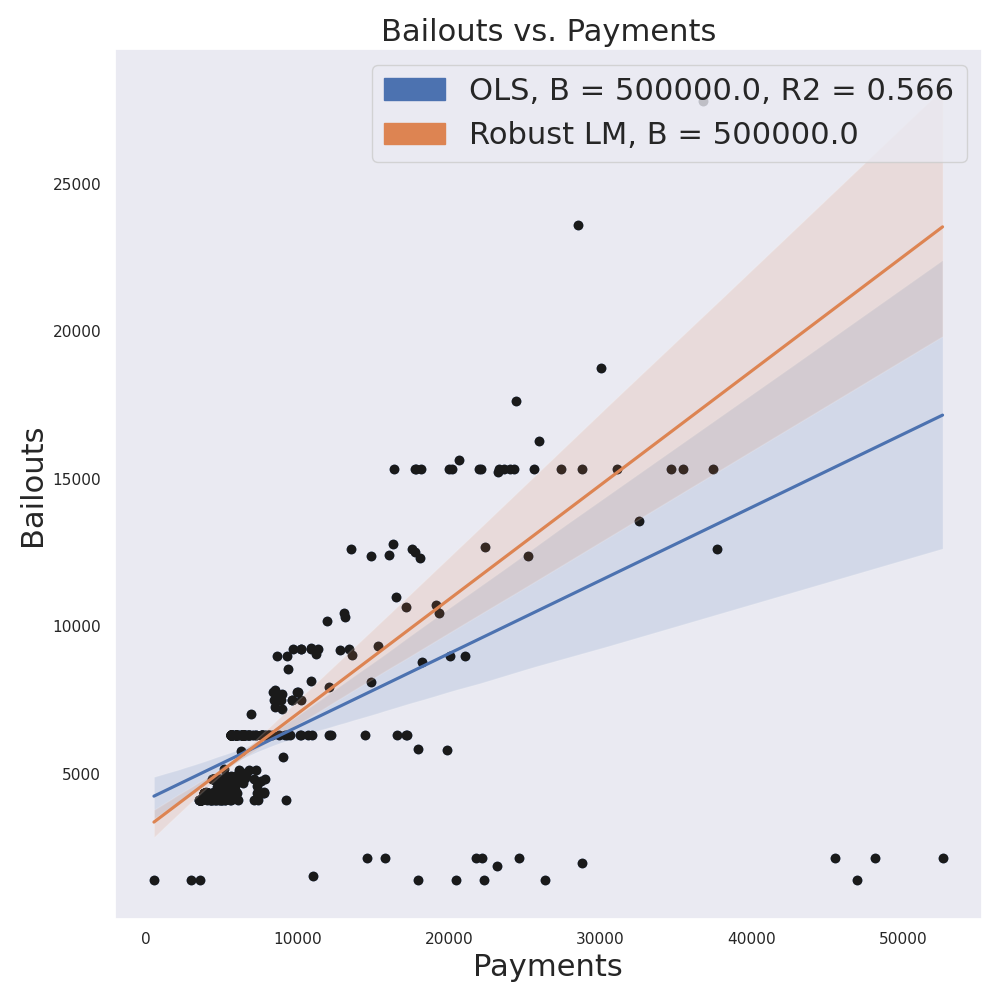}}
    \caption{\small \em Relation between the total payments of nodes and the total interventions received. We use $L = B \cdot \one$. SGC = Spatial Gini Coefficient, GC = (Standard) Gini Coefficient.}
    \label{fig:interventions_vs_payments}
\end{figure}

\begin{figure}
    \centering
    \subfigure[Synthetic Core-periphery ($g(t) = 1$)]{\includegraphics[width=0.24\textwidth]{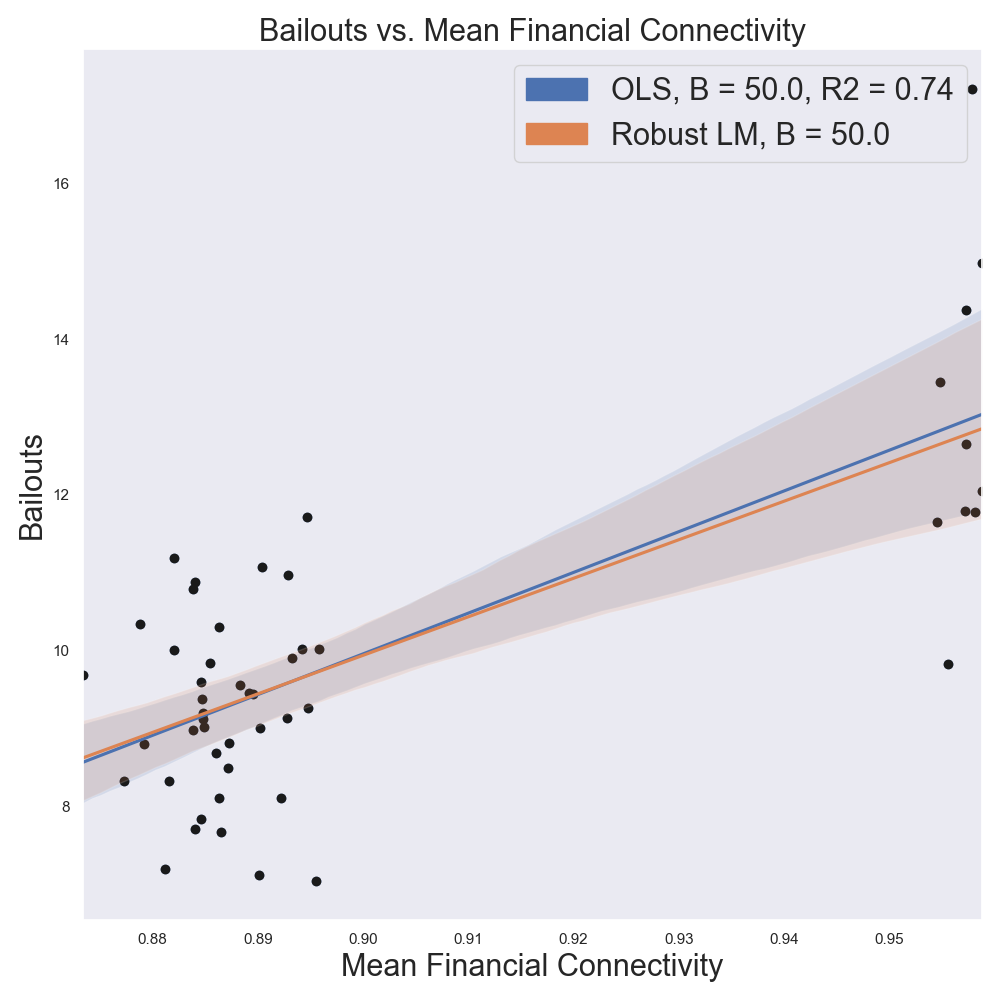}}
    \subfigure[TLC ($g(t) = 1$)]{\includegraphics[width=0.24\textwidth]{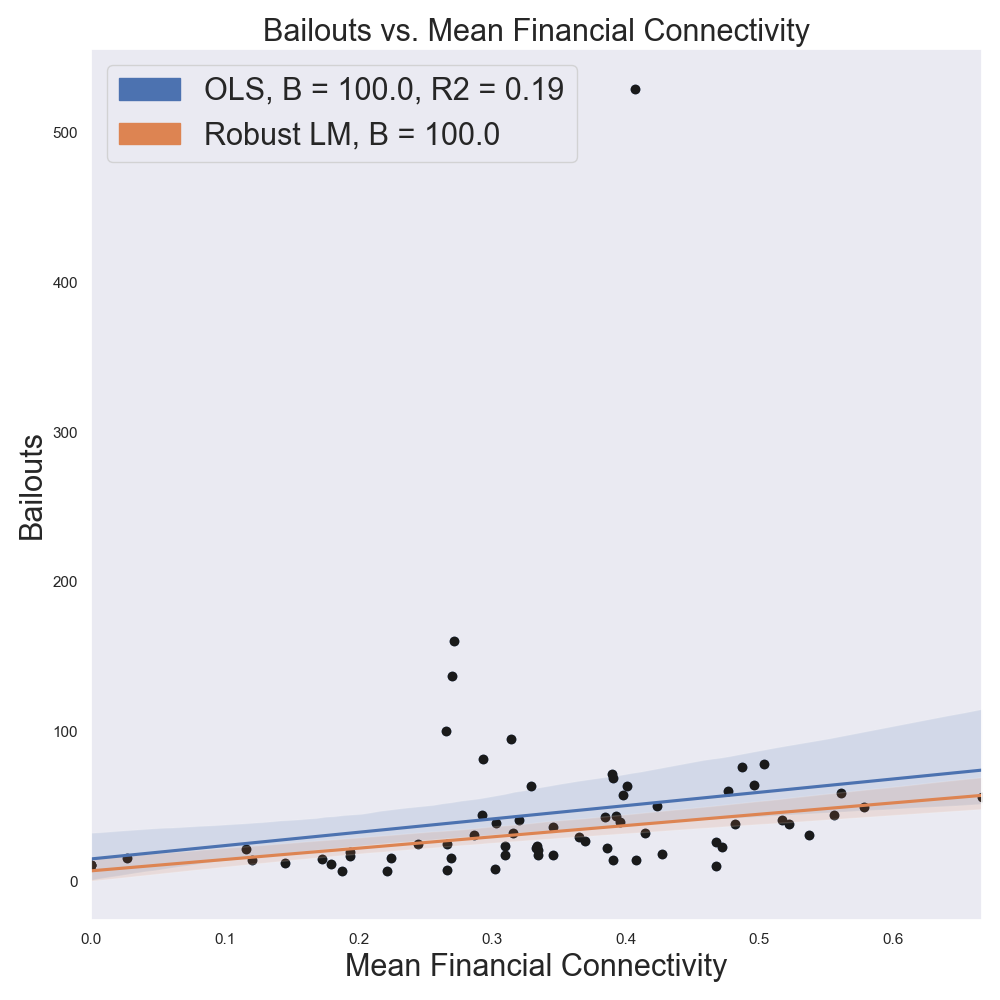}}
    \subfigure[Venmo ($g(t) = 1$)]{\includegraphics[width=0.24\textwidth]{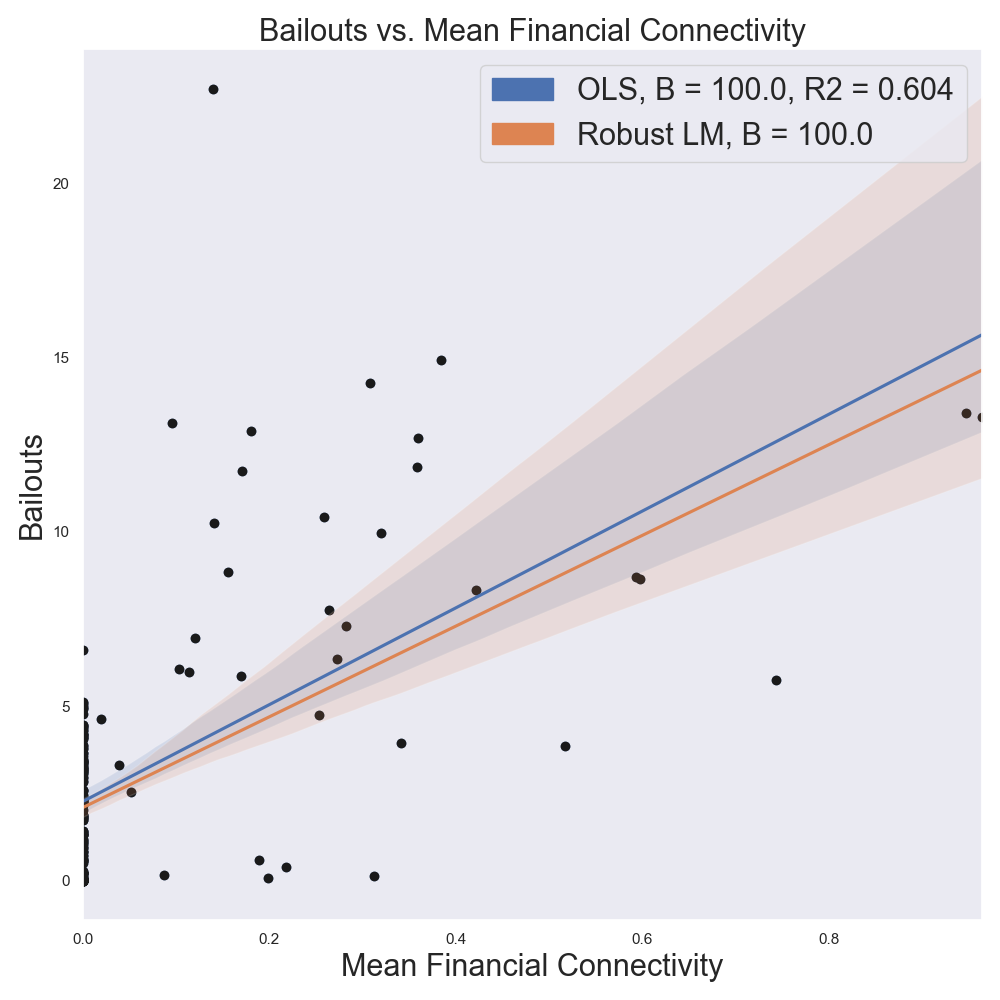}}
    \subfigure[SafeGraph ($g(t) = 1$)]{\includegraphics[width=0.24\textwidth]{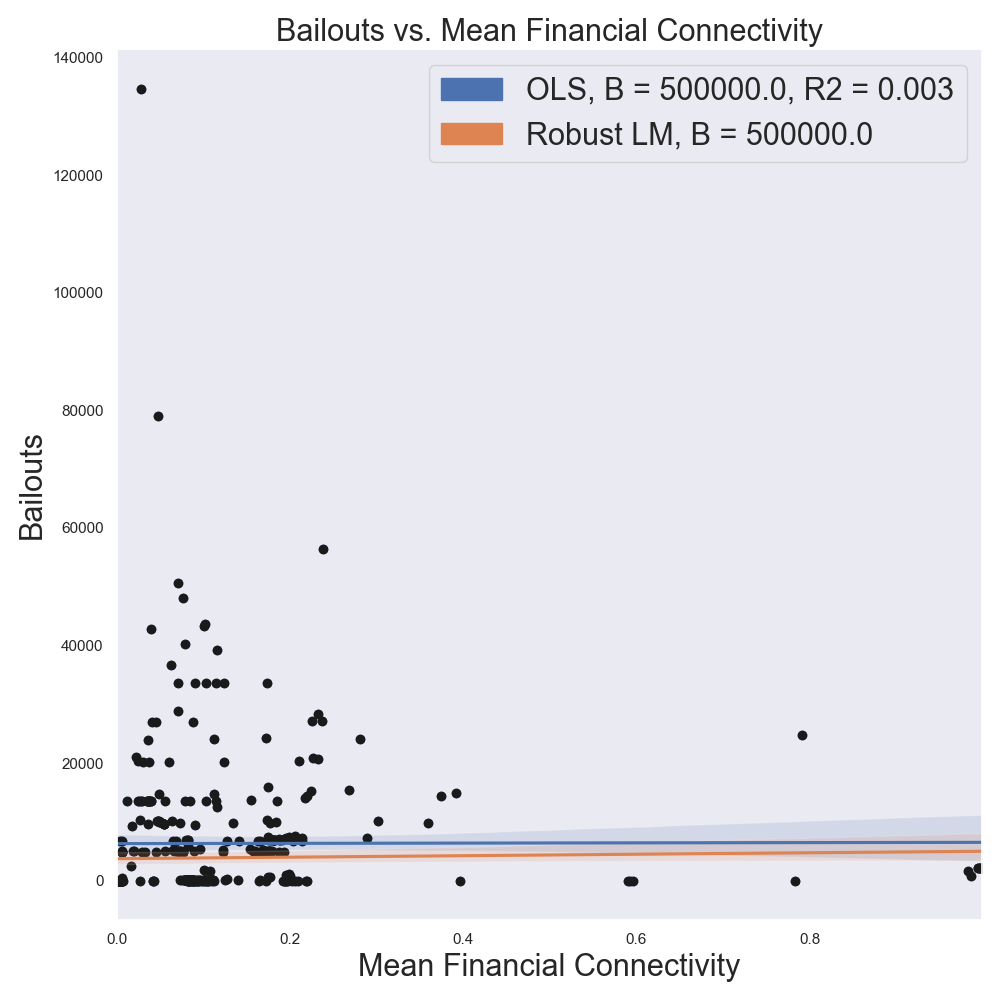}}
    \subfigure[Synthetic Core-periphery (SGC,  $g(t) = 0.5$)]{\includegraphics[width=0.24\textwidth]{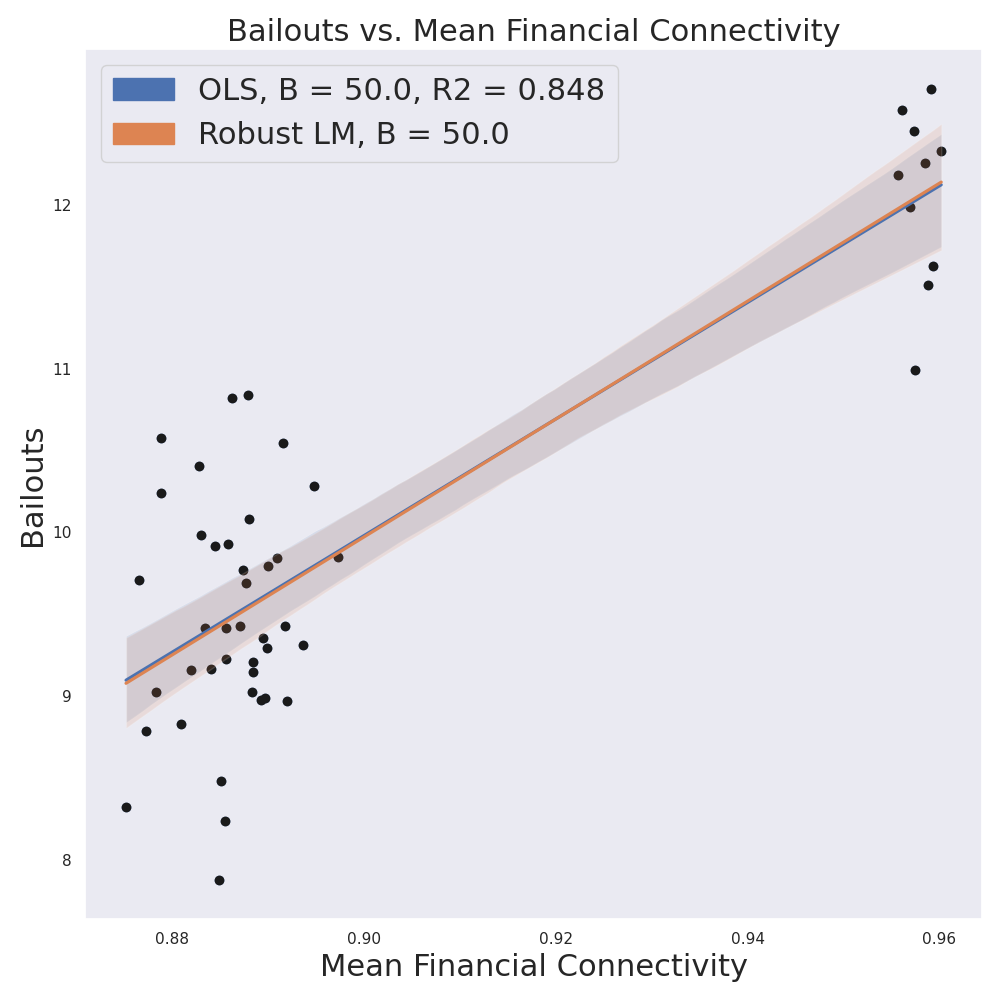}}
    \subfigure[TLC ($g(t) = 0.5$)]{\includegraphics[width=0.24\textwidth]{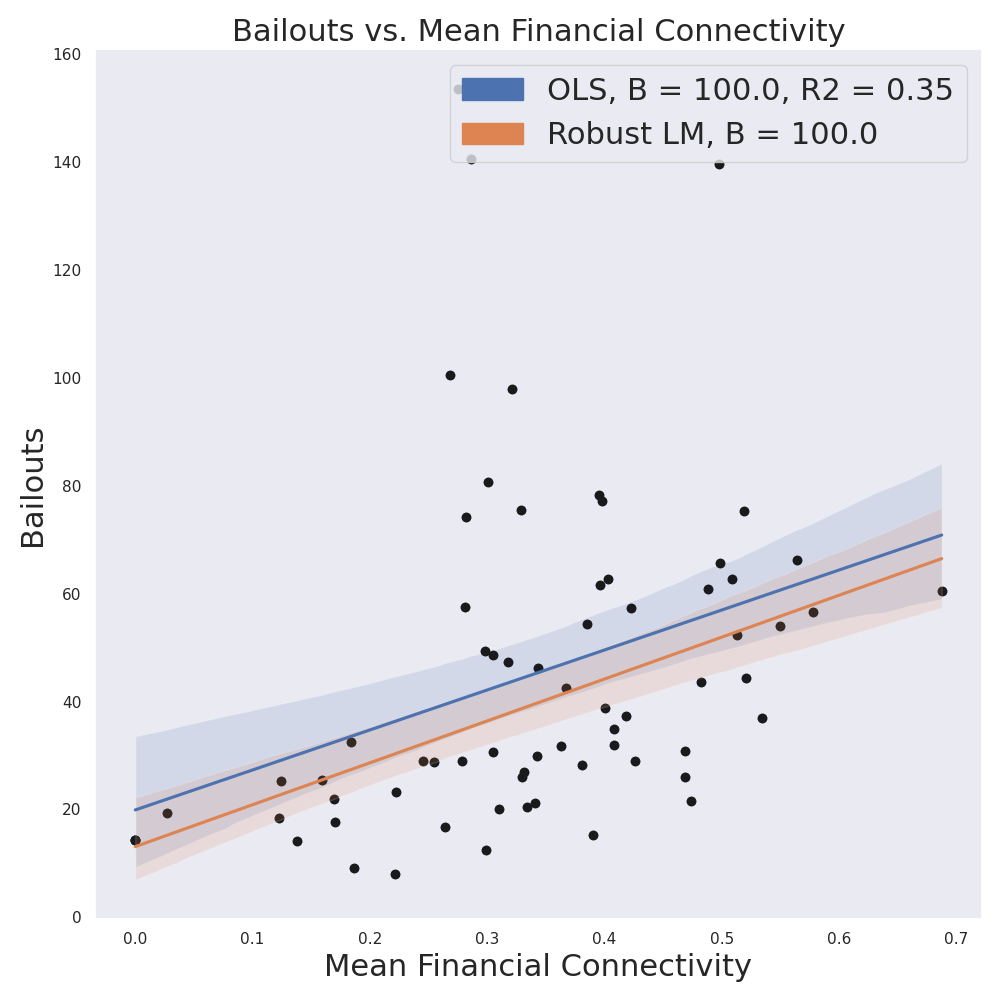}}
    \subfigure[Venmo (SGC,  $g(t) = 0.5$)]{\includegraphics[width=0.24
    \textwidth]{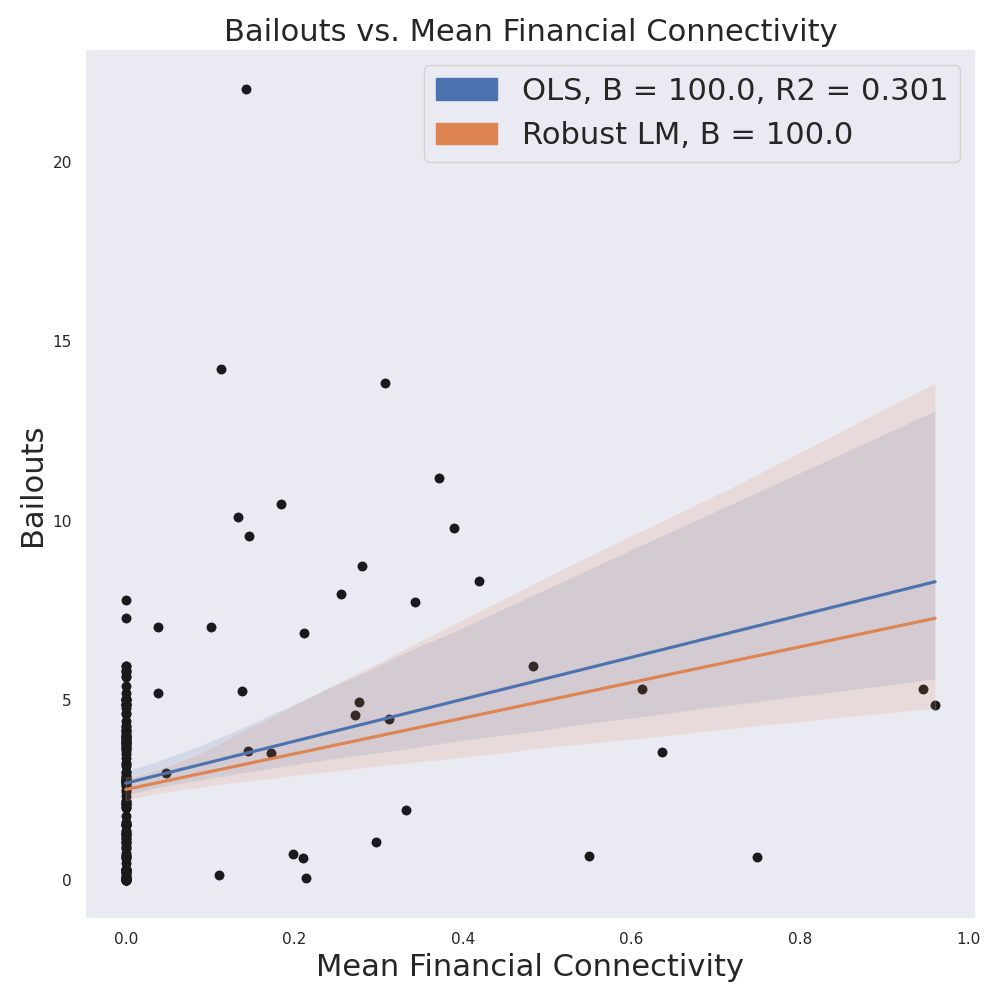}}
    \subfigure[SafeGraph (SGC,  $g(t) = 0.5$)]{\includegraphics[width=0.24\textwidth]{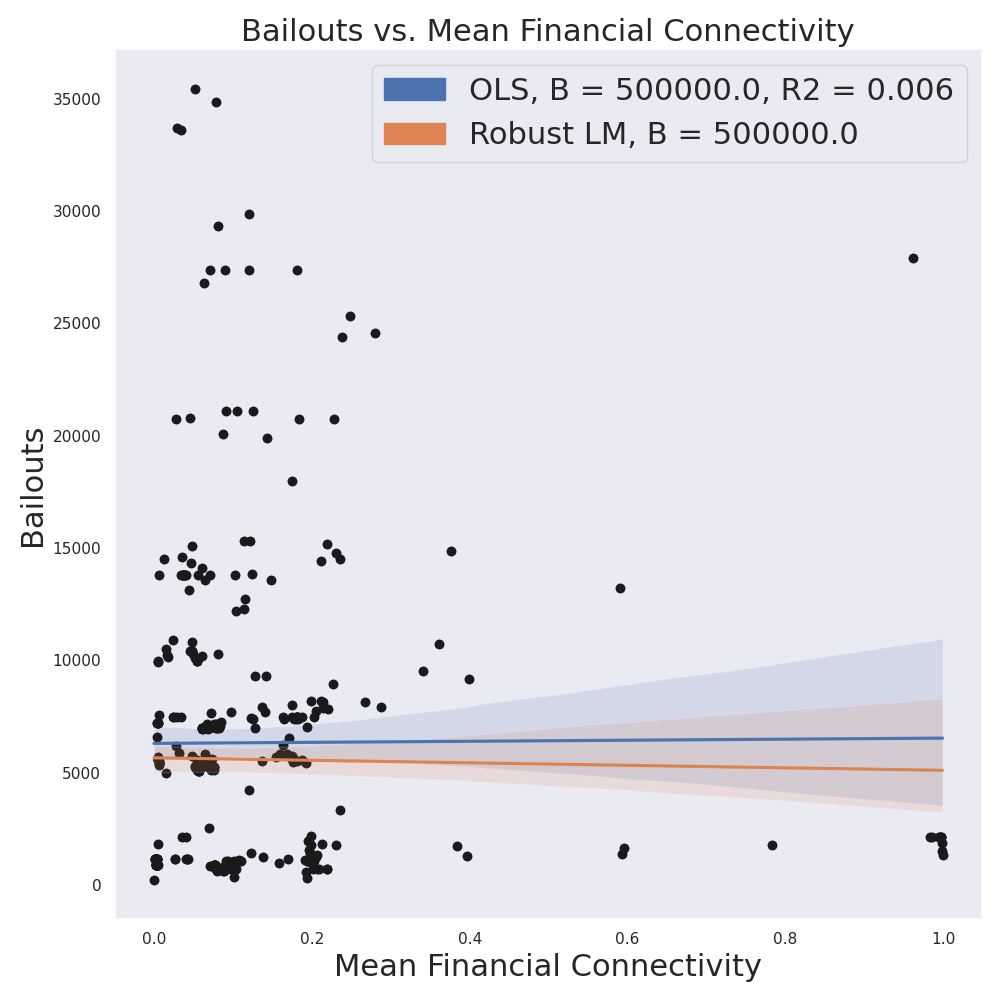}}
    \subfigure[Synthetic Core-periphery (GC,  $g(t) = 0.5$)]{\includegraphics[width=0.24\textwidth]{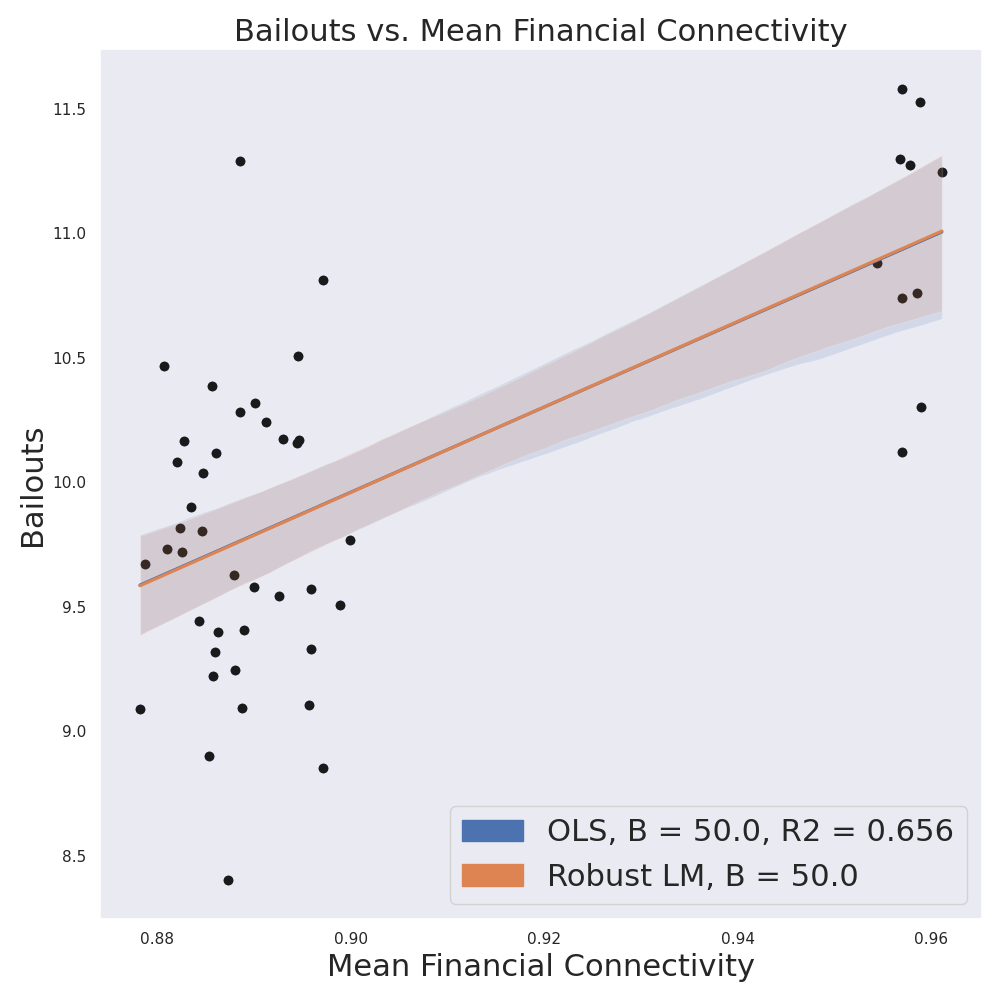}}
    \subfigure[TLC (GC,  $g(t) = 0.5$)]{\includegraphics[width=0.24\textwidth]{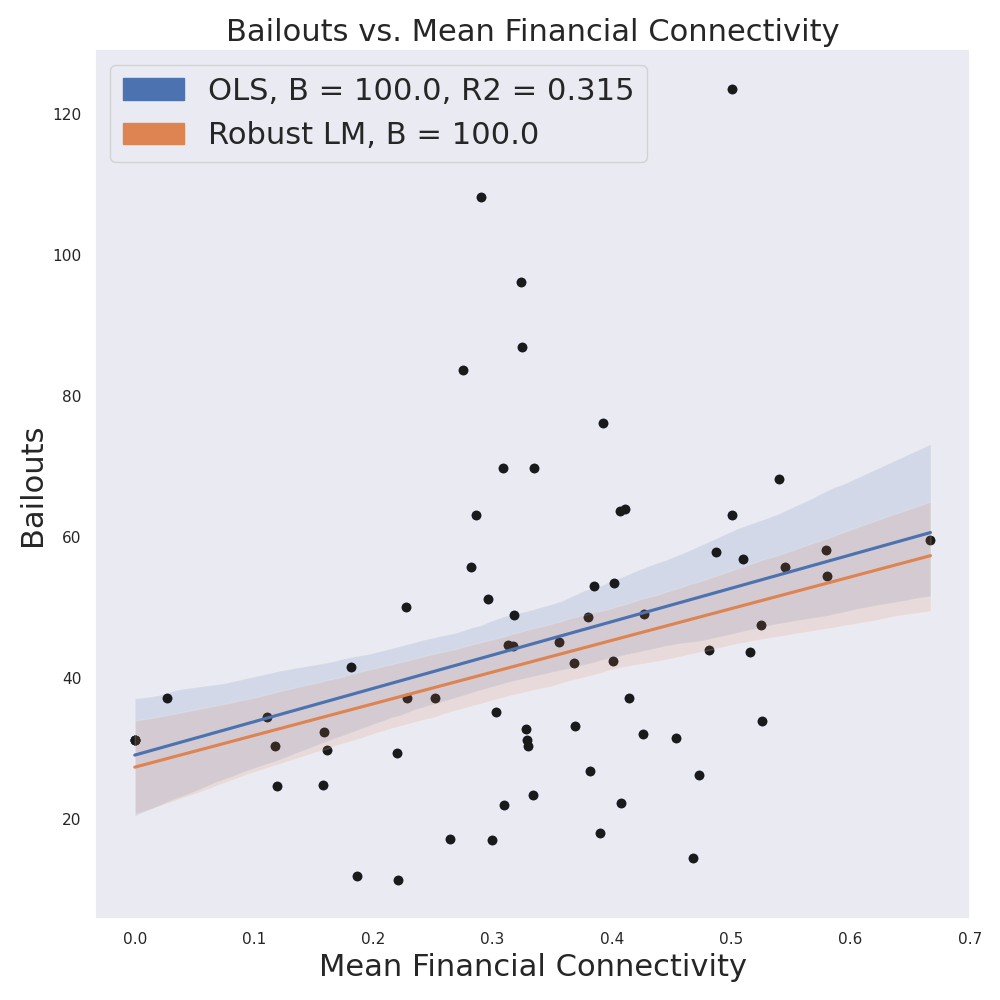}}
    \subfigure[Venmo (GC,  $g(t) = 0.5$)]{\includegraphics[width=0.24
    \textwidth]{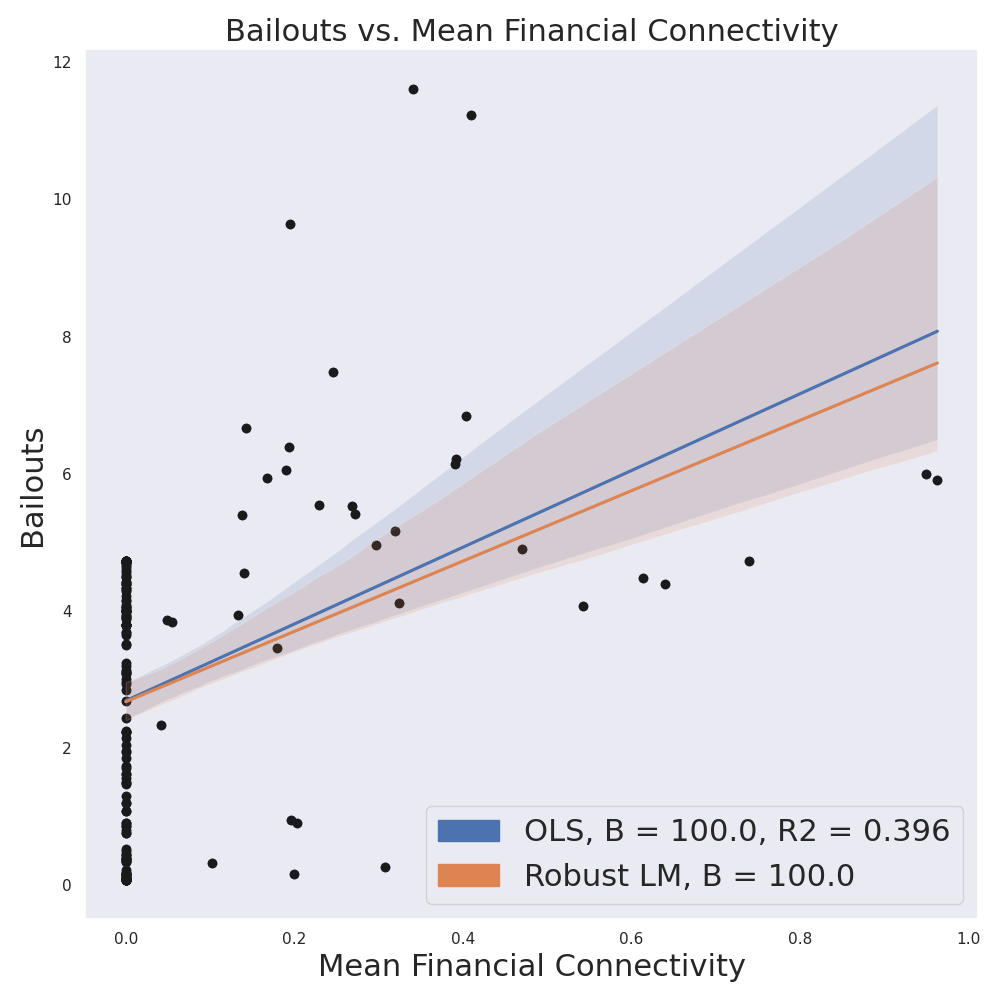}}
    \subfigure[SafeGraph (GC,  $g(t) = 0.5$)]{\includegraphics[width=0.24\textwidth]{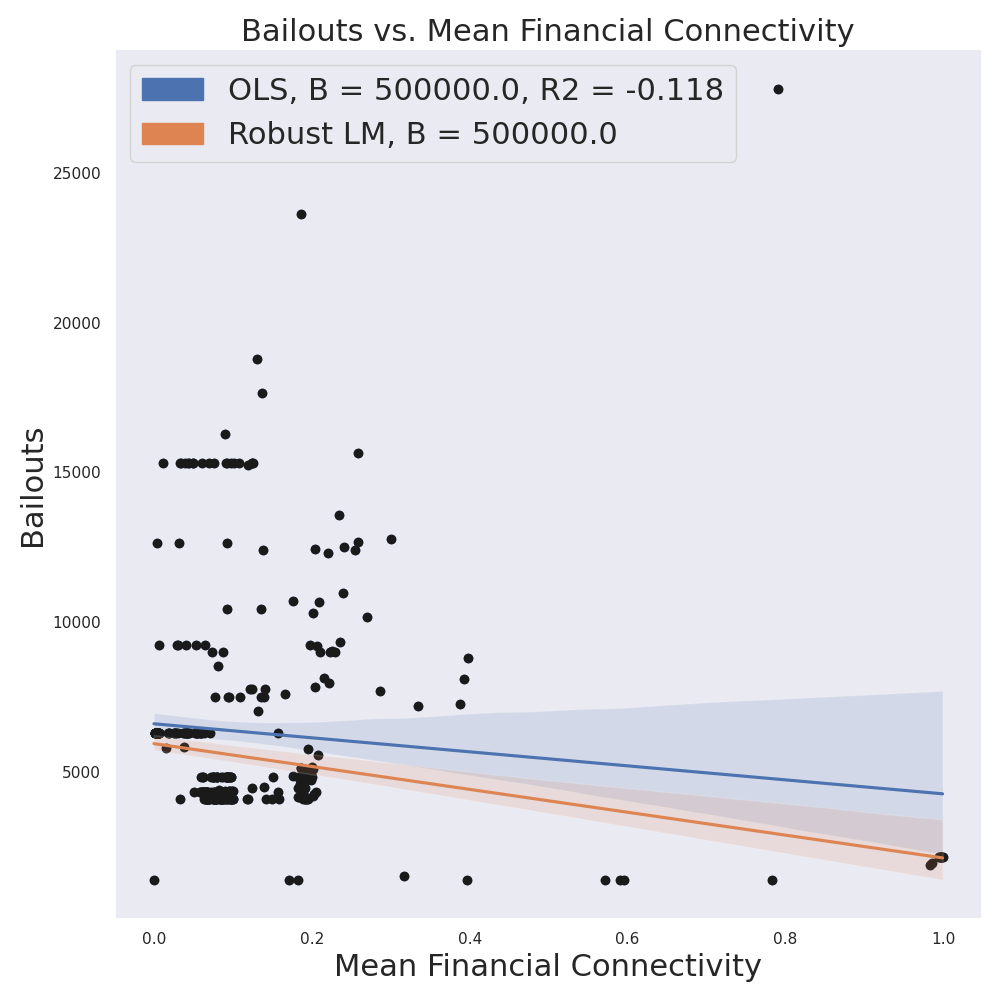}}
    \caption{\small \em Relation between the mean financial connectivity (over all rounds) of nodes and the total interventions received. We use $L = B \cdot \one$. SGC = Spatial Gini Coefficient, GC = (Standard) Gini Coefficient.}
    \label{fig:betas_vs_payments}
\end{figure}

\section{Conclusion} \label{sec:conclusion}

In this paper, we have studied a dynamic resource allocation problem through the lens of a dynamic financial contagion model, which we formulate as an MDP that evolves under a stochastic environment.  We derive the value function of the problem with fractional interventions and formulate the optimal dynamic intervention problem when the interventions are discrete, and give an approximation algorithm whose approximation factor depends on the worst-case financial connectivity of the dynamics.  Based on the development of the aforementioned algorithms, we run experiments and display the results of our method on synthetic data, ridesharing data for For-Hire-Vehicles in NYC, publicly available transaction data from Venmo, and financial networks derived from mobility data (SafeGraph).  We incorporate fairness measures that involve fairly distributing the resources and experimentally observe that the proposed fairness measures can achieve near-optimal fairness to welfare relationship (i.e., PoF close to 1). 

There are several potential future roadmaps for our work.  Firstly, it is currently unknown whether this approximation ratio can be improved (and by how much) even in the case of $T = 1$, and how the policy that solves the MDP conditioned on the fact that the system responds optimally at each step (i.e., \cref{assumption:optimal_response}) compares towards the globally optimal policy that solves the general optimization problem (see \cref{sec:general_dynamics}), as well as algorithms for finding the globally optimal policy.  Lastly, interesting is the question of rounding the clearing variables so that we achieve a solution that is provably within some factor from the optimal policy. 

\section*{Ethics Statement} 

The work studies a theoretical model of financial contagion and applies experiments to data. Both parts of the paper does not pose ethical concerns. The Venmo data and the corresponding usernames used in the paper are publicly available in \url{https://github.com/sa7mon/venmo-data}. The SafeGraph data were obtained from \url{https://www.safegraph.com} with an academic license. Differential privacy techniques have been applied by the data provider, and, thus, no human subjects are identifiable. Because all data consists of public, pre-existing datasets without identifiable individuals, the current work is exempt from IRB review.

\newpage

\section*{Acknowledgements}

Supported in part by a Simons Investigator Award, a Vannevar Bush Faculty Fellowship, AFOSR grant FA9550-19-1-0183, a Simons Collaboration grant, and a grant from the MacArthur Foundation. SB gratefully acknowledges funding from the NSF under grants ECCS 1847393 and CNS 1955997. 

The authors would like to thank Constantine Caramanis, Akhil Jalan, Emma Pierson, David Shmoys, and Andrew Wang for helpful discussions regarding our paper.

\begin{figure}[t]
    \centering
    \subfigure[$B = 0$ (no external assistance)]{\includegraphics[width=0.49\textwidth]{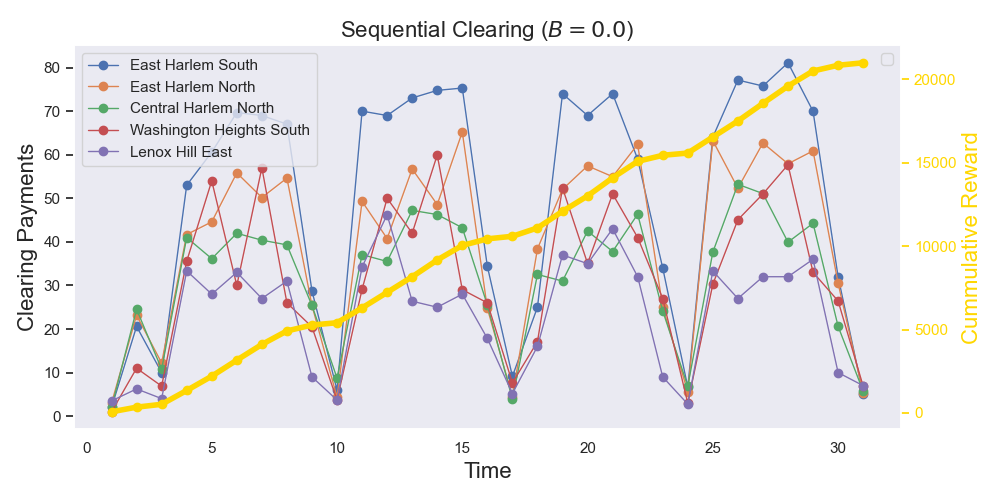}}
    \subfigure[$\max_{i \in [n]} \beta_i(t)$]{\includegraphics[width=0.49\textwidth]{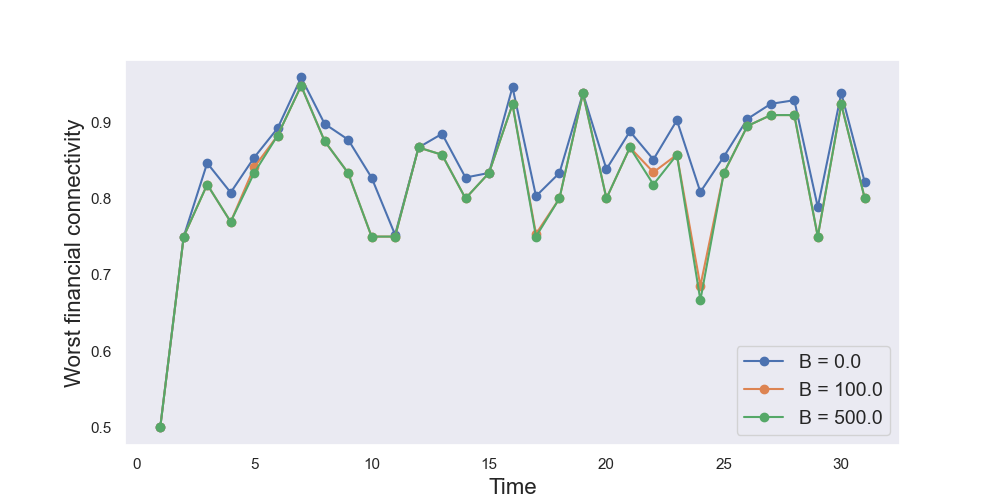}}
    \subfigure[$B = 100$ (fractional allocations)]{\includegraphics[width=0.49\textwidth]{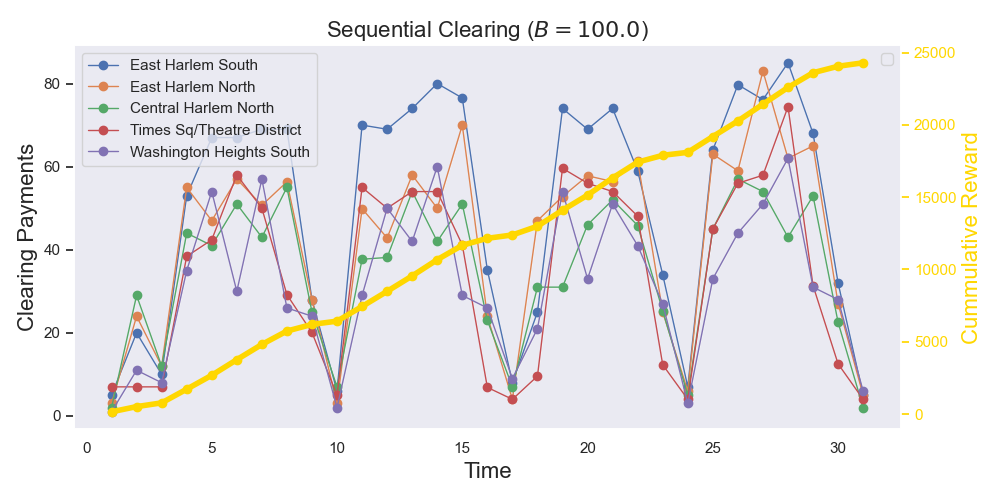}}
    \subfigure[$B = 100$ (fractional allocations)]{\includegraphics[width=0.49\textwidth]{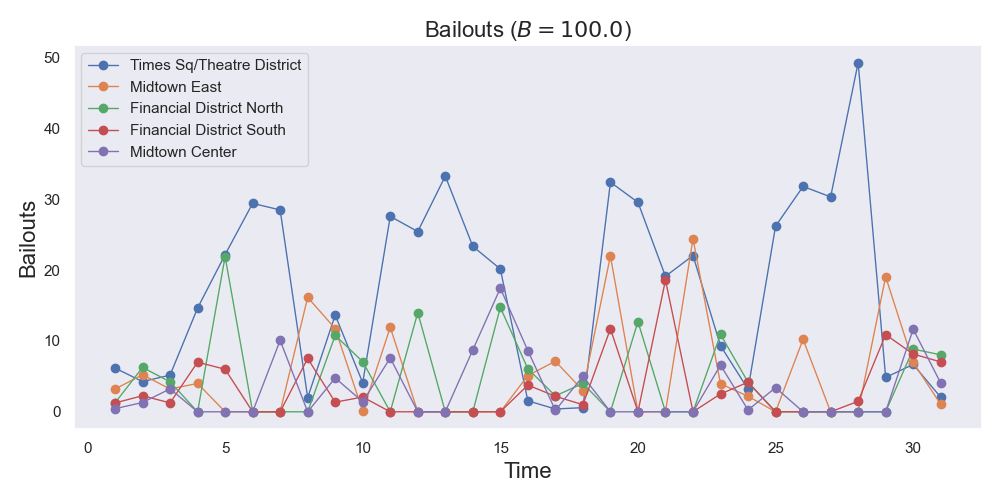}}
    \subfigure[$B = 500$ (fractional allocations)]{\includegraphics[width=0.49\textwidth]{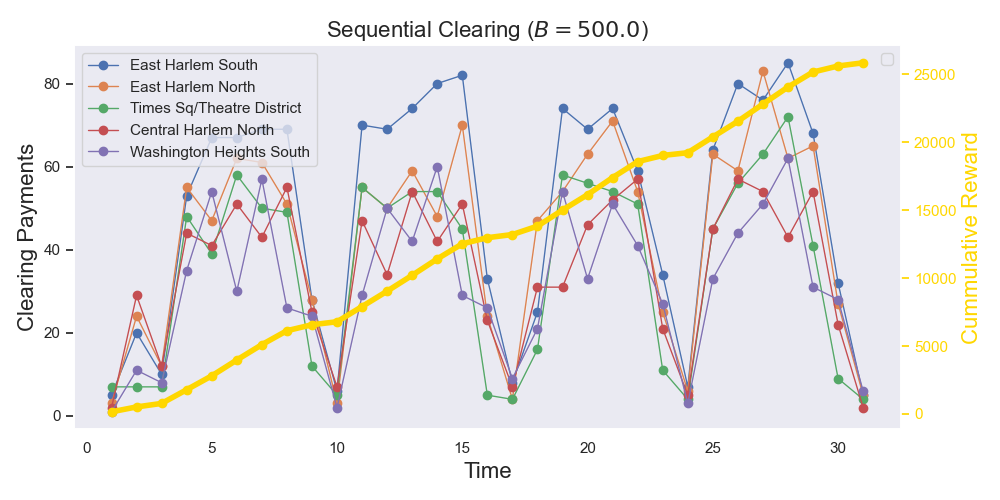}}
    \subfigure[$B = 500$ (fractional allocations)]{\includegraphics[width=0.49\textwidth]{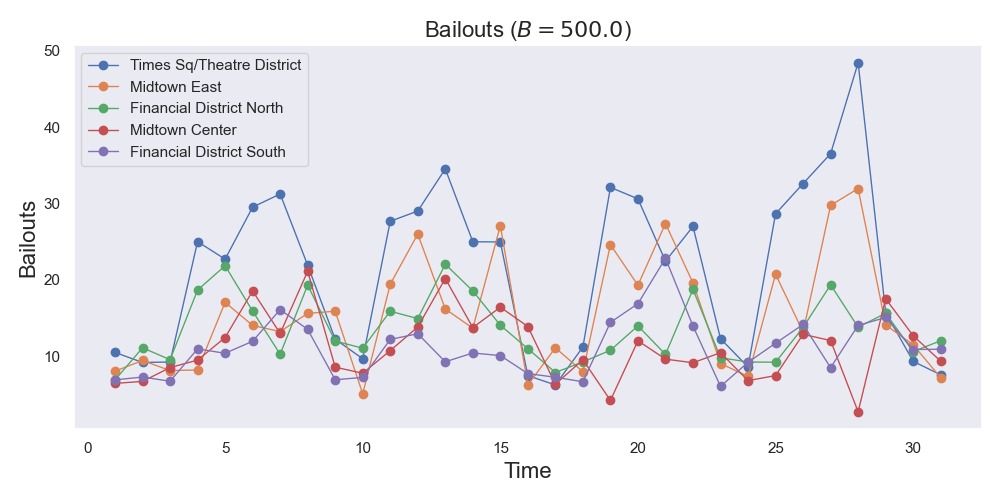}}
    \caption{\small \em TLC FHV Data for January 2021 split into days. We report the 5 most active neighborhoods (in terms of total cleared value) as well as the cumulative reward amassed.}
    \label{fig:tlc_days}
\end{figure}

\begin{figure}[t]
    \centering
    \subfigure[]{\includegraphics[width=0.49\textwidth]{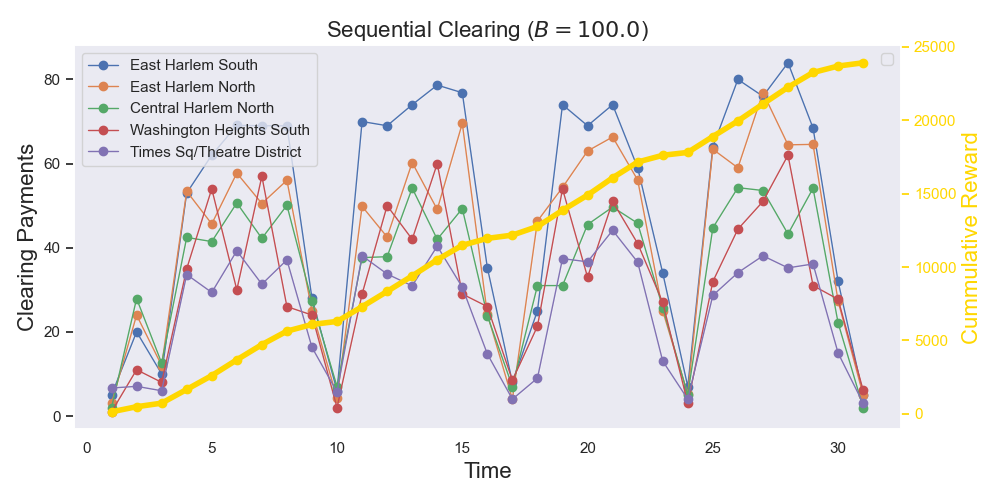}}
    \subfigure[]{\includegraphics[width=0.49\textwidth]{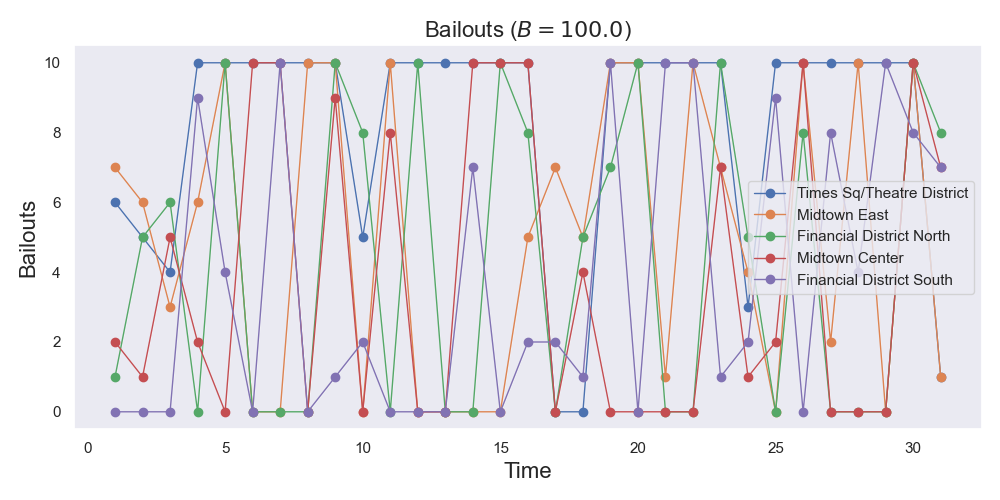}}
    \caption{\small \em Discrete intervention Scenario for $B = 100$ and $L = 10 \cdot \one$.}
    \label{fig:tlc_discrete}
\end{figure}

\bibliographystyle{alpha}
\bibliography{main}

\appendix

\section{Proofs} \label{sec:proofs}

\subsection{Proof of \cref{theorem:value_function}}

\noindent \emph{Base Case $t' = T$.} Let $f_T$ be the PDF of $U(T)$. We have that 

\begin{equation*}
    \begin{split}
        V(T, s) & = \max_{z \in \mathcal Z, \til p \text{ fixed point given $z$}} \langle f_T, \one^T \til p \rangle \\
        & = \max_{z \in \mathcal Z} \int_{\mathcal U} f_T(u) \max_{\til p \text{ fixed point given $z$}} \one^T \til p(u) du \\
        & = \max_{z \in \mathcal Z} \lim_{N \to \infty} \sum_{i = 1}^N f(u_i) \Delta u_i \max_{\til p(u_i) \text{ fixed point given $z$}}  \one^T \til p(u_i) \\
        & = \max_{z \in \mathcal Z} \lim_{N \to \infty} \max_{(p(u_1)^T, \dots, p(u_N)^T)^T \text{fixed point given } (z^T, \dots, z^T)^T} \sum_{i = 1}^N f(u_i) \Delta u_i \sum_{j \in [n]} \til p_j(u_i) \\
        & =  \lim_{N \to \infty} \max_{(p(u_1)^T, \dots, p(u_N)^T)^T \text{fixed point given } (z^T, \dots, z^T)^T} \sum_{i = 1}^N f(u_i) \Delta u_i \sum_{j \in [n]} \til p_j(u_i) \\
        & = \int_{\mathcal U} f_T(u) \max_{z, \til p(u)} \one^T \til p(u) du \\
        & = \ev {U(T)} {\max_{z, \til p} \one^T \til p}
    \end{split}
\end{equation*}

The equalities follow from: (i) definition of expectation, (ii) pushing the maximization wrt $s$ inside since $z$ is fixed, (iii) definition of Riemannian integral, (iv) the $N$ optimization problems being decoupled since $z$ is fixed and thinking of the optimization as a large problem with a state vector of dimension $2 \times N \times n$, (v) pushing the optimization inside the limit (rewards are bounded regardless of the value of $N$ and the corresponding mappings are continuous), (vi) definition of integration, and (vii) definition of the expected value.

\noindent \emph{Inductive Hypothesis.} Assume that for $t' = t + 1$ we have that 

\begin{equation*}
    V(t+1, s_t) = \ev {U(t+1:T)} {\max_{z_{t + 1}, \til p_{t + 1}} \left \{  \one^T \til p_{t + 1} + \max_{z_{t + 2}, \til p_{t + 2}} \left \{ \one^T \til p_{t + 2} + \max_{z_{t + 3}, \til p_{t + 3}} \left \{ \one^T \til p_{t + 3} + \dots \right \} \right \} \right \} }
\end{equation*}

\noindent \emph{Inductive Step.} For $t' = t$ we have that

\begin{equation*}
\footnotesize
    \begin{split}
        V(t, s_t) & = \max_{z_t} \left \{ r(s_t, z_t) + \ev {s_{t + 1} \sim \mathcal T(s_t, z_t)} {V(t + 1, s_{t + 1})} \right \} \\
        & = \max_{z_t} \mathbb E_{U(t)} \bigg [\max_{\til p_{t}} \one^T \til p_t + {\ev {U(t+1:T)} {\max_{z_{t + 1}, \til p_{t + 1}} \left \{  \one^T \til p_{t + 1} + \max_{z_{t + 2}, \til p_{t + 2}} \left \{ \one^T \til p_{t + 2} + \max_{z_{t + 3}, \til p_{t + 3}} \left \{ \one^T \til p_{t + 3} + \dots \right \} \right \} \right \} }} \bigg ] \\
        & = \max_{z_t} \mathbb E_{U(t:T)} \bigg [\max_{\til p_{t}} \left \{ \one^T \til p_t +  {\max_{z_{t + 1}, \til p_{t + 1}} \left \{  \one^T \til p_{t + 1} + \max_{z_{t + 2}, \til p_{t + 2}} \left \{ \one^T \til p_{t + 2} + \max_{z_{t + 3}, \til p_{t + 3}} \left \{ \one^T \til p_{t + 3} + \dots \right \} \right \} \right \} } \right \}
        \\
        & = \ev {U(t:T)} {\max_{z_{t}, p_{t}} \left \{  \one^T \til p_{t} + \max_{z_{t + 1}, \til p_{t + 1}} \left \{ \one^T \til p_{1 + 2} + \max_{z_{t + 2}, \til p_{t + 2}} \left \{ \one^T \til p_{t + 2} + \dots \right \} \right \} \right \} }
    \end{split}
\end{equation*}

The equalities follow from: (i) the HJB equations, (ii) the inductive hypothesis, (iii) the fact that the maximization over $\til p_t$ is independent of the sample paths from round $t + 1$ onwards and thus we can reorganize the expectations into one expectation over sample paths $U(t:T) \sim \mathcal U$, (iv) identically to the base case argument.

At any point, with probability 1, the value functions $V_{u_i(t:T)}$ are between $0$ and $\sum_{t} \one^T (b(t) + \ell(t)) \le (T - t + 1) \cdot \Delta$, where $\Delta = \sup_{\mathcal U} \left ( \| b \|_1 + \| \ell \|_1 \right )$ since the maximum reward can be achieved when all debts are paid and all nodes are solvent. Thus, by standard Chernoff bounds, one needs to choose $N = \tfrac {\log (2 / \delta) (T - t + 1)^2 \Delta^2 } {2 \varepsilon^2}$ samples to get an $\varepsilon$-accurate estimation of the actual value function with probability at least $1 - \delta$. 

\qed

\subsection{Proof of \cref{theorem:approximation_randomized_rounding}} 

\subsubsection{Approximation Guarantee}

Let $D_d(t')$ and $R_d(t')$ are the sets of default nodes and solvent nodes under discrete interventions at round $t'$. We have that 

\begin{enumerate}
    \item If $i \in D_d(t')$ we have that $\ev {z_d(t:T)} {\til p_{d, i}(t') | i \in D_d(t')} \ge c_i(t') + z_{r, i}^*(t')$.
    \item If $i \in R_d(t')$ we have that 
    
    \begin{align*}
        \ev {z_d(t:T)} {\til p_{d, i}(t') | i \in R_d(t')} & = \ev {z_d(t:T)} {p_{d, i}(t') | i \in R_d(t')} \\ & = \ev {z_d(t:T)} {\sum_{t'' < t'} (b_i(t'') + \ell_i(t'')) - \sum_{t'' < t'} \til p_{d, i}(t'')} \\ & \ge \ev {z_d(t:T)} {\sum_{t'' < t'} (b_i(t'') + \ell_i(t'')) - \sum_{t'' < t'} \til p^*_{r, i}(t'')} \\ & = \ev {z_d(t:T)} {p_{r, i}(t')} \\ & \ge \til p_{r, i}^*(t') \\ & \ge (1 - \max_{i \in [n]} \beta_{r, i}(t') ) \til p_{r, i}^*(t'). 
    \end{align*}
    
    The statement follows from: (i) definition of solvent node in the rounded solution, (ii) recursively using the definition of $p_{d, i}(t'')$ for all $1 \le t'' < t'$, (iii) point-wise optimality of the fractional clearing vector, (iv) definition of the solvency constraint for the fractional relaxation, (v) feasibility of the fractional solution, (vi) $\max_{i \in [n]} \beta_{r, i}(t') > 0$ by \cref{assumption:uniqueness}. 
\end{enumerate}

Moreover, by \cite{papachristou2021allocating} we know that for every subset $S \subseteq [n]$ we have that the fractional solution satisfies

\begin{equation} \label{eq:fractional_interventions_ub}
    \left ( 1 - \max_{i \in S} \beta_{r, i}(t') \right ) \sum_{i \in S} \til p_{r, i}^* \le \sum_{j \in S} \left [ c_i(t') + z_{r, i}^*(t') \right ].
\end{equation}

By letting $S = D_d(t')$ on \cref{eq:fractional_interventions_ub} and since $\max_{i \in D_d(t')} \beta_{r, i}(t') \le \max_{i \in [n]} \beta_{r, i}(t')$ we have that

\begin{equation*}
    \left ( 1 - \max_{i \in [n]} \beta_{r, i}(t') \right ) \sum_{i \in D_d(t')} \til p_{r, i}^* \le \sum_{j \in D_d(t')} \left [ c_i(t') + z_{r, i}^*(t') \right ].
\end{equation*}

Moreover, 

\begin{equation*}
    p_{r, i} (t') \ge \til p_{r, i}^*(t') \ge \left ( 1 - \max_{i \in [n]} \beta_{r, i} (t')\right ) \til p_{r, i}^*(t')
\end{equation*}

where the second inequality is due to feasibility and the last inequality is because we multiply with a quantity that is strictly in $(0, 1)$. Therefore we have that the expected reward of the rounded solution at time $t'$ is at least $\left ( 1 - \max_{i \in [n]} \beta_{r, i} (t') \right )$ the optimal reward, i.e

\begin{equation*}
    \ev {z_d(t:T)} {R(s(t'), z(t') = SOL)} \ge \left ( 1 - \max_{i \in [n]} \beta_{r, i} (t') \right ) \cdot R(s(t'), z(t') = OPT) \quad \forall t' \in [t,T]
\end{equation*}

We sum over $t' \in [t, T]$ and have that 

\begin{equation*}
    \begin{split}
        \ev {z_d(t:T)} {V_{u(t:T)}^{SOL} (t, s(t))} & = \sum_{t' \in [t,T]} \ev {z_d(t:T)} {R(s(t'), z(t') = SOL)} \\
        & \ge \sum_{t' \in [t,T]} (1 - \max_{i \in [n]} \beta_{r, i}(t')) \cdot R(s(t'), z(t') = OPT) \\
        & \ge \min_{t \in [t,T]} (1 - \max_{i \in [n]} \beta_{r, i}(t')) \cdot \sum_{t' \in [t,T]} R(s(t'), z(t') = OPT) \\
        & = \left ( 1 - \max_{t' \in [t, T], i \in [n]} \beta_{r, i}(t') \right ) \cdot V_{u(t:T)}^{OPT}(t, s(t)).
    \end{split}
\end{equation*}

Taking expectation with respect to $u(t:T) \sim U(t:T)$ and arrive to

\begin{equation*} 
\begin{split}
    \ev {z_d(t:T), u(t:T)} {V^{SOL} (t, s(t))} & \ge \ev {u(t:T)} {\left ( 1 - \max_{t' \in [t, T]} \max_{i \in [n]} \beta_{i}(t') \right ) V_{u(t:T)}^{OPT} (t, s(t))} \\
    & = \ev {u(t:T)} {V^{OPT} (t, s(t))} - \ev {u(t:T)} {\left | V_{u(t:T)}^{OPT} (t, s(t)) \max_{t' \in [t, T]} \max_{i \in [n]} \beta_{i}(t') \right |} \\
    & \overset {\text{H\"older}} \ge \ev {u(t:T)} {V^{OPT} (t, s(t))} \\ & - \ev {u(t:T)} {\left \| V_{u(t:T)}^{OPT} (t, s(t)) \right \|_1} \sup_{u(t:T)} {\max_{t' \in [t, T]} \max_{i \in [n]} \beta_{i}(t')} \\
    & \overset {V \ge 0} = \ev {u(t:T)} {V^{OPT} (t, s(t))} \\ & - \ev {u(t:T)} {V^{OPT} (t, s(t))} \sup_{u(t:T)} \max_{t' \in [t, T]} \max_{i \in [n]} \beta_{i}(t') \\
    & = \left ( 1 - \sup_{u(t:T)} \left \{ \max_{t' \in [t, T]} \max_{i \in [n]} \beta_{i}(t') \right \} \right ) \cdot \ev {u(t:T)} {V^{OPT} (t, s(t))}.
\end{split}
\end{equation*}

\noindent \textbf{Simplification of approximation guarantee.} Having $B > \Delta$ implies that at each round all liabilities can be covered and therefore $P_i(t') = b_i(t') + \ell_i(t') \le \Delta$, thus 

\begin{align*}
    \beta_i(t') = 1 - \frac {b_i(t')} {P_i(t')} \le 1 - \frac {\delta_b} {\Delta} \implies \max_{t' \in [t, T], i \in [n]} \beta_i(t') \le 1 - \frac {\delta_b} {\Delta},
\end{align*}

Otherwise, it always holds that

\begin{align*}
    \beta_i(t') = 1 - \frac {b_i(t')} {P_i(t')} \le 1 - \frac {\delta_b} {(t' - t + 1) \Delta} \implies \max_{t' \in [t, T], i \in [n]} \beta_i(t') \le 1 - \frac {\delta_b} {(T - t + 1) \Delta},
\end{align*}

since $P_i(t')$ (generally) is maximized when no liabilities are cleared for every $t'' \in [t, t']$. 

\subsubsection{Runtime Analysis for Equal interventions}

We briefly describe the process of the runtime analysis of our algorithm for the case when $L = \one \cdot \lambda$ for some $\lambda  \in \mathbb N^*$, i.e. with equal interventions. The analysis with unequal interventions is similar, but after a change in the rounding scheme from independent rounding to the dependent rounding scheme of \cite{srinivasan2001distributions} (see also \cite[pp. 38-40]{papachristou2021allocating}). Let $\varepsilon > 0$ be a quantity to be determined later. For brevity we let $$\gamma = \sup_{u(t:T)} \left \{ \max_{t' \in [t, T]} \max_{i \in [n]} \beta_{i}(t') \right \} \in (0, 1).$$

We fix a sample $u_i(t:T)$ of the financial environment and we let $V^{SOL}_{i, :t'}(t, s(t))$ be the value function corresponding to this realization truncated to step $t'$. Let $\mathcal F_i(t')$ be the event that the algorithm fails under the $i$-th realization up until step $t' \in [t,T]$. Clearly, $\mathcal F_i(t')$ is true when \emph{(i)} either the approximation guarantee fails, or \emph{(ii)} the constraints are not met. 

For \emph{(i)} we have that by Markov's Inequality

\begin{equation*}
\begin{split}
    \Pr {z_d(t:t')} {V^{SOL}_{i, :t'}(t, s(t)) \le (1 - \gamma - \varepsilon^2) V^{OPT}_{i, :t'}(t, s(t)) \bigg | \bigcap_{t''< t} (\mathcal F_i(t''))^c} \le \frac {1} {1 + \varepsilon^2 / \gamma} \le 1 - \frac {\varepsilon^2} {2 \gamma} \le 1 - \frac {\varepsilon^2} 2.
\end{split}
\end{equation*}

For \emph{(ii)} Let $\delta = \tfrac {\lambda} {B} \sqrt {n \log (2/\varepsilon^2) / 2}$. Since the rounded variables are binomial with a maximum number of trials $\lambda$ the constraint $\zero \le z_d(t') \le \one \lambda$ is satisfied with probability 1 for all $t' \in [t,T]$. For the budget constraint, by the Chernoff bound, we have that

\begin{equation*}
    \Pr {z_d(t:t')} {\one^T z_d(t') \ge (1 + \delta) B  \bigg | \bigcap_{t''< t} (\mathcal F_i(t'))^c} \le e^{-2 \delta^2 B^2 / (n \lambda^2)} \le \frac {\varepsilon^2} 2
\end{equation*}
    
Round $t'$ is repeated $\tau$ times and thus the probability that the algorithm fails at round $t'$ given that it has succeeded on rounds $t'' < t'$ is at most $\left ( 1 - \varepsilon^2 \right )^\tau \le e^{-\tau \varepsilon^2}$. Therefore, we have that for step $T$ 

\begin{equation*}
    \Pr {} {\text{algorithm succeeds given $i$-th shock}} = \Pr {} {(\mathcal F_i(T))^c} \ge \left ( 1 - e^{-\tau \varepsilon^2} \right )^{T - t} \ge 1 - (T - t + 1) e^{-\tau \varepsilon^2}
\end{equation*}

And thus the probability that the algorithm fails in any of the $N$ shocks is at most $(T - t + 1)N e^{-\tau \varepsilon^2}$. Let $\bar V^{OPT}_i(t, s(t))$ be the sample value function for the optimal policy $OPT$ bounded above by $\Delta T $, for $\Delta = \cdot \sup_{u \in \mathcal U} ( \| b \|_1 + \| \ell \|_1)$. The Chernoff bound implies that 

\begin{equation} \label{eq:shock_average_chernoff}
    \Pr {u(t:T)} {\bar V^{OPT}(t, s(t)) \ge (1 - \varepsilon) \ev {u(t:T)} {\bar V^{OPT}(t, s(t))}} \le e^{-2N\varepsilon^2 \ev {u(t:T)} {\bar V^{OPT}(t, s(t))} / (T\Delta)^2}
\end{equation}

Also note that for $B > 0$, $\ev {u(t:T)} {\bar V^{OPT}(t, s(t))} \ge (T - t + 1) \Theta > 0$, since for every round either all budget is used or all nodes are solvent, with $\Theta = \min \{ B, \min_{u} \| b \|_1 \}$, and therefore choosing 

\begin{equation*}
    N = \frac {\tau \Delta^2} {\Theta^2}
\end{equation*}

we get that \cref{eq:shock_average_chernoff} is at most $e^{-\tau \varepsilon^2}$. Therefore with probability at least $1 - (1 + (T - t + 1) N e^{-\tau \varepsilon^2} = 1 - O \left ( (T - t + 1) N e^{-\tau \varepsilon^2} \right )$ we have that 

\begin{equation*}
\begin{split}
    \bar V^{SOL}(t, s(t)) & \ge (1 - \gamma - \varepsilon^2 ) \bar V^{OPT} (t, s(t)) \\ &  \ge (1 - \gamma - \varepsilon^2 ) (1 - \varepsilon) \ev {u(t:T)} {V^{OPT}(t, s(t))}  \\ & \ge (1 - \gamma - 2 \varepsilon) \ev {u(t:T)} {V^{OPT}(t, s(t))}. 
\end{split}
\end{equation*}

\qed

\section{General Response Dynamics} \label{sec:general_dynamics}

In this Section, we study the dynamics when agents can respond with any feasible vector (not necessarily a fixed point as \cref{assumption:optimal_response} implies) satisfying \cref{eq:dynamics}. We restate the optimization problem again, for simplicity, for a given sample path $u(1:T)$.  

\begin{subequations} \label{eq:restated_optimization}
\begin{align} 
        \max_{\til P(1:T), Z(1:T)} \quad &  \sum_{t \in [T]} \one^T \til P(t) \\
        \text{s.t.} \quad & \zero \le \til P(t) \le P(t) \quad \forall t \in [T] \label{eq:solvency_constraint} \\
        & \til P(t) \le A^T(t) \til P(t) + c(t) + Z(t) \quad \forall t \in [T] \label{eq:default_constraints} \\
        & \zero \le Z(t) \le L  \quad \forall t \in [T] \\
        & \one^T Z(t) \le B \quad \forall t \in [T] \\
        & P(t) = b(t) + \ell(t) + P(t - 1) - \til P(t - 1) \quad \forall t \in [T]  \\
        & p_{ij}(t) = \ell_{ij}(t) + p_{ij}(t - 1) \left ( 1 - \frac {\til P_i(t - 1)} {P_i(t - 1)} \right ) \quad \forall i, j \in [n] \forall t \in [T] \\
        & a_{ij}(t) = p_{ij}(t) / P_i(t) \quad \forall i, j \in [n], \forall t \in [T]
\end{align}
\end{subequations}

\subsection{Convexity} \label{sec:convexity}

We investigate conditions under which \cref{eq:restated_optimization} corresponds to a convex program. Firstly, all the constraints and the objective are convex with respect to $Z(1:T)$. The objective is also convex in $\til P(1:T)$. The solvency constraints (\cref{eq:solvency_constraint}) are convex with respect to $\til P(1:T), Z(1:T)$. For the default constraints (\cref{eq:default_constraints}), we let $\varphi_{ji}(t) = a_{ji}(t) \til P_j(t)$ and $g_i(t) = \til P_i(t) - \sum_{j \in [n]} \varphi_{ji}(t) - c(t) - Z(t)$. Note that if $-\varphi_{ji}(t)$ are convex for all $i, j \in [n]$ and $t \in [T]$ then the solvency constraints are also convex, and vice versa since $\nabla^2 g_i(t) = \mathrm{diag} \left ( \left \{ - \nabla^2 \varphi_{ji}(t) \right \}_{j \in [n]} \right )$. All other second derivatives are zero since $\varphi_{j_1 i}(t)$ is independent from $\varphi_{j_2 i}(t)$ for all $j_1 \neq j_2$. The Hessian of $-\varphi_{ji}(t)$ equals: 

\begin{equation} \label{eq:vaphi_hessian}
    \nabla^2 \varphi_{ji}(t) = 
    \begin{pmatrix} 
        - \frac {\partial \varphi_{ji}^2(t)} {\partial \til P_j(t)^2} & -\frac {\partial \varphi_{ji}^2(t)} {\partial \til P_j(t) \partial \til P_j(t - 1)} \\ 
        - \frac {\partial \varphi_{ji}^2(t)} {\partial \til P_j(t - 1) \partial \til P_j(t)} &  - \frac {\partial \varphi_{ji}^2(t - 1)} {\partial \til P_j(t)^2} 
        \end{pmatrix} = 
        \begin{pmatrix} 
            0 & - \frac {a_{ji}(t) - a_{ji}(t - 1)} {P_j(t)} \\
           - \frac {a_{ji}(t) - a_{ji}(t - 1)} {P_j(t)} & - \frac {2 \til P_j(t) (a_{ji}(t) - a_{ji}(t))} {\til P_j(t)^2}
        \end{pmatrix} 
\end{equation}

The leading principal minors are $\Delta_1 = 0$, and $\Delta_2 = - \frac {(a_{ji}(t) - a_{ji}(t - 1))^2} {P_j^2(t)} \le 0$. In order to make $- \nabla^2 \varphi_{ji}(t) \succeq 0$  we should make the leading principal minors of $- \nabla^2 \varphi_{ji}(t)$ non-negative. For this to happen, we should set $\Delta_2 = 0$ (since we require it to be $\ge 0$ for positive semi-definiteness and $\Delta_2 \le 0$ by \cref{eq:vaphi_hessian}). This requires setting $a_{ji}(t)$ to be some constant $\zeta_{ji} \in [0, 1)$ for all $t \in [T]$. Therefore the necessary and sufficient condition for convexity of the dynamics is that for every $i \in [n], t \in [T]$

\begin{equation*}
    a_{ji}(t) = \zeta_{ji} \; \forall j \in [n] \iff -\nabla^2 \varphi_{ji} \succeq 0 \iff \nabla^2 g_{i}(t) \succeq 0 \iff g_i(t) \text{ is convex}.
\end{equation*}

\subsection{A Necessary and Sufficient Condition for Convexity}

We restrict $A(t)$ to be some constant row sub-stochastic matrix $\mathfrak Z$ with entries $\zeta_{ij} \in [0, 1)$ for all $t \in [T]$. 

This yields the following condition and the following Condition and Theorem: 

\begin{condition} \label{condition:weaker_financial_environment}
    The financial environment $U(t) = (b(t), c(t),\{ \ell_ij(t) \}_{i, j \in [n]})$ is a MC subject to the constraint that $\frac {\ell_{ij}(t)} {b_i(t) + \sum_{k \in [n]} \ell_{ik}(t)}$ is constant for all $i, j \in [n]$ and $t \in [T]$. Equivalently the internal and external instantaneous liabilities are samples from the polytope $\mathcal K = \{ (b, \{ \ell_{ij} \}_{i, j \in [n]}) \in \mathbb R^{3n} : \zeta_{ij} + \sum_{k \in [n]} (\zeta_{ij} - \one \{ k = j \}) \ell_{ij} = 0, \forall i, j \in [n] \}$
\end{condition}

Which yields \cref{theorem:sufficient_condition_for_linearity},

\begin{theorem}[Necessary and Sufficient Condition for Convexity] \label{theorem:sufficient_condition_for_linearity}
    If for every $i, j \in [n]$ the quantity  $\frac {\ell_{ij}(t)} {b_i(t) + \sum_{k \in [n]} \ell_{ik}(t)}$ is constant and equal to $0 \le \zeta_{ij} < 1$ for a (row)-substochastic matrix $\mathfrak Z = \{ \zeta_{ij} \}_{i, j \in [n]}$, if and only if \cref{eq:restated_optimization} corresponds to a convex program. Moreover, under \cref{theorem:sufficient_condition_for_linearity}, the optimization program of \cref{eq:restated_optimization} is a linear program described in \cref{sec:linear_program}. 
\end{theorem}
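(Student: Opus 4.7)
The plan is to leverage the analysis already carried out in \cref{sec:convexity}, which reduces the convexity question to a statement purely about the relative liability matrix $A(t)$: the program is convex if and only if each $a_{ij}(t)$ is constant in $t$, i.e.\ $a_{ij}(t)=\zeta_{ij}$. The remaining work is to translate that condition on the endogenously-determined quantities $a_{ij}(t)$ into the stated exogenous condition on the instantaneous shocks $(b(t),\ell_{ij}(t))$, and then to observe that freezing $A(t)\equiv\mathfrak Z$ collapses \cref{eq:restated_optimization} to a linear program.

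The core of the proof is an induction on $t$ establishing the equivalence
\[
a_{ij}(t)=\zeta_{ij}\ \forall t\ \iff\ \ell_{ij}(t)=\zeta_{ij}\bigl(b_i(t)+\ell_i(t)\bigr)\ \forall t.
\]
For the base case $t=1$, with no carry-over we have $p_{ij}(1)=\ell_{ij}(1)$ and $P_i(1)=b_i(1)+\ell_i(1)$, so $a_{ij}(1)=\zeta_{ij}$ holds iff $\ell_{ij}(1)=\zeta_{ij}(b_i(1)+\ell_i(1))$. For the inductive step, assume $a_{ij}(t-1)=\zeta_{ij}$, which gives $p_{ij}(t-1)=\zeta_{ij}P_i(t-1)$. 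Plugging into the update rules yields
\[
p_{ij}(t)=\ell_{ij}(t)+\zeta_{ij}\bigl(P_i(t-1)-\til P_i(t-1)\bigr),\qquad P_i(t)=b_i(t)+\ell_i(t)+P_i(t-1)-\til P_i(t-1).
\]
Therefore $p_{ij}(t)=\zeta_{ij}P_i(t)$ (equivalently $a_{ij}(t)=\zeta_{ij}$) holds iff $\ell_{ij}(t)=\zeta_{ij}(b_i(t)+\ell_i(t))$, completing the induction. Summing the resulting relations over $j$ and using $\sum_j\zeta_{ij}<1$ (row-substochasticity) ensures the condition is consistent with $b_i(t)>0$ and $\ell_i(t)\ge 0$, which recovers the polytope $\mathcal K$ of \condref{condition:weaker_financial_environment}.

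For the ``moreover'' clause, once $A(t)\equiv\mathfrak Z$ is fixed and independent of the decision variables, the default constraint $\til P(t)\le\mathfrak Z^{T}\til P(t)+c(t)+Z(t)$ becomes linear in $(\til P(t),Z(t))$; the solvency recursion $P(t)=b(t)+\ell(t)+P(t-1)-\til P(t-1)$ is affine in $\til P(t-1)$; the objective $\sum_t\one^T\til P(t)$ and the budget/box constraints on $Z(t)$ are already linear. Substituting out the $p_{ij}(t)$ and $a_{ij}(t)$ variables (which are now constants or affine functions of past $\til P$'s) produces the explicit LP referenced in \cref{sec:linear_program}.

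The main obstacle, and the reason the condition looks nontrivial, is that $a_{ij}(t)$ depends endogenously on $\til P(t-1)$ through the carry-over term $(1-\til P_i(t-1)/P_i(t-1))$, so demanding $a_{ij}(t)$ be constant \emph{for every feasible trajectory} is \textit{a priori} stronger than a constraint on the shocks alone. The key miracle exploited in the induction above is that the carry-over term factors identically in numerator and denominator of $p_{ij}(t)/P_i(t)$ once $p_{ij}(t-1)=\zeta_{ij}P_i(t-1)$, so the dependence on $\til P(t-1)$ cancels; this is what allows the convexity-preserving constraint to be expressed purely in terms of exogenous quantities, and must be spelled out carefully to justify the ``only if'' direction for all admissible clearing sequences.
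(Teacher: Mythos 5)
Your proposal is correct and follows essentially the same route as the paper: both reduce convexity to the condition $a_{ij}(t)=\zeta_{ij}$ via the Hessian analysis of \cref{sec:convexity}, then translate this into the exogenous condition $\ell_{ij}(t)=\zeta_{ij}(b_i(t)+\ell_i(t))$ through the same algebraic cancellation of the carry-over term, and observe that freezing $A(t)\equiv\mathfrak Z$ linearizes the program. Your only departure is presentational: you make explicit the induction on $t$ (with the base case $p_{ij}(1)=\ell_{ij}(1)$) that the paper's chain of equivalences uses implicitly when it substitutes $p_{ij}(t-1)=\zeta_{ij}P_i(t-1)$, which is a welcome clarification but not a different argument.
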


\begin{proof}
    
For all $i, j \in [n]$ and $t \in [T]$ we have that  

\begin{align*}
    p_{ij}(t) & = \zeta_{ij} P_i(t) & \iff \\ 
    \ell_{ij}(t) + p_{ij} (t - 1) \left ( 1 - \frac {\til P_i(t - 1)} {P_i(t - 1)} \right ) & = \zeta_{ij} \left ( b_i(t) + \ell_i(t) + P_i(t - 1) - \til P_i (t - 1) \right ) & \iff  \\
    \ell_{ij}(t) + \zeta_{ij} P_i(t - 1) \left ( 1 - \frac {\til P_i(t - 1)} {P_i(t - 1)} \right ) & = \zeta_{ij} \left ( b_i(t) + \ell_i(t) + P_i(t - 1) - \til P_i (t - 1) \right ) & \iff \\ 
    \ell_{ij}(t) + \zeta_{ij} P_i(t - 1) - \zeta_{ij} \til P_i(t - 1) & = \zeta_{ij} \left ( b_i(t) + \ell_i(t) + P_i(t - 1) - \til P_i(t - 1) \right ) & \iff \\
    \ell_{ij}(t) & = \zeta_{ij} \left ( b_i(t) + \ell_i(t) \right ) & \iff \\ 
    \frac {\ell_{ij}(t)} {b_i(t) + \sum_{k \in [n]} \ell_{ik}(t)} & = \zeta_{ij}. 
\end{align*}

The primal is easily obtained from \cref{eq:restated_optimization} by replacing $P(t)$ with its definition and recursing. The dual is obtained by standard LP duality arguments (see e.g. \cite{boyd2004convex}). 

\end{proof}

\subsection{Linear Programming Formulation under \cref{condition:weaker_financial_environment} and \cref{theorem:sufficient_condition_for_linearity}} \label{sec:linear_program}

In the following, the $\mathsf {vec ( \cdot )}$ operator flattens a matrix to a vector (by row and then by column).

The primal and dual LPs (for a fixed sample path) that correspond to \cref{eq:restated_optimization} are (we define $h(t) = \sum_{t' \le t} (b(t) + \ell(t))$ 

\medskip

\begin{minipage}[h]{.5\textwidth}
\small
\textbf{Primal LP}
\begin{subequations} \label{eq:linear_program_primal}
\begin{align}
        \max_{\til P, Z} \quad &  \sum_{t \in [T]} \one^T \til P(t) \\
        \text{s.t.} \quad &  \sum_{t' \le t} \til P(t') \le h(t) & \quad \forall t \in [T]  \label{eq:solvency_constraint} \\
        & \zero \le \til P(t) \le \mathfrak Z^T \til P(t) + c(t) + Z(t) & \quad \forall t \in [T] \label{eq:default_constraint} \\
        & \zero \le Z(t) \le L & \quad \forall t \in [T] \label{eq:interventions_nonnegative_constraint} \\
        & \one^T Z(t) \le B & \quad \forall t \in [T] \label{eq:budget_constraint} 
\end{align}
\end{subequations}
\end{minipage}
\begin{minipage}[h]{.5\textwidth}
\small
\textbf{Dual LP}
\begin{subequations} \label{eq:linear_program_dual}
\begin{align}
    \min_{\til \lambda, \til \mu, \til \nu, \til \xi} \quad & \sum_{t \in [T]} \left [ \til \lambda^T(t) h(t) + \til \mu^T(t) c(t) + B \til \nu(t) + L^T \xi(t) \right ] \\
    \text{s.t.} \quad & \til \lambda(1:T) \ge \zero \\ & \til \mu(1:T) \ge \zero \\ &  \til \nu(1:T) \ge \zero  \\ & \til \xi(1:T) \ge \zero \\
    & \til \mu(t) \ge \mathfrak Z \til \mu(t) + \one - \sum_{t' \ge t} \til \lambda(t') \quad \forall t \in [T] \\
    & \til \xi(t) + \one \til \nu(t) \ge \til \mu(t) \quad \forall t \in [T]
\end{align}
\end{subequations}
\end{minipage}

We also consider the prefix payments $\til Q(t) = \sum_{t' \le t} \til P(t')$, prefix interventions $W(t) = \sum_{t' \le t} Z(t')$ and prefix assets $f(t) = \sum_{t' \le t} c(t')$. From the properties of the EN model we know that the optimal payments and optimal interventions can be found by optimizing \emph{any} strictly increasing function of the clearing payment vectors. For this reason we consider the objective

\begin{equation} \label{eq:q_objective}
    \sum_{t \in [T]} t \cdot ( \one^T \til P(t) ) = \sum_{t \in [T]} \one^T \til Q(t) = \one^T \mathsf{vec}(\til Q(1:T))
\end{equation}

The constraints for $\til Q(t)$ become 

\begin{subequations} \label{eq:q_constraints}
\begin{align}
    \til Q(t) & \le h(t) & \quad \forall t \in [T] \\
    \til Q(t) & \ge \zero & \quad \forall t \in [T] \\
    \til Q(t) & \le \mathfrak Z^T \til Q(t) + f(t) + W(t) & \quad \forall t \in [T] \\
    W(t) & \ge \zero & \quad \forall t \in [T] \\
    W(t) & \le t L & \quad \forall t \in [T] \\
    \one^T W(t) & \le tB & \quad \forall t \in [T]
\end{align}
\end{subequations}

The optimal values of $\til Q^*(1:T), \til W^*(1:T)$ found by solving \cref{eq:q_objective} subject to \cref{eq:q_constraints} can be used to recover the optimal $\til P^*(1:T), Z^*(1:T)$. Moreover, the optimization problem of \cref{eq:q_objective,eq:q_constraints} corresponds to solving an one-period clearing problem on the vectorized payments $\mathsf{vec}(\til Q(1:T))$ and the vectorized interventions $\mathsf{vec}(W(1:T))$ with

\begin{compactitem}
    \item Relative liability matrix $\mathrm{diag}(\mathfrak Z, \dots, \mathfrak Z)$.
    \item Asset vector $\mathsf{vec}(f(1:T))$.
    \item Liability vector $\mathsf{vec}(h(1:T))$.
    \item intervention constraint $\mathsf{vec}(W(1:T)) \le \begin{pmatrix} L \\ 2L \\ \vdots \\ TL \end{pmatrix}$.
    \item Budget constraint $\begin{pmatrix} \one^T & \zero^T & \dots & \zero^T \\ \zero^T & \one^T & \dots & \zero^T \\ \vdots & \vdots & \dots & \vdots \\ \zero^T & \zero^T & \dots & \one^T  \end{pmatrix} \mathsf{vec}(W(1:T)) \le \begin{pmatrix} B \\ 2B \\ \vdots \\ TB  \end{pmatrix}$.
\end{compactitem}

\subsection{Optimality of the Myopic (Sequential) Policy under \cref{condition:weaker_financial_environment}}

The sequential/myopic policy solves the contagion problem at each round $t$ and then supplies the solution to round $t + 1$, and so on. Below we prove that under \cref{condition:weaker_financial_environment} and \cref{theorem:sufficient_condition_for_linearity} the myopic policy is in fact the optimal one 

\begin{theorem}
    Under \cref{condition:weaker_financial_environment} the myopic policy is optimal for \cref{eq:linear_program_primal}, i.e. for all $t \in [T]$ the reward accumulated from time $t$ onwards (for the myopic policy) for every $\til P(1:t-1), Z(1:T-1)$ equals 
    
    \begin{align*}
        \max_{\til P(t:T), Z(t:T)} \quad & \sum_{t' \ge t} \one^T \til P(t') \\
        \text{s.t.} \quad & \text{\cref{eq:solvency_constraint,eq:default_constraint,eq:interventions_nonnegative_constraint,eq:budget_constraint}}
    \end{align*}
    
    Subsequently, for $t = 1$ we get that the optimal policy can be found by solving \cref{eq:linear_program_primal}. 
    
\end{theorem}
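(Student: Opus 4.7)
The plan is to proceed by backward induction on $t$. At $t = T$, the base case is immediate: with no future rounds to worry about, the myopic policy is by definition the one that maximizes $\one^T \til P(T)$ subject to the round-$T$ constraints, which is precisely the residual LP. For $t < T$, I would assume inductively that the myopic policy achieves the optimal residual value from round $t+1$ onwards for every history $(\til P(1:t), Z(1:t))$, and aim to verify the same at round $t$.

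The decisive structural fact, guaranteed by \cref{condition:weaker_financial_environment} and \cref{theorem:sufficient_condition_for_linearity}, is that $A(t) \equiv \mathfrak Z$ is constant in $t$. This decouples both the default constraint $\til P(t) \le \mathfrak Z^T \til P(t) + c(t) + Z(t)$ and the per-round intervention constraints across rounds: they involve only round-$t$ variables and exogenous data. Consequently, the sole linkage between round $t$ and rounds $t+1,\dots,T$ is the cumulative solvency family $\sum_{t' \le \tau} \til P(t') \le h(\tau)$ for $\tau \ge t+1$, which depends on past decisions only through the state variable $S_t := \sum_{t' \le t} \til P(t')$ (coordinate-wise).

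Letting $J_{t+1}(S)$ denote the optimal continuation value from round $t+1$ onwards as a function of the cumulative-payment state $S$, the inductive step reduces to showing that for every feasible $(\til P(t), Z(t))$, $\one^T \til P(t) + J_{t+1}(S_{t-1} + \til P(t)) \le \one^T \til P^{\mathrm{myo}}(t) + J_{t+1}(S_{t-1} + \til P^{\mathrm{myo}}(t))$. I would establish this via two ingredients. First, by the maximal-clearing property of the EN model (as in \cref{assumption:optimal_response}), the myopic fixed point $\til P^{\mathrm{myo}}(t)$ is the coordinate-wise maximum feasible clearing vector, and the myopic $Z(t)$ can be taken to agree with the $Z(t)$ in some optimal trajectory; thus $\til P^{\mathrm{myo}}(t) \ge \til P(t)$ coordinate-wise against any competitor with the same interventions. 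Second, for any $\Delta \ge \zero$ coordinate-wise, $J_{t+1}(S + \Delta) \ge J_{t+1}(S) - \one^T \Delta$: the cumulative solvency inequality at $\tau = T$ caps the future reward in coordinate $i$ by $h_i(T) - S_i$, so shrinking the remaining budget by $\Delta_i$ costs at most $\Delta_i$ in that coordinate. Combined, any coordinate-wise gain $\til P^{\mathrm{myo}}(t) - \til P(t) \ge \zero$ in the immediate reward dominates the corresponding loss in $J_{t+1}$.

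The main obstacle I anticipate is rigorously establishing this ``1-Lipschitz decrement'' property of $J_{t+1}(S)$ under the coordinate-wise order. One clean route is a direct exchange argument: given any optimal trajectory $(\til P^*(1:T), Z^*(1:T))$ agreeing with myopic on rounds $1,\dots,t-1$, replace $\til P^*(t)$ with $\til P^{\mathrm{myo}}(t)$ (and $Z^*(t)$ with the myopic maximizer) and absorb the induced increase in $\til P(t)$ by decreasing $\til P^*(t')$ for $t' > t$ coordinate-wise by the required amount; feasibility is preserved because default constraints are block-decoupled by constant $\mathfrak Z$ (so shrinking future $\til P$ only relaxes them), and the solvency budgets rebalance exactly due to their cumulative form. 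An alternative, more synthetic route is to invoke LP duality on the one-period reformulation over $\mathsf{vec}(\til Q(1:T))$ from \cref{sec:linear_program}: the relative liability matrix $\mathrm{diag}(\mathfrak Z,\dots,\mathfrak Z)$ is block-diagonal and all coupling constraints are cumulative, so the vectorized EN program decomposes into per-round subproblems once the shared cumulative budget is allocated, and this allocation is exactly what the myopic policy realizes greedily.
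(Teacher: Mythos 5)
Your overall strategy—backward induction on the cumulative-payment state $S_t=\sum_{t'\le t}\til P(t')$, trading the myopic policy's immediate gain against a bounded future loss—is a genuinely different route from the paper's, which never compares trajectories coordinate-wise: it observes that under \cref{condition:weaker_financial_environment} the whole problem is the linear program \cref{eq:linear_program_primal}, writes the myopic value recursion as a nest of maximizations of linear objectives over linear, separable constraints, and collapses the nested maxima into the single joint LP. Your route, however, has concrete gaps at exactly the two places where it must do real work. The exchange step is wrong as stated: a constant $\mathfrak Z$ decouples the default constraints \emph{across rounds}, not \emph{across nodes within a round}. If you shrink $\til P_j(t')$ by $\Delta_j$, node $j$'s own constraint relaxes, but the right-hand side $\sum_j\zeta_{ji}\til P_j(t')+c_i(t')+Z_i(t')$ for every creditor $i$ with $\zeta_{ji}>0$ drops by $\zeta_{ji}\Delta_j$, so those constraints tighten and the reduction cascades. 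For the same reason the proposed justification of $J_{t+1}(S+\Delta)\ge J_{t+1}(S)-\one^T\Delta$ is insufficient: the solvency-cap argument accounts only for the direct loss $\Delta_i$ in coordinate $i$ and ignores the losses propagated through $\mathfrak Z^T$, whose sum is a priori controlled only by $\one^T(I-\mathfrak Z^T)^{-1}\Delta$. (The Lipschitz claim is equivalent to monotonicity of the maximal achievable $\one^T\til Q(T)$ in the starting cumulative state, and a correct proof needs the lattice structure of the EN feasible set—e.g., that $\til Q(\tau)\vee(S+\Delta)$ stays feasible because the caps $h(\tau)$, $f(\tau)$ are nondecreasing—rather than the cap argument.)

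The second gap concerns the interventions. The coordinate-wise dominance $\til P^{\mathrm{myo}}(t)\ge\til P(t)$ holds only against competitors using the \emph{same} $Z(t)$; the assertion that the myopic $Z(t)$ ``can be taken to agree with the $Z(t)$ in some optimal trajectory'' is essentially the statement being proved, since the greedy argmax of the one-step reward need not coincide with any globally optimal intervention, and once the interventions differ the difference $\til P^{\mathrm{myo}}(t)-\til P(t)$ need not be nonnegative, which breaks the ``gain dominates loss'' comparison. To repair the argument you would need either to prove that some one-step-optimal $(Z(t),\til P(t))$ extends to a global optimum (an exchange argument on interventions, again requiring care with the budget constraint $\one^TZ(t)\le B$), or to abandon the coordinate-wise comparison altogether and argue, as the paper does, directly at the level of the merged linear program.
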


\begin{proof}

    Fix some financial environment obeying \cref{condition:weaker_financial_environment}. Then define the value function at which the environment responds myopically, i.e. in a way that maximizes its reward from round $t$ onwards, that is $V^\flat(t) =  \max_{\til p \text{ feasible clearing vector}} \max_{z \in \mathcal Z} \left \{\one^T \til p + V^\flat(t + 1) \right \}$. In our case, for $t' \ge T$ we have that $V^\flat(t') = 0$. For $t' = T$ we have that for all $\til P(1:T-1), Z(1:T-1)$ 
    
    \begin{align*}
        V^\flat(T) & = \max_{\til P(T)} \max_{Z(T)} \one^T \til P(T) \overset {\text{LP}} {=}  \max_{\til P(T), Z(T)} \one^T \til P(T) \\
        \text{s.t.} \quad & \text{\cref{eq:solvency_constraint,eq:default_constraint,eq:interventions_nonnegative_constraint,eq:budget_constraint}}
    \end{align*}
    
    Now assume that for $t' = t + 1$ and for all $\til P(1:t), Z(1:T)$ that 
    
    \begin{align*}
        V^\flat(t + 1) & = \max_{\til P(t+1:T), Z(t+1:T)} \sum_{t' \ge t + 1} \one^T \til P(t') \\
        \text{s.t.} \quad & \text{\cref{eq:solvency_constraint,eq:default_constraint,eq:interventions_nonnegative_constraint,eq:budget_constraint}}
    \end{align*}
    
    For $t' = t$ and for all $\til P(1:t-1), Z(1:t-1)$ the value function of the myopic equals
    
    \begin{align*}
        V^\flat(t) & = \max_{\til P(t)} \left \{ \max_{Z(t)} \one^T \til P(t) + V^\flat(t + 1) \right \} \\
        & \overset {\text{Inductive Hypothesis}} {=} \max_{\til P(t)} \left \{ \max_{Z(t)} \one^T \til P(t) + \max_{\til P(t+1:T), Z(t+1:T)} \sum_{t' \ge t + 1} \one^T \til P(t')  \right \} \\
        & \overset {\text{LP}} {=} \max_{\til P(t)} \max_{Z(t:T), \til P(t+1:T)} \sum_{t' \ge t} \one^T \til P(t) \\
        & \overset {\text{LP}} {=} \max_{\til P(t:T), Z(t:T)} \sum_{t' \ge t} \one^T \til P(t) \\
        \text{s.t.} \quad & \text{\cref{eq:solvency_constraint,eq:default_constraint,eq:interventions_nonnegative_constraint,eq:budget_constraint}}
    \end{align*}
    
    We have merged the maximizations because the objective functions are linear and the constraints are also linear (and separable). 
    Recursing to $t = 1$ we get that $V^\flat(1) \equiv \text{\cref{eq:linear_program_primal}}$. 

\end{proof}

\section{Data Addendum for SafeGraph} \label{sec:data_addendum}

\subsection{Network Topology}

The network is bipartite and consists of two types of nodes: POI nodes that correspond to businesses (such as restaurants, gyms, grocery stores, etc.) and CBG nodes (which represent unidentifiable groups households of people). The data period spans the period of December 2020 to April 2021 with monthly granularity. The POI nodes are constructed as follows: we start from some geographical coordinates (in our case a small rural US city center) and then we find the POIs that belong to the $k$-closest CBGs (in terms of their Haversine distance). In our experiment we have chosen $k$ to be 3. Then for each POI and each period of interaction we find out devices from which CBGs visited the specific POI by using  the \texttt{visitor\_home\_cbg} column to find the number of visitors for each POI from each CBG assuming that each device represents a different person. For each CBG $j$ we find the average household size $\bar h_j$, the average number of dependents $\bar d_j$, the average income $\bar \iota_j$ and the probability of someone being unemployed $p_j^{\mathrm{employment}}$ using data from the 2020 US Census. 

\subsection{Internal Liabilities}

For each POI-CBG interaction there are two types of edges: consumption edges and salary edges. The former type of edge is from a CBG to a POI whereas the latter type of edge is from a POI to a CBG. The weights on the edges are calculated following a two-step approach: Firstly, we use the \emph{bucketed dwelling times} to find the percentage of people that are \emph{employed} on these businesses (which we denote by $p_j^{\mathrm{worker}}(t)$), where we consider someone to be employed if they spend more than 4 hours in the POI, and we consider the rest to be consumers (i.e. the ones that spend less than 4 hours in the POI). Between a POI $i$ and a CBG $j$ whereas $n_{ij}(t)$ people have commuted from the CBG to the POI we add the two edges: An edge $(j, i)$ for the worker nodes that we estimate to be $n_{ij}^{\mathrm{workers}}(t) = \lfloor n_{ij}(t) \times p^{\mathrm{worker}}_j(t) \rfloor$ workers, and an edge $(i, j)$ for the non-workers with $n_{ij}^{\mathrm{non-workers}}(t) = n_{ij}(t) - \lfloor n_{ij}(t) \times p^{\mathrm{workers}}_j(t) \rfloor $. For each type of POI $i$ we find the average salary $\bar s_i$, and the average consumption $\bar o_i$ using data from the NAICS and the US Economic Census (Consumer Expenditure Survey), and calculate the final internal liabilities between POI $i$ and CBG $j$ to be

\begin{equation*}
    \ell_{ij}(t) = n_{ij}^{\mathrm{workers}}(t) \times \bar s_i, \; \ell_{ji}(t) = n_{ij}^{\mathrm{non-workers}}(t) \times \bar o_i
\end{equation*}

The total number of workers number of workers (resp. non-workers) for each POI $j$ and each CBG $i$ is estimated to be 

\begin{equation*}
    n_j^{\mathrm{workers}}(t) = \sum_{i \text{ is CBG}} n_{ij}^{\mathrm{workers}} (t), \; n_j^{\mathrm{non-workers}}(t) = \sum_{i \text{ is CBG}} n_{ij}^{\mathrm{non-workers}} (t)
\end{equation*}

\begin{equation*}
        n_i^{\mathrm{workers}}(t) = \sum_{j \text{ is POI}} n_{ij}^{\mathrm{workers}} (t), \; n_i^{\mathrm{non-workers}}(t) = \sum_{j \text{ is POI}} n_{ij}^{\mathrm{non-workers}} (t)
\end{equation*}

The estimated total number of households in the CBG $j$ that are related to interaction with the corresponding POIs is

\begin{equation*}
    n_j^{\mathrm{households}}(t) = \left \lceil \frac {\frac {n_j^{\mathrm{workers}}(t)}  {p_j^{\mathrm{employment}}} + n_j^{\mathrm{non-workers}}(t)} {\bar h_j} \right \rceil .
\end{equation*}

\subsection{Interventions} 

For each CBG $j$ we use the US CARES Act rule to calculate the interventions as follows 

\begin{equation*}
    L_j(t) = n_j^{\mathrm{households}(t)} \times \begin{cases}
        1200 & \bar h_j < 2, \bar \iota_j \le 75000 \\
        \left ( 1200 \times \frac {150000 - \bar \iota_j} {75000} \right ) \vee 0 & \bar h_j < 2, \bar \iota_j > 75000 \\
        2400 + 500 (\bar h_j - 2) &  \bar h_j \ge 2, \bar \iota_j \le 150000 \\
        \left ( (2400 + 500 (\bar h_j - 2) \times \frac {300000 - \bar \iota_j} {150000} \right ) \vee 0 & \bar h_j \ge 2, \iota_j > 150000  
    \end{cases}. 
\end{equation*}

For each POI $i$ the Safegraph data provides annual business loans from the Paycheck Protection Program (PPP), whereas each loan has value $\psi_i$. Also the businesses report the number of employees $n_{i}^{\mathrm{PPP-employees}}$, which we use to calculate the intervention as 

\begin{equation*}
    L_i(t) = \frac {1} {12} \times \frac {\psi_i} {n_i^{\mathrm{PPP-employees}} } \times n_{i}^{\mathrm{workers}}(t).
\end{equation*}

In our data, there are nodes that after processing have missing interventions, or interventions that equal 0. For these nodes, we proceed with the following \emph{imputation scheme:} For each POI node (resp. CBG node), we replace the missing interventions with the average intervention of the POI node (resp. CBG node) across rounds where the interventions are well defined. If a POI node (resp. CBG node) has no interventions across all rounds, we set the intervention of the node (for all rounds) to equal the average POI intervention (resp. average CBG intervention). 
    
\subsection{External Assets and Liabilities}

For each POI $i$ we use the total assets earned annually from all establishments, normalized them by month, divide by the total number of workers for the specific NAICS code that the POI belongs, and multiply $n_i^{\mathrm{workers}}(t)$. Finally, from this amount we subtract the revenue due to nodes within the network $\sum_{j \text{ is CBG}} \ell_{ji}(t)$ (i.e. inbound edges) and take the positive part in case the result is negative to deduce $c_i(t)$. $b_i(t)$ is determined in a similar way with the only change that the total weight of the outbound edges is subtracted in the end. To ensure that \cref{assumption:uniqueness} holds throughout the experiments we assert a minimum of \$100 for the external liabilities. Such a number is smaller than the order of most quantities and thus does not significantly affect the results.
    
For each CBG $j$ the process is similar where for determining $c_j(t)$ we use $\bar \iota_j$, following a similar procedure as in the POI case, but now normalized with respect to $n_j^{\mathrm{households}}(t)$. For $b_j(t)$, we assume that each household spends the national US average of $\sim \! \$ 63,000$.

\end{document}